\newcommand*{\QUANTUM}{}%
\newtheorem{theorem}{Theorem}
\newtheorem{lemma}{Lemma}
\newtheorem{assump}{Assumption}
\newtheorem{rem}{Remark}
\newcommand{\REV}[1]{{#1}}
\title{Time-dependent unbounded Hamiltonian simulation with vector norm scaling}
\author[1]{Dong An}
\author[1]{Di Fang} 
\author[1,2,3]{Lin Lin}
\affil[1]{Department of Mathematics, University of California, Berkeley,  CA 94720, USA}
\affil[2]{Computational Research Division, Lawrence Berkeley National Laboratory, Berkeley, CA 94720, USA}
\affil[3]{Challenge Institute for Quantum Computation, University of California, Berkeley, CA 94720, USA}
\begin{document}

\newcommand{\diag}{\operatorname{diag}}
\renewcommand{\Re}{\operatorname{Re}}
\renewcommand{\Im}{\operatorname{Im}}
\newcommand{\conj}[1]{\overline{#1}}
\newcommand{\Tr}{\operatorname{Tr}}

\newcommand{\I}{\mathrm{i}}

\newcommand{\mc}[1]{\mathcal{#1}}
\newcommand{\mf}[1]{\mathfrak{#1}}
\newcommand{\wt}[1]{\widetilde{#1}}

\newcommand{\abs}[1]{\left\lvert#1\right\rvert}
\newcommand{\norm}[1]{\left\lVert#1\right\rVert}

\newcommand{\ud}{\,\mathrm{d}}
\newcommand{\ad}{\operatorname{ad}}

\renewcommand{\arraystretch}{1.5}

\newcommand{\Or}{\mathcal{O}}
\newcommand{\EE}{\mathbb{E}}
\newcommand{\NN}{\mathbb{N}}
\newcommand{\RR}{\mathbb{R}}
\newcommand{\CC}{\mathbb{C}}
\newcommand{\ZZ}{\mathbb{Z}}

\maketitle

\begin{abstract}
The accuracy of quantum dynamics simulation is usually measured by the error of the unitary evolution operator in the operator norm, which in turn depends on certain norm of the Hamiltonian. For unbounded operators, after suitable discretization, the norm of the Hamiltonian can be very large, which significantly increases the simulation cost. However, the operator norm measures the worst-case error of the quantum simulation, while practical simulation concerns the error with respect to a given initial vector at hand. We demonstrate that under suitable assumptions of the Hamiltonian and the initial vector, if the error is measured in terms of the vector norm, the computational cost may not increase at all as the norm of the Hamiltonian increases using Trotter type methods.  In this sense, our result outperforms all previous error bounds in the quantum simulation literature. Our result extends that of [Jahnke, Lubich, BIT Numer. Math. 2000] to the time-dependent setting. We also clarify the existence and the importance of commutator scalings of Trotter and generalized Trotter methods for time-dependent Hamiltonian simulations.
\end{abstract}

\tableofcontents 

\section{Introduction}

Simulation of the quantum dynamics is widely viewed as one of the most important applications of a quantum computer. Let $H(t)$ be a Hamiltonian defined on the interval $[0,T]$, and $\ket{\psi_0}$ be the initial vector, then the time-dependent Hamiltonian simulation problem aims to find $\ket{\psi(T)}$, which solves the time-dependent Schr\"odinger equation
\begin{equation}
    \I\partial_t \ket{\psi(t)} = H(t)\ket{\psi(t)}, \quad \ket{\psi(0)} = \ket{\psi_0}.
     \label{eqn:td_hamsim}
\end{equation}
If $H(t)\equiv H$ is time-independent, then the solution can be expressed in closed form as $\ket{\psi(T)}=\exp(-\I TH)\ket{\psi_0}$. The task of creating such a universal quantum simulator was first conceptualized by Lloyd \cite{Lloyd1996}, and the past few years have witnessed significant progresses in the development of new quantum algorithms as well as the improvement of theoretical error bounds of existing quantum algorithms for time-independent Hamiltonian simulation \cite{BerryAhokasCleveEtAl2007,BerryChilds2012,BerryCleveGharibian2014,BerryChildsCleveEtAl2015,BerryChildsKothari2015,LowChuang2017,ChildsMaslovNamEtAl2018,LowWiebe2019,ChildsOstranderSu2019,Campbell2019,Low2019,ChildsSu2019,ChildsSuTranEtAl2020,ChenHuangKuengEtAl2020,SahinogluSomma2020}. In particular, for a $d$-sparse Hamiltonian with bounded $\norm{H}_{\max}$ (the largest element of $H$ in absolute value), the complexity of the quantum signal processing (QSP) method \cite{LowChuang2017} is $\Or\left(T d\norm{H}_{\max }+\log (1 / \epsilon) / \log \log (1 / \epsilon)\right)$, which matches complexity lower bounds in all parameters. Meanwhile the error bound of the high order Trotter-Suzuki scheme has also been significantly improved \cite{ChildsSuTranEtAl2020}, which yields near-best asymptotic complexities for simulating problems such as $k$-local Hamiltonians, and improves previous error bounds for Hamiltonians with long range interactions.

On the other hand, simulation with time-dependent Hamiltonians appears ubiquitously, such as in the context of quantum controls \cite{ZhuRabitz1998,MadaySalomonTurinici2006,NielsenDowlingGuEtAl2006,DongPetersen2010,PangJordan2017,MagannGraceRabitzEtAl2020}, non-adiabatic quantum dynamics \cite{RungeGross1984,CurchodMartinez2018}, and adiabatic quantum computation \cite{FarhiGoldstoneGutmannEtAl2000,RolandCerf2002,AlbashLidar2018}, to name a few. Compared to the time-independent setting, there are significantly fewer quantum algorithms available \cite{HuyghebaertDeRaedt1990,WiebeBerryHoyerEtAl2010,PoulinQarrySommaEtAl2011,BerryChildsCleveEtAl2014,BerryChildsCleveEtAl2015,WeckerHastingsWiebeEtAl2015,LowWiebe2019,BerryChildsSuEtAl2020}. The time-dependent setting has so far ruled out the usage of quantum signal processing \cite{LowChuang2017} and quantum singular value transformation (QSVT) \cite{GilyenSuLowEtAl2019} types of techniques. To the extent of our knowledge, the best results available are given by the truncated Dyson series simulation \cite{BerryChildsCleveEtAl2015,LowWiebe2019}, and the rescaled Dyson series method \cite{BerryChildsSuEtAl2020}, which scales with respect to $\|H\|_{\max,\infty} := \sup_{t\in[0,T]} \|H(t)\|_{\max}$, and $\|H\|_{\max,1}:=\int_{0}^{T} \norm{H(t)}_{\max} dt$, respectively.

In this paper, we are concerned with the simulation of a time-dependent and unbounded Hamiltonian $H(t)$, which naturally includes the simulation of a time-independent Hamiltonian $H(t)\equiv H$ as a special case. More precisely, we assume that there is a family of Hamiltonians $H^{(n)}(t)$ such that as $n\to \infty$, the norm of $H$ (e.g. the max of the operator norm or the $L^1$ norm) also increases towards infinity. 

For concreteness, we will consider the bilinear quantum control Hamiltonian of the following form
\begin{equation}
    H^{(n)}(t) = f_1(t)H^{(n)}_1 + f_2(t)H^{(n)}_2.
    \label{eqn:control_ham}
\end{equation}
Here $H^{(n)}_1$ and $H^{(n)}_2$ are time-independent Hamiltonians, and $f_1$ and $f_2$ are two bounded, smooth scalar functions on a time interval $[0,T]$. Without loss of generality we assume that $\lim_{n\to \infty}\norm{H_1^{(n)}}=\infty$, while $\lim_{n\to \infty}\norm{H_2^{(n)}}<\infty$, i.e. $H_1^{(n)}$ approaches an unbounded operator, while the limit of $H_2^{(n)}$ is a bounded operator. We also assume that $\exp\left(-\I H_1^{(n)}\right)$ and $\exp\left(-\I H_2^{(n)}\right)$ can be efficiently simulated. More specifically, if $n$ also denotes the dimension of the Hamiltonian, we assume that the cost of the time-independent simulations depends at most poly-logarithmically in terms of $n$ and the error $\epsilon$.  Such an assumption is standard for $H_2^{(n)}$ because $H_2^{(n)}$ has spectral norm asymptotically independent of $n$ thus can be efficiently simulated, e.g. via the QSP technique~\cite{LowChuang2017}. However, the assumption on the effectiveness of simulating $H_1^{(n)}$ is very strong especially when $\|H_1^{(n)}\|$ grows polynomially in terms of $n$, and the no-fast-forwarding theorem~\cite{BerryAhokasCleveEtAl2007,BerryChildsKothari2015} requires roughly $\Omega(\|H_1^{(n)}\|)$ queries for generic quantum algorithms to simulate $\exp\left(-\I H_1^{(n)}\right)$. Nevertheless, for a subset of Hamiltonians with special structures, such a \textit{time-independent} simulation can indeed be fast-forwarded and the query complexity is still poly-logarithmic of $n$. Typical examples include 1-sparse Hamiltonians~\cite{childs2003exponential,ahokas2004improved,LowWiebe2019} and thus unitarily diagonalized Hamiltonians where the diagonalization procedure can be efficiently implemented. We will show later the $H_1^{(n)}$ of interest in this paper can also be fast-forwarded. The availability of the fast-forwarded time-independent Hamiltonian simulation allows us to measure the cost directly in terms of the number of Trotter steps. 

When the context is clear, we will drop the superscript $n$ and assume instead that $\norm{H_1}$ is sufficiently large. In particular, we have $\norm{H_1}\gg \norm{H_2}$.  The form of \cref{eqn:control_ham} allows us to efficiently evaluate terms of the form $\int_{t_1}^{t_2} H(t) dt = \left(\int_{t_1}^{t_2} f_1(t) dt\right) H_1 + \left(\int_{t_1}^{t_2} f_2(t) dt\right) H_2$, where the coefficients in the parentheses can be precomputed on classical computers when $f_1(t),f_2(t)$ are available.  

As an example, consider the following Schr\"odinger equation with a time-dependent effective mass $M_{\text{eff}}(t)$ (see e.g. \cite{DantasPedrosaEtAl1992,PedrosaGuedes1997,Pedrosa1997,JiKimEtAl1995,Feng2001,SchulzeHalberg2005}) in a domain $D$ with proper boundary conditions as
\begin{equation}\label{eqn:schrodinger_tdmass}
    H(t) = -\frac{1}{2M_{\text{eff}}(t)} \Delta + \frac{1}{2}M_{\text{eff}}(t) \omega^2(t) V(x), \quad x\in D.
\end{equation}
Here $\omega(t)$ is a frequency parameter.
Then we set $f_1(t)=1/(2M_{\text{eff}}(t)),f_2(t)=M_{\text{eff}}(t)\omega^2(t)/2$. When $V(x)\equiv x^2$ the system is a quantum harmonic oscillator with time-dependent effective mass. In general we assume the potential has suitable regularity conditions and is bounded on $D$. After proper spatial discretization using $n$ degrees of freedom, $H_1^{(n)}$ is the discretized negative Laplacian operator  $-\Delta$ which is unbounded, and $H_2^{(n)}$ is the discretized diagonal potential $V(x)$ which is bounded. 
We notice that the simulation of $H_1^{(n)}$ can be fast-forwarded since it can be diagonalized under the quantum Fourier transform procedure~\cite{NielsenChuang2000}.
In order to demonstrate the behavior of the Trotter formulae for unbounded operators, we require $n$ to grow polynomially with respect to $\epsilon^{-1}$, where $\epsilon$ is the relative 2-norm error of the solution. This is the case, for instance, when the potential $V(x)$ is of limited regularity. Throughout the paper we only require $V(x)$ to be a $C^4$ function on the domain $D$.\footnote{Here the $C^4$ regularity is a technical assumption to bound the norm of the nested commutators, which will be detailed in \cref{sec:harmonic}.} Again for concreteness of discussion about the computational cost, unless otherwise specified, we will assume the system is one-dimensional, $D=[0,1]$ with the periodic boundary condition, and use the second order finite difference method with $n$ equidistant nodes for spatial discretization.

To the extent of our knowledge, all previous results in the quantum simulation literature (for both time-independent and time-dependent Hamiltonians) measure the error of the evolution operator $\norm{\wt{U}(T)-U(T)}$, where $U(T)=\exp_{\mc{T}}\left(-\I\int_{0}^T H(t) dt\right)$ is the exact evolution operator expressed in terms of a time-ordered matrix exponential, and $\wt{U}(T)$ is an approximate evolution operator obtained via the numerical scheme. We then directly obtain the vector norm error $\norm{\ket{\wt{\psi}(T)}-\ket{\psi(T)}}\le \norm{\wt{U}(T)-U(T)}$. 
However, since $\norm{\wt{U}(T)-U(T)}$ typically depends polynomially on the operator norm $\|H_1\|$, as $\norm{H_1}$ increases, if the computational cost does not increase accordingly, then all error bounds of the operator norm $\norm{\wt{U}(T)-U(T)}$ would increase to $\Or(1)$, 
with the exception of the interaction picture method for time-independent Hamiltonian simulations~\cite{LowWiebe2019}.\footnote{In the context of time-independent simulation~\cite{ChildsSuTranEtAl2020}, the error of Trotter methods does not scale directly with respect to the operator norm $\norm{H_1}$, but with respect to the norm of the (high-order) commutators  $[H_1,H_2],[H_1,[H_1,H_2]],[H_2,[H_2,H_1]]$ and so on. However, this does not change our conclusion here. In principle, the interaction picture method can also be generalized to efficiently simulate time-dependent Hamiltonians. However, its practical performance has not been well understood. } 
While the operator norm error provides an upper bound of the error given \textit{any} initial vector, for a \textit{particular} simulation instance, it is the vector norm error $\norm{\ket{\wt{\psi}(T)}-\ket{\psi(T)}}$ that matters. It turns out that for certain unbounded operators and initial vectors, the vector norm bound can be significantly improved. The key reason is that the magnitude of terms such as $\norm{H_1 \ket{\psi}},\norm{[H_1,H_2]\ket{\psi}}$ can be much smaller than the corresponding operator norm estimates.
In fact the importance of the vector norm estimates has long been recognized in the numerical analysis literature, and the vector norm error bounds have been established for time-independent Hamiltonian simulation using second and higher order Trotter methods of the form $H=-\Delta+V(x)$~\cite{JahnkeLubich2000,Thalhammer2008,DescombesThalhammer2010}, and for time-dependent Hamiltonian simulation using Magnus integrators of the form $H = -\Delta + V(t,x)$~\cite{HochbruckLubich2003}.  
Under suitable discretization and choice of the initial vector, the vector norm error $\norm{\ket{\wt{\psi}(T)}-\ket{\psi(T)}}$ obtained by the standard Trotter method remains small, even as $\norm{H}\to \infty$ and the operator norm $\norm{\wt{U}(T)-U(T)}$ becomes $\Or(1)$.

\paragraph{Contribution:} 

The \textit{first contribution} of this work is to extend the vector norm estimate \cite{JahnkeLubich2000} to time-dependent unbounded Hamiltonian simulations.  For concreteness we focus on the standard first and second-order Trotter methods, as well as a class of generalized Trotter methods proposed in \cite{HuyghebaertDeRaedt1990}, which will be introduced in \cref{sec:trotter_error_rep}. Our main result for a given control Hamiltonian \cref{eqn:control_ham} is \cref{thm:vector_norm_bound}. It states that under suitable assumptions, the vector norm error obtained from both standard and generalized Trotter methods depends mainly on $\sup_{t\in[0,T]}\|H_1\ket{\psi(t)}\|$, which can be significantly smaller than $\norm{H_1}$.\footnote{\cref{thm:vector_norm_bound} shows that the number of Trotter steps may not scale with respect to $\|H_1\|$. The mechanism of the improvement is very different from that of the interaction picture approach, where the number of the time steps is still linear in $\|H_1\|$. }

In order to simulate the Hamiltonian of the form \cref{eqn:schrodinger_tdmass}, we take both the spatial and temporal discretization into account, and our complexity estimates are given in \cref{thm:main_harmonic}. Our result compared to existing results are given in \cref{tab:trotter_comparison}, where the complexity for time-independent simulations are obtained by treating $M_{\text{eff}}(t),\omega(t)$ as constants. In particular, the vector norm is asymptotically independent of the spatial discretization parameter $n$, and complexity in terms of the error matches that of the time-independent Hamiltonian simulation obtained by \cite{JahnkeLubich2000}. 
Under the same second order spatial discretization, our complexity estimate for second order Trotter formulae outperforms state-of-the-art error bounds using high order Trotter and post-Trotter schemes~\cite{BerryChildsCleveEtAl2015,LowWiebe2019,BerryChildsSuEtAl2020} in terms of the desired level of accuracy, due to their dependence on the spectral norm of $H_1$ and thus on $n$.


The effectiveness of the vector norm bound depends on the initial vector $\ket{\psi_0}$. We remark that recently \cite{SahinogluSomma2020} establishes improved error estimates of low-order Trotter methods for time-independent Hamiltonian simulation, when the initial vector is constrained to be within a low energy subspace. Another recent work~\cite{SuHuangCampbell2020} obtains an improved complexity estimate for simulating a system with $\eta$ interacting electrons using time-independent Trotter formula, by considering the operator norm constrained on this $\eta$-electron sub-manifold. Our vector norm estimate provides a complementary perspective in understanding why such improved estimates are possible. When $\sup_{t\in[0,T]}\|H_1\ket{\psi(t)}\|$ is indeed comparable to $\norm{H_1}$, the operator norm bound still serves as a good indicator of the error. 

Given the improved error commutator scaling estimates for time-independent simulations~\cite{ChildsSuTranEtAl2020}, it is natural to ask whether the commutator scaling of the operator norm still holds for time-dependent simulations. The \textit{second contribution} of this paper is to reveal that for time-dependent simulations, the error of standard Trotter method \textit{does not} exhibit commutator scalings, while the commutator scaling holds for the generalized Trotter method (\cref{thm:operator_norm_bound}). 
Therefore in the context of time-dependent simulations, the use of the generalized Trotter method could reduce the simulation cost. 
Our proof of the operator norm error bounds mainly follow the procedure proposed in~\cite{ChildsSuTranEtAl2020}, and our results generalize the first and second order time-independent results in~\cite{ChildsSuTranEtAl2020} in the sense that, when the scalar functions $f_1$ and $f_2$ are constant functions, both time-dependent standard Trotter formula and time-dependent generalized Trotter formula degenerate to the same time-independent Trotter formula, and the corresponding operator norm error bound is of commutator scaling.

Yet another twist comes when we ask the question: when $H_1$ is unbounded, is it clear that the norm of the commutators $\norm{[H_1,H_2]},\norm{[H_1,[H_1,H_2]]},\norm{[H_2,[H_2,H_1]]}$ must be smaller than $\norm{H_1}$? It turns out that for the Hamiltonian \cref{eqn:schrodinger_tdmass}, we may directly analyze that 
$\norm{[H_1,H_2]},\norm{[H_2,[H_2,H_1]]}\in \Or(n)$, while $\norm{[H_1,[H_1,H_2]]},\norm{H_1}\in \Or(n^2)$ (see \cref{sec:harmonic}). Therefore the first-order generalized Trotter method outperforms the first-order standard Trotter method, but the asymptotic efficiency of the second-order generalized and standard Trotter methods are the same (\cref{lem:harmonic_time_error}). \cref{tab:trotter_summary} summarizes the performance of Trotter and generalized Trotter methods. Though both second-order schemes share the same asymptotic scaling, the generalized Trotter formula may still be a better choice in practice due to smaller preconstants, which is observed numerically. Moreover, the $p$-th generalized Trotter scheme depends only on the $(p-1)$-th derivatives of the control functions, while the $p$-th standard Trotter method on its $p$-th derivative. Therefore when the control functions have high frequency or limited regularity, the generalized Trotter scheme may significantly outperform the standard one. Such an advantage under first-order schemes has been demonstrated in~\cite{HuyghebaertDeRaedt1990} as well. 

All results above are confirmed by numerical experiments for the model \cref{eqn:schrodinger_tdmass} in \cref{sec:numer}, which verifies the sharpness of our  estimates. 

\begin{table}[]
    \centering
    \begin{tabular}{p{3cm}|p{5.5cm}|p{2.4cm}|p{2.4cm}}\hline\hline
         & Work/Method & Scaling w.  spatial discretization & Overall query complexity \\\hline
       \multirow{2}{8em}{Time-independent second order Trotter} & Childs \textit{et al}.~\cite{ChildsSuTranEtAl2020}  & $\Or(n)$ & $\Or(\epsilon^{-1})$ \\
       \cline{2-4} & Jahnke \textit{et al}.~\cite{JahnkeLubich2000} & $\Or(1)$ & $\Or(\epsilon^{-0.5})$ \\\hline
       \multirow{4}{8em}{Time-independent higher order methods} & $p$-th order Trotter~\cite{ChildsSuTranEtAl2020} & $\Or(n^{2-2/p})$ & $\Or(\epsilon^{-1})$ \\
       \cline{2-4} & Truncated Taylor series~\cite{BerryChildsCleveEtAl2014,KivlichanWiebeBabbushEtAl2017} & $\widetilde{\Or}(n^2)$ & $\widetilde{\Or}(\epsilon^{-1})$ \\
       \cline{2-4} & Quantum signal processing~\cite{LowChuang2017} & ${\Or}(n^2)$ & ${\Or}(\epsilon^{-1})$ \\
       \cline{2-4} & Interaction picture~\cite{LowWiebe2019} & $\Or(\log(n))$ & $\Or(\operatorname{polylog}(1/\epsilon))$ \\\hline
       \multirow{3}{8em}{Time-dependent second order Trotter} & Huyghebaert \textit{et al}.~\cite{HuyghebaertDeRaedt1990} & $\Or(n)$ & $\Or(\epsilon^{-1})$ \\
       \cline{2-4} & Wiebe \textit{et al}.~\cite{WiebeBerryHoyerEtAl2010} & $\Or(n^3)$ & $\Or(\epsilon^{-2})$  \\
       \cline{2-4} & Wecker \textit{et al}.~\cite{WeckerHastingsWiebeEtAl2015} & $\Or(n)$ & $\Or(\epsilon^{-1})$  \\
       \cline{2-4} & This work & $\Or(1)$ & $\Or(\epsilon^{-0.5})$ \\\hline
       \multirow{2}{8em}{Time-dependent higher order methods} & $p$-th order Trotter~\cite{WiebeBerryHoyerEtAl2010} & $\Or(n^{2+2/p})$ & $\Or(\epsilon^{-1-2/p})$ \\
       \cline{2-4} & Truncated Dyson series~\cite{BerryChildsCleveEtAl2015,LowWiebe2019} & $\widetilde{\Or}(n^2)$ & $\widetilde{\Or}(\epsilon^{-1})$ \\
       \cline{2-4} & Rescaled Dyson series~\cite{BerryChildsSuEtAl2020} & $\widetilde{\Or}(n^2)$ & $\widetilde{\Or}(\epsilon^{-1})$ \\\hline\hline
    \end{tabular}
    \caption{Comparison of complexity estimates for simulating the model  \cref{eqn:schrodinger_tdmass} in one-dimension using second order Trotter method or higher order Trotter or post-Trotter method, with $C^4$ potential function $V(x)$, and time-independent mass and frequency (top 2) or time-dependent mass and frequency (bottom 2). 
    For all the methods, we use a second order finite difference discretization with $n$ degrees of freedom. 
    The third column summarizes the scaling of the cost with respect to $n$ in order to reach constant target accuracy, and the fourth column summarizes the overall query complexity in order to achieve a desired level of relative 2-norm error $\epsilon$. Since we assume the efficiency of time-independent simulation for both $H_1$ and $H_2$, the query complexity is measured by the number of required Trotter steps for Trotter-type methods, or the query complexity under standard query model for post-Trotter methods. The simulation time $T$ is $\Or(1)$. `This work' refers to the vector norm error bound using the second order standard or generalized Trotter formula. Throughout the paper $f=\wt{\Or}(g)$ if $f=\Or(g\operatorname{polylog}(g))$. See \cref{app:derivation_table} for details of the derivation of the scalings. }
    \label{tab:trotter_comparison}
\end{table}

\begin{table}[]
    \centering
    \begin{tabular}{p{2cm}|p{4.5cm}|p{2.7cm}|p{2.7cm}}\hline\hline
         & Method \& Error type & Scaling w. spatial discretization & Overall number of Trotter steps\\\hline
       \multirow{3}{8em}{First-order Trotter} & standard, operator norm & $\Or(n^2)$ & $\Or(\epsilon^{-2})$ \\
       \cline{2-4} & generalized, operator norm & $\Or(n)$ & $\Or(\epsilon^{-1.5})$ \\
       \cline{2-4} & standard, vector norm & $\Or(1)$ & $\Or(\epsilon^{-1})$ \\
       \cline{2-4} & generalized, vector norm & $\Or(1)$ & $\Or(\epsilon^{-1})$ \\\hline
       \multirow{3}{8em}{Second-order Trotter} & standard, operator norm & $\Or(n)$ & $\Or(\epsilon^{-1})$ \\
       \cline{2-4} & generalized, operator norm & $\Or(n)$ & $\Or(\epsilon^{-1})$ \\
       \cline{2-4} & standard, vector norm & $\Or(1)$ & $\Or(\epsilon^{-0.5})$ \\
       \cline{2-4} & generalized, vector norm & $\Or(1)$ & $\Or(\epsilon^{-0.5})$ \\\hline\hline
    \end{tabular}
    \caption{Summary of results for first and second order Trotter formulae applied to simulating the model \cref{eqn:schrodinger_tdmass} in one-dimension with time-dependent effective mass and frequency. For all the methods, we use a second order finite difference discretization with $n$ degrees of freedom.  
    The third column summarizes the scaling of the cost with respect to $n$ in order to reach constant target accuracy (\cref{lem:harmonic_time_error}), and the fourth column summarizes the overall number of required Trotter steps to achieve a desired level of relative 2-norm error $\epsilon$ estimated from error bounds in different norms (\cref{thm:main_harmonic}). The simulation time $T$ is $\Or(1)$. }
    \label{tab:trotter_summary}
\end{table}

\section{Preliminaries}

In this section we first introduce several notations and preliminary lemmas used in this paper. 
Then we briefly sketch the main ideas for proving the main theorems. 

\subsection{Notations} 

We refer to a (possibly unnormalized) vector as $\vec{\psi}, \vec{u}$ or $\vec{v}$ depending on the context, and use $\ket{\psi}$ to denote the corresponding quantum state (\emph{i.e.} normalized vector under vector 2-norm). 
\REV{We define two vector norms for a vector $\vec{\psi} = (\psi_0,\cdots,\psi_{n-1})$, namely the standard 2-norm
\begin{equation*}
    \|\vec{\psi}\| = \sqrt{\sum_{k=0}^{n-1}|\psi_k|^2}, 
\end{equation*}
and the rescaled 2-norm 
\begin{equation*}
    \|\vec{\psi}\|_{\star} = \frac{1}{\sqrt{n}}\|\vec{\psi}\|. 
\end{equation*} 
The rescaled 2-norm is directly motivated by the discretization of the continuous $L^2$ norm~\cite{LeVeque2007, Thomas1995}. 
Specifically, for a real-space function $u(x)$ discretized in the real space using $n$ equidistant nodes, we apply the trapezoidal rule and obtain
\begin{equation}
        \int_0^1 |u(x)|^2 dx  \approx \sum_{k=0}^{n-1} \left(|u(k/n)|^2\frac{1}{n}\right)= \frac{1}{n}\|(u(k/n))_{k=0}^{n-1}\|^2 = \|(u(k/n))_{k=0}^{n-1}\|_{\star}^2.
\end{equation}}
Since the $\|\cdot\|_{\star}$ simply rescales the \REV{standard} vector 2-norm, the estimates that we will derive for 2-norm also hold for this rescaled 2-norm. 
Furthermore, the corresponding matrix norm induced by the rescaled 2-norm is still the \REV{standard} matrix 2-norm without any rescaling factor, as 
$$\|A\| = \sup_{\|\vec{u}\| \neq 0} \|A\vec{u}\|/\|\vec{u}\| = \sup_{\|\vec{u}\|_{\star} \neq 0} \|A\vec{u}\|_{\star}/\|\vec{u}\|_{\star}.$$ 
We remark that it is equivalent to use either 2-norm or rescaled 2-norm if we wish to bound the relative error of the numerical solutions. 

For two matrices $A,B$, define the adjoint mapping $\ad_A$  as
\begin{equation}
    \ad_A(B) = [A,B] = AB-BA,
\end{equation}
and then the conjugation of matrix exponentials of the form $\exp(A)B\exp(-A)$ can be simply expressed as  
\begin{equation}
    \exp(\ad_A)B = \exp(A)B\exp(-A).
\end{equation}
The following conjugation of matrix exponentials will be commonly used for a scale-valued function $f$ and matrices $A,B$
\begin{equation}
    \exp\left(\ad_{\I\int_{t_1}^{t_2} f(s)ds A}\right)B = \exp\left(\I\int_{t_1}^{t_2} f(s)ds A\right)B\exp\left(-\I\int_{t_1}^{t_2} f(s)ds A\right). 
\end{equation}
\REV{For a scalar-valued continuous function $f(t)$ defined on time domain $t\in[0,T]$, we use $\|f\|_{\infty}$ to denote the supremum of the function in this time interval, \emph{i.e.} 
$$\|f\|_{\infty} = \sup_{t\in[0,T]}|f(t)|.$$ }

\subsection{Elementary lemmas}

We review two elementary lemmas to be used in the proof of the paper. 
Proofs of the results can be found in, \emph{e.g. }~\cite{HairerNorsettWanner1987,Knapp2005}. 

\begin{lemma}[Taylor's theorem]\label{lem:Taylor}
    For any $k$-th order continuously differentiable function $f$ (scale-valued or matrix-valued) defined on an interval $[a,t]$, we have 
    \begin{equation}
        f(t) = \sum_{j=0}^{k-1} \frac{f^{(j)}(a)}{j!}(t-a)^j + \int_a^t \frac{f^{(k)}(s)}{(k-1)!} (t-s)^{k-1}ds. 
    \end{equation}
\end{lemma}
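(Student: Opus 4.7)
The plan is to prove the identity by induction on $k$, using the fundamental theorem of calculus as the base case and integration by parts to advance the induction. Since the statement is classical and the author references standard textbooks, the goal is simply to lay out the short self-contained argument.

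For the base case $k=1$, the identity reduces to
\begin{equation*}
    f(t) = f(a) + \int_a^t f'(s)\,ds,
\end{equation*}
which is the fundamental theorem of calculus (valid for both scalar-valued and matrix-valued continuously differentiable $f$, by applying the scalar version componentwise).

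For the inductive step, assume the identity holds at order $k$; I would then apply integration by parts to the remainder term, choosing $u = f^{(k)}(s)$ and $dv = \tfrac{(t-s)^{k-1}}{(k-1)!}\,ds$, so that $du = f^{(k+1)}(s)\,ds$ and $v = -\tfrac{(t-s)^k}{k!}$. This yields
\begin{equation*}
    \int_a^t \frac{f^{(k)}(s)}{(k-1)!}(t-s)^{k-1}\,ds = \frac{f^{(k)}(a)}{k!}(t-a)^k + \int_a^t \frac{f^{(k+1)}(s)}{k!}(t-s)^k\,ds,
\end{equation*}
where the boundary term at $s=t$ vanishes because $(t-s)^k|_{s=t}=0$. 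Substituting back into the order-$k$ formula gives exactly the order-$(k+1)$ expansion, completing the induction.

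There is no real obstacle here: the only mild subtlety is ensuring the argument is valid for matrix-valued $f$, but since integration by parts and the fundamental theorem of calculus both extend entrywise to matrix-valued maps with continuous derivatives, nothing changes. In the actual writeup I expect the author simply cites \cite{HairerNorsettWanner1987,Knapp2005} rather than reproducing the induction, since the result is entirely standard.
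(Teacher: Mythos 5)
Your proof is correct: the base case is the fundamental theorem of calculus, the integration-by-parts step is computed correctly (the boundary term at $s=t$ vanishes and the one at $s=a$ produces the new Taylor coefficient), and the componentwise extension to matrix-valued $f$ is the right remark. The paper does not actually prove this lemma — it only cites \cite{HairerNorsettWanner1987,Knapp2005}, exactly as you anticipated — and your induction argument is the standard one found in those references, so there is nothing to reconcile.
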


\begin{lemma}[Variation of parameters formula]\label{lem:VoP}
    Assume $U(t,s)$ solves the differential equation    
    \begin{equation}
        \partial_t U(t,s) =  H(t) U(t,s), \quad U(s, s) = I.
    \end{equation}
    Then 
    \begin{enumerate}
        \item For any matrix-valued continuous function $R(t)$, the solution of the differential equation
    \begin{equation}
        \partial_t \widetilde{U}(t,0) =  H(t) \widetilde{U}(t,0) + R(t), \quad \widetilde{U}(0, 0) = I
    \end{equation}
    can be represented as 
    \begin{equation}
        \widetilde{U}(t,0) = U(t,0) + \int_0^t U(t,s)R(s) ds. 
    \end{equation}
    \item For any vector-valued continuous function $\vec{r}(t)$, the solution of the differential equation
    \begin{equation}
        \partial_t \vec{\widetilde{u}}(t) =  H(t) \vec{\widetilde{u}}(t) + \vec{r}(t), \quad \vec{\widetilde{u}}(0) = \vec{u}_0
    \end{equation}
    can be represented as 
    \begin{equation}
        \vec{\widetilde{u}}(t) = U(t,0)\vec{u}_0 + \int_0^t U(t,s)\vec{r}(s)ds. 
    \end{equation}
    \end{enumerate}
\end{lemma}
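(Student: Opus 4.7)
The plan is to verify both formulas directly by substitution and then invoke uniqueness of solutions to linear initial-value problems. The key structural ingredient I would establish first is the cocycle property $U(t,\tau)\,U(\tau,s) = U(t,s)$ for all admissible triples $s \le \tau \le t$. This follows from the observation that both sides, regarded as functions of $t$ with $s,\tau$ fixed, solve the same Cauchy problem $\partial_t X = H(t) X$, $X(\tau) = U(\tau,s)$, and hence coincide by uniqueness. An immediate consequence is the factorization $U(t,s) = U(t,0)\,U(s,0)^{-1}$, where the inverse $U(s,0)^{-1} = U(0,s)$ exists because $U(s,0)$ satisfies a homogeneous linear ODE.

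For part 1, I would set $V(t) := U(t,0) + \int_0^t U(t,s)\,R(s)\,ds$ and use the factorization above to rewrite
$$V(t) = U(t,0)\left(I + \int_0^t U(s,0)^{-1} R(s)\,ds\right),$$
which isolates the $t$-dependence in the prefactor. Differentiating via the product rule and the fundamental theorem of calculus yields $\partial_t V(t) = H(t)\,V(t) + U(t,0)\,U(t,0)^{-1} R(t) = H(t)\,V(t) + R(t)$, and $V(0) = U(0,0) = I$. Since $\wt{U}(\cdot,0)$ satisfies the same inhomogeneous Cauchy problem, uniqueness for linear ODEs (for instance, a Gronwall estimate on $V - \wt{U}$) gives $\wt{U}(t,0) = V(t)$. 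Part 2 is entirely parallel: defining $\vec{v}(t) := U(t,0)\vec{u}_0 + \int_0^t U(t,s)\vec{r}(s)\,ds$, the same computation shows $\partial_t \vec{v}(t) = H(t)\,\vec{v}(t) + \vec{r}(t)$ with $\vec{v}(0) = \vec{u}_0$, so $\vec{\wt{u}} = \vec{v}$. Alternatively, part 2 can be deduced from part 1 by applying $\wt{U}(t,0) - U(t,0)$ to $\vec{u}_0$ with $R(s) := \vec{r}(s)\,e_0^\top$ suitably interpreted, but the direct argument is cleaner.

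The main obstacle is not mathematical but a matter of bookkeeping: justifying differentiation under the integral sign and the invertibility of $U(s,0)$. Continuity of $H$, $R$, and $\vec{r}$ (the hypotheses of the lemma) makes every integrand jointly continuous on the relevant compact set, so Leibniz's rule applies without issue. An alternative route that avoids the factorization is to apply Leibniz's rule directly to $\int_0^t U(t,s) R(s)\,ds$: using $\partial_t U(t,s) = H(t) U(t,s)$ inside the integral and evaluating the boundary term via $U(t,t) = I$ yields the same ODE. Either way the entire proof reduces to a calculus verification plus the uniqueness theorem for linear systems, which is why the statement is presented as an elementary lemma to be used as a black box later.
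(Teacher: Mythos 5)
The paper does not prove this lemma itself; it cites standard references (\cite{HairerNorsettWanner1987,Knapp2005}) and uses it as a black box. Your verification --- cocycle property, factorization $U(t,s)=U(t,0)U(s,0)^{-1}$, differentiation, and uniqueness for linear initial-value problems --- is the standard textbook argument and is correct, so there is nothing to reconcile with the paper's (absent) proof.
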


\subsection{Main ideas}\label{sec:main_idea}

Here we discuss the main ideas for our operator and vector norm error bounds, and they are applicable to both standard and generalized Trotter formulae to be introduced in \cref{sec:trotter_methods}. 
\REV{For simplicity, we only discuss first order formulae here, and the ideas for second order formulae are similar. }
First, in \cref{sec:trotter_error_rep}, we derive the error representations between the exact evolution operator $U(h,0)$ and the Trotterized evolution operator $U_{s}(h,0)$ or $U_{g}(h,0)$, by establishing the differential equations that these unitary operators satisfy and by using variation of parameters. 
Such error representations are exact. 
Furthermore, although full error representations can be technically complicated, they are simply linear combinations of integrals with integrand of the form 
\begin{equation}\label{eqn:trotter_error_rep_form}
    \left(\prod_{j=1}^J g_j \right)\left[\prod_{k=1}^{K}\exp\left(\I h \xi_k H_{l_k}\right) \right]  A  \left[\prod_{k'=1}^{K'}\exp\left(\I h \xi'_{k'} H_{l'_{k'}}\right) \right].
\end{equation}
Here functions $g_j$'s can be $f_1$, $f_2$, or their derivatives; 
$H_{l_k}$ and $H_{l'_{k'}}$ are either $H_1$ or $H_2$; and 
$\xi_k$ and $\xi'_{k'}$ are some bounded real numbers. The matrix
$A$ is in the set 
\REV{$\{H_1,H_2,[H_1,H_2]\}$} for standard Trotter formula and 
\REV{can only be $[H_1, H_2]$} for generalized Trotter formula. 
Therefore it suffices to focus on each term in the form of \cref{eqn:trotter_error_rep_form} to obtain error bounds. 

The operator norm error bounds directly follow the error representations. 
Under the assumption that $H_1$ and $H_2$ are bounded operators, we can simply bound all the unitaries by $1$ and the \REV{(local)} operator norm error bounds become $\alpha h^2$ where $\alpha$ can be expressed in terms of \REV{$\|H_1\|,\|H_2\|$ and $\|[H_1,H_2]\|$} for the standard Trotter formula, and of \REV{$\|[H_1,H_2]\|$} for the generalized Trotter formula, respectively.
Notice that the local errors for both formulae are \REV{$\Or(h^2)$} for \REV{first} order schemes, which agrees with the order condition. 
Furthermore, the preconstant for the generalized Trotter formula only consists of commutators, while the preconstant for standard Trotter formula still includes norms of $H_1$ and $H_2$ themselves. 

\REV{Next we focus on the situation when $\|H_1\|$ is very large, and we still would like to obtain a well approximated quantum state of the exact wavefunction. 
In this case, the operator norm error bounds do not offer useful performance guarantees. 
To obtain a vector norm error bound, the starting point of our approach is still the exact error representation. 
Notice that the error between the exact state $\ket{\psi}$ and the approximate state $\ket{\wt{\psi}}$ obtained by Trotter formulae can be expressed as $\norm{\ket{\wt{\psi}(h)}-\ket{\psi(h)}}\le \norm{\left(\wt{U}(h,0)-U(h,0)\right)\ket{\psi(0)}}$. 
Therefore the vector norm error bounds should be a linear combination of the terms of the form 
\begin{equation}\label{eqn:trotter_error_rep_form_vec}
 \left\|\left[\prod_{k=1}^{K}\exp\left(\I h \xi_k H_{l_k}\right) \right]  A  \left[\prod_{k'=1}^{K'}\exp\left(\I h \xi'_{k'} H_{l'_{k'}}\right) \right]\vec{v}\right\| = \left\|A  \left[\prod_{k'=1}^{K'}\exp\left(\I h \xi'_{k'} H_{l'_{k'}}\right) \right]\vec{v}\right\|. 
\end{equation}
This can be obtained by applying the operator \cref{eqn:trotter_error_rep_form} to some vector $\vec{v}$, and $\vec{v}$ is related to the initial condition as well as the exact Schr\"odinger wavefunctions. }

\REV{To further bound \cref{eqn:trotter_error_rep_form_vec}, the key observation is as follows. When $A$ is $H_1$ (or commutators involving $H_1$), although $\norm{A}$ can be very large, it is possible for $\norm{A\vec{v}}$ to be small for certain vectors $\vec{v}$. 
As an example, let us consider the continuous case in one dimension and we take $H_1 = -\Delta$ and $H_2$ being a bounded, smooth potential function $V(x)$. 
The direct computation shows that 
$$H_1\psi = -\partial_x^2 \psi, $$
$$[H_1,H_2]\psi = [-\Delta, V]\psi = - (\partial_x^2 V)\psi - 2(\partial_x V)\partial_x \psi. $$
Both of the terms on the right hand side depend on the spatial derivatives of the wavefunctions, of which the norm can be small if $\psi$ is a smooth function. 
Then according to \cref{eqn:trotter_error_rep_form_vec}, we only need to somehow exchange the order between $A$ and the exponentials without introducing much overhead (\cref{lem:vector_norm}). Combining all previous arguments, we can obtain the desired vector norm error bounds.}

\section{Trotter type algorithms and error representations}\label{sec:trotter_error_rep}


In this section, we consider two different types of Trotter algorithms -- the standard and generalized Trotter formulae -- in simulating time dependent Hamiltonian \cref{eqn:control_ham}, and derive their error representations explicitly. 
Here for simplicity we restrict ourselves to the first-order and second-order cases. 
We point out that such schemes and results regarding the (non-)existence of commutator scaling can be generalized to their higher order counterparts. 

\subsection{The standard and generalized Trotter formulae}\label{sec:trotter_methods}

Both the standard and the generalized Trotter formulae (proposed in~\cite{HuyghebaertDeRaedt1990}) belong to the class of splitting methods \cite{HairerLubichWanner2006}.

The first-order standard Trotter algorithm is
\begin{equation} \label{eq:scheme_s_trotter_1st}
    U_{s,1}(t+h,t) = \exp\left(-\I f_2(t+h)H_2h\right)\exp\left(-\I f_1(t+h)H_1h\right). 
\end{equation}
The first-order generalized Trotter formula  is
\begin{equation}\label{eq:scheme_g_trotter_1st}
    U_{g,1}(t+h,t) = \exp\left(-\I\int_{t}^{t+h}f_2(s)dsH_2\right)\exp\left(-\I\int_{t}^{t+h}f_1(s)dsH_1\right). 
\end{equation}
The second-order standard Trotter formula is 
\begin{equation} \label{eq:scheme_s_trotter_2nd}
    U_{s,2}(t+h,t) = \exp\left(-\frac{\I h}{2}f_1(t+h/2)H_1\right)\exp\left(-\I h f_2(t+h/2)H_2\right)\exp\left(-\frac{\I h}{2}f_1(t+h/2)H_1\right). 
\end{equation}
The second-order generalized Trotter formula is
\begin{equation}\label{eq:scheme_g_trotter_2nd}
    U_{g,2}(t+h,t) = \exp\left(-\I\int_{t+h/2}^{t+h}f_1(s)dsH_1\right)\exp\left(-\I\int_{t}^{t+h}f_2(s)dsH_2\right)\exp\left(-\I\int_{t}^{t+h/2}f_1(s)dsH_1\right). 
\end{equation}

It is clear that the difference between the standard and the generalized Trotter formulae lies in the temporal treatment of $f_1$ and $f_2$, and the standard Trotter formula can be viewed as applying certain quadrature rules in representing the integrals of $f_1$ and $f_2$. 
From now on we assume that $\int_a^b f(s) ds$ can be accurately 
computed with negligible extra cost for any scalar-valued smooth function $f(s)$. 
Furthermore, we remark that although in our definitions of the schemes we perform evolution governed by $H_1$ at first and then by $H_2$, the order of $H_1$ and $H_2$ only affects the absolute preconstants in the error bounds and will not lead to any difference in the asymptotic scalings.

\subsection{Error representations}

For the time-independent Trotter formula, the work of \cite{ChildsSuTranEtAl2020, DescombesThalhammer2010} prove a commutator type of error of any order by writing down an explicit error representation via variation of parameters formula. Here we follow the procedure in~\cite{ChildsSuTranEtAl2020} to write down the corresponding error representations for standard and generalized time-dependent Trotter formulae, which turn out to be the starting point for proving both the operator norm and the vector norm error bounds. 
Although we only present the error representation on the interval $[0,h]$, this is just for notation simplicity and with minor modifications the results naturally hold on $[t,t+h]$ for any $t$. 
The proofs are given in \cref{append:proof_error_represent}.

\begin{lemma}[Error representation of the first-order standard Trotter formula]\label{lem:error_rep_s_1st}
    \begin{equation}
        \begin{split}
            U_{s,1}(h,0) - U(h,0) = \int_0^{h} U(h,s) \exp\left(-\I s f_2(s)H_2\right)E_{s,1}(s)\exp\left(-\I s f_1(s)H_1\right) ds
        \end{split}
    \end{equation}
    where 
    \begin{equation}
        \begin{split}
            E_{s,1}(h) & = \int_{0}^h f_1(h)f_2(s) \left(\exp\left(\ad_{\I s f_2(s) H_2}\right)([H_1,H_2])\right)ds - \I h f_1'(h)H_1 - \I h f_2'(h)H_2 \\
            & \quad + \int_{0}^h  s f_1(h) f_2'(s) \left(\exp\left(\ad_{\I s f_2(s) H_2}\right)([H_1,H_2])\right)ds. 
        \end{split}
    \end{equation}
\end{lemma}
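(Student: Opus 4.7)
The plan is to follow the strategy of \cite{ChildsSuTranEtAl2020} adapted to the time-dependent setting: derive an inhomogeneous ODE for $U_{s,1}(t,0)$ with the same leading generator as the exact $U(t,0)$, identify the inhomogeneity as a ``defect'' $R(t)$, and then invoke \cref{lem:VoP} (variation of parameters) to produce the integral representation. Writing $g_j(t) := t f_j(t)$ so that $U_{s,1}(t,0) = \exp(-\I g_2(t) H_2)\exp(-\I g_1(t) H_1)$, I would first differentiate using the product rule to obtain
\begin{equation*}
    \partial_t U_{s,1}(t,0) = -\I g_2'(t)\, H_2\, U_{s,1}(t,0) \;-\; \I g_1'(t)\, \exp(-\I g_2(t)H_2)\,H_1\,\exp(-\I g_1(t)H_1).
\end{equation*}
Adding and subtracting $-\I(f_1(t)H_1 + f_2(t)H_2)U_{s,1}(t,0)$ and using $g_j'(t) = f_j(t) + t f_j'(t)$ then rearranges this into the form $\partial_t U_{s,1}(t,0) = -\I(f_1(t)H_1 + f_2(t)H_2)U_{s,1}(t,0) + R(t)$. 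Applying \cref{lem:VoP} to this inhomogeneous equation (with the shared initial value $U_{s,1}(0,0) = U(0,0) = I$) immediately yields $U_{s,1}(h,0) - U(h,0) = \int_0^h U(h,s) R(s)\,ds$.

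Next I would reshape $R(t)$ to factor out the outer exponentials so that it takes the form $R(t) = \exp(-\I t f_2(t) H_2)\, E_{s,1}(t)\, \exp(-\I t f_1(t) H_1)$, which would match the integrand in the target statement. Conjugating $R(t)$ by $\exp(\I g_2(t) H_2)$ on the left and $\exp(\I g_1(t) H_1)$ on the right, and using the identities $\exp(\ad_A)B = e^A B e^{-A}$ and the fact that $H_2$ commutes with its own exponential, produces
\begin{equation*}
    E_{s,1}(t) = \I f_1(t)\bigl(\exp(\ad_{\I g_2(t) H_2}) H_1 - H_1\bigr) - \I t f_1'(t) H_1 - \I t f_2'(t) H_2.
\end{equation*}

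The main work is then to rewrite the first term as an integral of commutators. For this I would introduce $\phi(u) := \exp(\ad_{\I u H_2}) H_1$, observe $\phi'(u) = -\I \exp(\ad_{\I u H_2})[H_1,H_2]$, and apply the fundamental theorem of calculus on $[0, g_2(t)]$. A change of variables $u = g_2(s) = s f_2(s)$, with $du = (f_2(s) + s f_2'(s))\,ds$, converts the resulting integral from one over $u$ to one over $s \in [0,t]$, and produces exactly the two integrals in the statement of $E_{s,1}(h)$ — one with weight $f_1(t) f_2(s)$ and another with weight $s f_1(t) f_2'(s)$ — with integrand $\exp(\ad_{\I s f_2(s) H_2})([H_1,H_2])$. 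Setting $t = h$ gives the stated identity.

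The routine but slightly delicate step is bookkeeping the signs and the $\ad$ convention when conjugating $R(t)$; the truly substantive step is the commutator reduction via the parametric identity for $\phi(u)$ and the change of variables that converts $du$ into $(f_2 + sf_2')\,ds$, which is what introduces the two separate integrals in $E_{s,1}(h)$ and reveals the commutator scaling hidden in the otherwise opaque remainder $R(t)$.
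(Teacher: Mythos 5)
Your proposal is correct and follows essentially the same route as the paper: differentiate $U_{s,1}(t,0)$, identify the defect relative to the exact generator, apply the variation-of-parameters lemma, and then reduce $\exp(\ad_{\I t f_2(t)H_2})H_1 - H_1$ to an integral of conjugated commutators (your parametric identity for $\phi(u)$ plus the substitution $u = s f_2(s)$ is the same computation as the paper's direct application of Taylor's theorem with integral remainder to $s \mapsto \exp(\ad_{\I s f_2(s)H_2})H_1$, whose derivative carries the factor $f_2(s) + s f_2'(s)$). All signs and the $\ad$ convention check out, so no gaps remain.
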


\begin{lemma}[Error representation of the first-order generalized Trotter formula]
\label{lem:error_rep_g_1st}
    \begin{equation}
        \begin{split}
            U_{g,1}(h,0) - U(h,0) = \int_0^{h} U(h,s) \exp\left(-\I  \int_0^s f_2(s')ds' H_2\right)E_{g,1}(s)\exp\left(-\I \int_0^s f_1(s')ds' H_1\right) ds
        \end{split}
    \end{equation}
    where 
    \begin{equation}
        \begin{split}
            E_{g,1}(h) &= \int_0^h f_1(h) f_2(s) \left(\exp\left(\ad_{\I \int_0^h f_2(s')ds' H_2}\right)([H_1,H_2])\right) ds. 
        \end{split}
    \end{equation}
\end{lemma}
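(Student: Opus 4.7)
The plan is to follow the variation-of-parameters strategy used in the time-independent analysis of \cite{ChildsSuTranEtAl2020}, adapted to the time-dependent setting. For brevity, let $\phi_i(t) := \int_0^t f_i(s)\,ds$ for $i=1,2$. First I would differentiate
$$U_{g,1}(t,0) = \exp(-\I \phi_2(t) H_2)\,\exp(-\I \phi_1(t) H_1)$$
in $t$ using the product rule, obtaining
$$\partial_t U_{g,1}(t,0) = -\I f_2(t) H_2\, U_{g,1}(t,0) - \I f_1(t)\,\exp(-\I \phi_2(t) H_2)\, H_1\, \exp(-\I \phi_1(t) H_1).$$
Subtracting $-\I H(t)\,U_{g,1}(t,0) = -\I(f_1(t) H_1 + f_2(t) H_2)\,U_{g,1}(t,0)$, the $H_2$ terms cancel and the remaining two $H_1$ terms combine into a commutator, giving
$$\partial_t U_{g,1}(t,0) = -\I H(t)\,U_{g,1}(t,0) + R(t), \qquad R(t) = \I f_1(t)\bigl[H_1,\,\exp(-\I \phi_2(t) H_2)\bigr]\,\exp(-\I \phi_1(t) H_1).$$

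Next I would apply the first part of \cref{lem:VoP} with $H(t)$ replaced by $-\I H(t)$ and with $U(t,s)$ the exact Schr\"odinger propagator. Since $U_{g,1}(0,0)=I=U(0,0)$, this yields the clean outer representation
$$U_{g,1}(h,0) - U(h,0) = \int_0^h U(h,s)\,R(s)\,ds,$$
which already has the shape required by the lemma.

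It then remains to rewrite $R(s)$ so that $\exp(-\I \phi_2(s) H_2)$ and $\exp(-\I \phi_1(s) H_1)$ appear on the outside and the middle factor is $E_{g,1}(s)$. I would use the standard conjugation identity
$$\exp(\I\phi_2 H_2)\, H_1\, \exp(-\I\phi_2 H_2) - H_1 = \int_0^1 \tfrac{d}{d\lambda}\bigl(\exp(\lambda\I\phi_2 H_2)\, H_1\, \exp(-\lambda\I\phi_2 H_2)\bigr)\,d\lambda = -\I\phi_2\int_0^1 \exp\bigl(\ad_{\lambda \I \phi_2 H_2}\bigr)([H_1,H_2])\,d\lambda,$$
multiply on the left by $\exp(-\I\phi_2 H_2)$ to isolate $[H_1,\exp(-\I\phi_2 H_2)]$, and then perform the two changes of variables $u = \lambda\phi_2(s)$ followed by $u = \phi_2(r)$ with $du = f_2(r)\,dr$. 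This turns the parameter integral into an integral over $r\in[0,s]$ with weight $f_2(r)$ and with $\exp(\ad_{\I\phi_2(r) H_2})$ acting on $[H_1,H_2]$. Substituting back into $R(s)$ and collecting the $f_1(s)$ factor inside the integral produces exactly $\exp(-\I\phi_2(s) H_2)\,E_{g,1}(s)\,\exp(-\I\phi_1(s) H_1)$.

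The derivation is conceptually simple and the representation is exact; no estimates are used. The main obstacle is purely bookkeeping: tracking signs, preserving the ordering of $H_1$, $H_2$, and $[H_1,H_2]$, and carrying out the two successive substitutions so that the $\ad$ action is evaluated at $\I\phi_2(r) H_2$ under the $r$-integral, as in the lemma statement.
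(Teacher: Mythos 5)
Your proposal is correct and follows essentially the same route as the paper: differentiate $U_{g,1}(t,0)$, identify the defect against $-\I H(t)U_{g,1}$, apply the variation-of-parameters formula (\cref{lem:VoP}) with the exact propagator, and then expand the conjugated $H_1$ to extract the commutator $[H_1,H_2]$; the paper merely carries out this last step via Taylor's theorem in the time variable (the $k=1$ case of \cref{lem:Taylor}) rather than your $\lambda$-parametrized conjugation identity followed by two changes of variables, which is an equivalent computation. One remark: your derivation (correctly) produces $\exp\bigl(\ad_{\I \int_0^r f_2(s')ds' H_2}\bigr)$ with the integration variable $r$ in the subscript, whereas the lemma as printed has the fixed upper limit $h$ there --- this appears to be a typo in the paper (the analogous standard-Trotter representation in \cref{lem:error_rep_s_1st} uses the integration-variable-dependent subscript, and the fixed-$h$ version is not an exact identity), and it is harmless for the subsequent norm bounds since the $\ad$-exponential is a unitary conjugation.
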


\begin{lemma}[Error representation of the second-order standard Trotter formula]
\label{lem:error_rep_s_2nd}
    \begin{equation}
        \begin{split}
            U_{s,2}(h,0) - U(h,0) = \int_0^{h} &U(h,s) \exp\left(-\frac{\I s}{2}f_1(s/2)H_1\right)E_{s,2}(s)\\
            &\exp\left(-\I s f_2(s/2)H_2\right)\exp\left(-\frac{\I s}{2}f_1(s/2)H_1\right) ds
        \end{split}
    \end{equation}
    where $E_{s,2}$ is defined in \cref{eqn:Es2}.
\end{lemma}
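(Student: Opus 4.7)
The plan is to mirror the derivation of \cref{lem:error_rep_s_1st} and \cref{lem:error_rep_g_1st}: view $s \mapsto V(s) := U_{s,2}(s,0)$ as a smooth one-parameter family of unitaries; differentiate to obtain an inhomogeneous linear ODE $\partial_s V(s) = -\I H(s) V(s) + R(s)$; recognize $R(s)$ as having the sandwich structure claimed in the lemma statement, with central factor $E_{s,2}(s)$; and finally invoke \cref{lem:VoP}, Part 1, with $U(t,s)$ generated by $-\I H(t)$ and the common initial condition $V(0) = U(0,0) = I$, to conclude $V(h) - U(h,0) = \int_0^h U(h,s) R(s)\,ds$. This directly yields the claimed integral identity.

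For the main calculation I would introduce the shorthand $\alpha(s) := \frac{s}{2} f_1(s/2)$ and $\beta(s) := s f_2(s/2)$, so that $V(s) = \exp(-\I \alpha(s) H_1)\exp(-\I \beta(s) H_2)\exp(-\I \alpha(s) H_1)$. The product rule produces three contributions to $\partial_s V(s)$, one from each factor. I would then commute every generator ($H_1$ or $H_2$) past the exponentials it does not commute with using the conjugation identity $\exp(A)B = \bigl(\exp(\ad_A) B\bigr) \exp(A)$, so that each term takes the form $\exp(-\I \alpha(s) H_1)$ times a middle matrix times $\exp(-\I \beta(s) H_2)\exp(-\I \alpha(s) H_1)$. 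Rewriting $-\I H(s) V(s) = -\I(f_1(s) H_1 + f_2(s) H_2) V(s)$ in exactly the same sandwich form (again pushing $H_1$ and $H_2$ through the leftmost $\exp(-\I \alpha(s) H_1)$ via the adjoint identity) and subtracting isolates $R(s)$, whose central matrix is then read off as $E_{s,2}(s)$.

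The second-order order condition manifests in the cancellation of the leading parts: $2\alpha'(s) = f_1(s/2) + \frac{s}{2} f_1'(s/2)$ matches $f_1(s)$ up to a Taylor remainder, handled via \cref{lem:Taylor} applied around $s/2$, and an analogous identity handles $\beta'(s) - f_2(s)$. What survives consists of (i) nested commutators such as $[H_1,[H_1,H_2]]$ and $[H_2,[H_1,H_2]]$ produced by expanding $\exp(\ad_{-\I \alpha(s) H_1}) - I$ through the integral identity $\exp(\ad_A) - I = \int_0^1 \exp(u\,\ad_A) \ad_A\,du$, and (ii) integral remainders involving $f_1''$ and $f_2''$ from Taylor's theorem. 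Collecting these contributions into a single representation on $[0,s]$ yields the formula defining $E_{s,2}(s)$ in \cref{eqn:Es2}.

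The main obstacle is purely bookkeeping: one must carefully account for three differentiated factors, several adjoint rearrangements, and two Taylor expansions (of $f_1$ and $f_2$ around $s/2$), all while tracking the signs and the positions of the commutator terms. No analytic estimates are required---every manipulation is an exact identity---so the primary risk is algebraic error in the rearrangement. A useful sanity check is that when $f_1 \equiv f_2 \equiv 1$ the representation must collapse to the commutator-only second-order Trotter error used in \cite{ChildsSuTranEtAl2020}, which pins down many of the signs and coefficients.
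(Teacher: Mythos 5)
Your proposal follows essentially the same route as the paper's proof: differentiate $U_{s,2}(h,0)$, use the conjugation identity $\exp(\ad_A)B=\exp(A)B\exp(-A)$ to cast the residual into the sandwich form, apply the variation-of-parameters formula (\cref{lem:VoP}), and then Taylor-expand to identify $E_{s,2}$; the intermediate middle matrix you would obtain matches the paper's exactly. The only (immaterial) difference is bookkeeping: the paper centers all Taylor expansions at $0$ and lets the zeroth- and first-order terms in $h$ cancel, whereas you propose expanding around $s/2$, which yields an equivalent but differently organized set of remainder integrals that must then be rearranged to match \cref{eqn:Es2} literally.
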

\begin{lemma}[Error representation of the second-order generalized Trotter formula]
\label{lem:error_rep_g_2nd}
    \begin{equation}
        \begin{split}
            U_{g,2}(h,0) - U(h,0) = \int_0^{h} & U(h,s) \exp\left(-\I\int_{s/2}^s f_1(s')ds' H_1\right)E_{g,2}(s)\\
            &\exp\left(-\I \int_0^s f_2(s')ds' H_2\right)\exp\left(-\I\int_0^{s/2}f_1(s')ds' H_1\right) ds
        \end{split}
    \end{equation}
    where $E_{g,2}$ is defined in \cref{eqn:Eg2}.
\end{lemma}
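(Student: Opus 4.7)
The plan is to mimic the strategy used for the first-order generalized case (\cref{lem:error_rep_g_1st}): identify the linear ODE that the Trotterized propagator $U_{g,2}(\cdot,0)$ satisfies, isolate its deviation from the exact Schr\"odinger ODE as a residual $R(s)$, and then read off the integral representation by applying \cref{lem:VoP}(1) with $R(s)$ as inhomogeneity.

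To carry this out, introduce the scalar primitives $F_1(s) = \int_{s/2}^{s} f_1(s')\ud s'$, $F_2(s) = \int_0^{s} f_2(s')\ud s'$, and $F_3(s) = \int_0^{s/2} f_1(s')\ud s'$, so that
\begin{equation*}
U_{g,2}(s,0) = e^{-\I F_1(s)H_1}\, e^{-\I F_2(s)H_2}\, e^{-\I F_3(s)H_1}, \qquad U_{g,2}(0,0) = I.
\end{equation*}
Differentiating in $s$ by the product rule and using $F_1'(s)+F_3'(s) = f_1(s)$, $F_2'(s) = f_2(s)$, $F_3'(s) = \tfrac12 f_1(s/2)$, the defect $R(s) := \partial_s U_{g,2}(s,0) + \I H(s)\, U_{g,2}(s,0)$ collapses into a sum of two bracket terms of the form $[H_2,\, e^{-\I F_1(s)H_1}]\, e^{-\I F_2(s)H_2}\, e^{-\I F_3(s)H_1}$ and $e^{-\I F_1(s)H_1}\,[H_1,\, e^{-\I F_2(s)H_2}]\, e^{-\I F_3(s)H_1}$, weighted by $\I f_2(s)$ and $\tfrac{\I}{2} f_1(s/2)$ respectively. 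Each of these brackets is rewritten via the identity
\begin{equation*}
[A,\, e^{\lambda B}] = -\lambda \int_0^1 e^{\sigma \lambda B}\,[B,A]\, e^{(1-\sigma)\lambda B}\, \ud \sigma,
\end{equation*}
turning them into integrals of the conjugated single commutator $[H_1,H_2]$, which then repackage into $\exp(\ad_\cdot)([H_1,H_2])$. Factoring $e^{-\I F_1(s)H_1}$ on the left and $e^{-\I F_2(s)H_2}\, e^{-\I F_3(s)H_1}$ on the right of the simplified residual exposes the sandwich structure $R(s) = e^{-\I F_1(s)H_1}\,E_{g,2}(s)\, e^{-\I F_2(s)H_2}\, e^{-\I F_3(s)H_1}$, and this defines $E_{g,2}(s)$ as the combination stated in the paper. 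A final application of \cref{lem:VoP}(1) to $\partial_s U_{g,2}(s,0) = -\I H(s)\,U_{g,2}(s,0) + R(s)$ with identity initial condition delivers $U_{g,2}(h,0) - U(h,0) = \int_0^h U(h,s)\,R(s)\,\ud s$, and inserting the factorization yields the claim.

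The main difficulty lies in the algebra reducing $R(s)$: all pure $H_1$ and pure $H_2$ pieces coming from the three product-rule terms must cancel exactly against $\I H(s) U_{g,2}(s,0)$, so that only commutators of $H_1$ and $H_2$ survive. This cancellation is the algebraic reflection of the commutator scaling enjoyed by the generalized (but not the standard) second-order formula, and is precisely the reason $E_{g,2}(s)$ contains no lone $H_1$ or $H_2$, in contrast to $E_{s,2}(s)$. Tracking signs, the factor $\tfrac12$ produced by the chain rule on $F_3$, and the three distinct integration windows $[0,s/2]$, $[s/2,s]$, $[0,s]$ through this cancellation is where one must be most careful; once $R(s)$ is in commutator form, repackaging the $\sigma$-integrals in adjoint-exponential notation is routine.
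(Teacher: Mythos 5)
Your setup coincides with the paper's proof: differentiating $U_{g,2}(\cdot,0)$, cancelling the lone $H_1$ and $H_2$ pieces against $\I H(s)U_{g,2}(s,0)$, and arriving at a defect of the form $\I f_2(s)\,[H_2,e^{-\I F_1(s)H_1}]\,e^{-\I F_2(s)H_2}e^{-\I F_3(s)H_1}+\tfrac{\I}{2}f_1(s/2)\,e^{-\I F_1(s)H_1}[H_1,e^{-\I F_2(s)H_2}]\,e^{-\I F_3(s)H_1}$ is correct (your weights are right, and factoring the outer exponentials gives exactly the paper's intermediate $E_{g,2}(h)=\I f_2(h)\bigl[\exp(\ad_{\I\int_{h/2}^{h}f_1 H_1})H_2-H_2\bigr]-\tfrac{\I}{2}f_1(h/2)\bigl[\exp(\ad_{-\I\int_{0}^{h}f_2 H_2})H_1-H_1\bigr]$). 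The application of \cref{lem:VoP} is then exactly as in the paper.

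The gap is in the last step. Applying your identity $[A,e^{\lambda B}]=-\lambda\int_0^1 e^{\sigma\lambda B}[B,A]e^{(1-\sigma)\lambda B}\,\ud\sigma$ once to each bracket is equivalent to a \emph{first-order} Taylor expansion with integral remainder: it expresses the middle factor solely through conjugations of $[H_1,H_2]$ and leaves each contribution of size $\Or(s)$. That representation is exact, but it is \emph{not} the $E_{g,2}$ of \cref{eqn:Eg2}, which contains the nested commutators $[H_1,[H_1,H_2]]$ and $[H_2,[H_2,H_1]]$ and kernels $(h-s)$. To reach \cref{eqn:Eg2} one must expand $\exp(\ad_{\I\int_{h/2}^{h}f_1 H_1})H_2-H_2$ and $\exp(\ad_{-\I\int_{0}^{h}f_2 H_2})H_1-H_1$ by Taylor's theorem to \emph{second} order; the resulting leading terms $\tfrac{\I h}{2}f_1(0)[H_1,H_2]$ and $-\I h f_2(0)[H_2,H_1]$ then combine, after writing $f_2(h)-f_2(0)=\int_0^h f_2'(s)\,\ud s$ and $f_1(h/2)-f_1(0)=\tfrac12\int_0^h f_1'(s/2)\,\ud s$, into the $\Or(h^2)$ terms on the first line of \cref{eqn:Eg2}. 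This cancellation of the $\Or(h)$ single-commutator contributions between the two brackets is what makes $E_{g,2}=\Or(h^2)$, hence the local error $\Or(h^3)$; your one-step commutator identity stops short of it, so it neither produces the stated formula nor yields a remainder small enough for the second-order bounds (\cref{thm:error_trotterstep}, \cref{thm:Trotter_vector_norm_error_bound_local}) that this lemma feeds into.
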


The expressions of these exact error representations are somewhat complicated, but the structures for all the representations are the same. 
As introduced in \cref{sec:main_idea}, the error representations for both standard and generalized Trotter formulae of first and second-order are linear combinations of integrals with integrands in the form of \cref{eqn:trotter_error_rep_form}, which can be expressed as the multiplication of matrix exponentials, Hamiltonians, and commutators of Hamiltonians. 

Before we proceed, we remark that in the error representations of standard Trotter formulae of first and second-order, besides the $\Or(h^{p+1})$ terms, we also include higher order terms $\Or(h^{p+2})$ and/or $\Or(h^{p+3})$. 
This is because we aim at writing down the exact error terms, and the Taylor expansion of terms like $\exp(-\I h f_1(h) H_1)$ will naturally involve higher order term even though we only expand it up to the desired lower order. 
For example, if we look at the first-order derivative of $\exp(-\I h f_1(h) H_1)$, then 
$$\frac{d}{dh} (\exp(-\I h f_1(h) H_1)) = -\I f_1(h)H_1\exp(-\I h f_1(h) H_1) - \I h f_1'(h)H_1\exp(-\I h f_1(h) H_1).$$
We can observe that the first term is $\Or(1)$ and the second term is $\Or(h)$, which are on different scales. 
Therefore, the same order term in the Taylor expansion of the unitaries does not necessarily have the same scaling in terms of $h$. 

\begin{rem}[Exact error representation]It is possible to derive a simpler error bound by considering only the lowest order term and discarding all the higher order terms in $h$. However, such an error bound would not reveal the commutator scaling in the higher order remainder terms. For instance, for the time-independent Hamiltonian simulation,~\cite{Thalhammer2008} deduces an error bound for the $p$-th order Trotter formula, in which the $\Or(h^p)$ term has a commutator structure, but the higher order terms do not. This leads to complexity overhead when the spectral norms of the Hamiltonians become large.  The work~\cite{ChildsSuTranEtAl2020} fixes this issue by deriving an exact error representation, demonstrating the validity of the commutator scaling for high order terms as well. Therefore for time-dependent simulation, we also preserve all the terms in the error representation (at least for now). 
We can observe that all the terms in the exact representation, regardless of the order in $h$, are in the form of \cref{eqn:trotter_error_rep_form}, thus no overhead will be introduced by higher order terms and it is safe to bound them by the lowest order term later in estimating complexity. 
\end{rem}


\section{Operator norm error bounds}\label{sec:operator_norm}

We first establish the operator norm error bounds. 
In this section we assume that $H_1$ and $H_2$ are two bounded operators. 
The operator norm error bounds can be directly obtained from the error representations by bounding the operator norms of all the unitaries by 1. 
\begin{theorem}\label{thm:error_trotterstep}
The error of each standard/generalized Trotter step measured in the operator norm is as follows:
\begin{enumerate}
    \item First-order standard Trotter formula:
    \begin{equation}
        \|U_{s,1}(h,0)-U(h,0)\| \leq \alpha_{s,1} h^2 + \beta_{s,1} h^3,
    \end{equation}
    where 
    \begin{equation}
        \alpha_{s,1} = \frac{1}{2}\|f_1'\|_{\infty}\|H_1\| + \frac{1}{2}\|f_2'\|_{\infty}\|H_2\| + \frac{1}{2}\|f_1\|_{\infty}\|f_2\|_{\infty} \|[H_1,H_2]\|
    \end{equation}
    and 
    \begin{equation}
        \beta_{s,1} = \frac{1}{6} \|f_1\|_{\infty}\|f_2'\|_{\infty} \|[H_1,H_2]\|. 
    \end{equation}
    
    \item First-order generalized Trotter formula:
    \begin{equation}
        \|U_{g,1}(h,0) - U(h,0)\| \leq \alpha_{g,1} h^2
    \end{equation}
    where 
    \begin{equation}
        \alpha_{g,1} = \frac{1}{2}\|f_1\|_{\infty}\|f_2\|_{\infty} \|[H_1,H_2]\|. 
    \end{equation}
    
    \item Second-order standard Trotter  formula: 
    \begin{equation}
        \|U_{s,2}(h,0)-U(h,0)\| \leq \alpha_{s,2} h^3 + \beta_{s,2} h^4 + \gamma_{s,2} h^5,
    \end{equation}
    where 
    \begin{equation}
        \begin{split}
            \alpha_{s,2} &= \frac{7}{24}\|f_1''\|_{\infty}\|H_1\|  +\frac{1}{12} \|f_1'\|_{\infty}\|f_2\|_{\infty} \|H_1\|+  \frac{7}{24}\|f_2''\|_{\infty}\|H_2\|  \\
        & \quad + \frac{1}{6} (\|f_1'\|_{\infty}\|f_2\|_{\infty}+\|f_1\|_{\infty}\|f_2'\|_{\infty})\|[H_1,H_2]\| \\
        & \quad + \frac{1}{24}\|f_1\|_{\infty}^2\|f_2\|_{\infty}\|[H_1,[H_1,H_2]]\| + \frac{1}{12}\|f_1\|_{\infty}\|f_2\|_{\infty}^2\|[H_2,[H_1,H_2]]\| , 
        \end{split}
    \end{equation}
    \begin{equation}
        \begin{split}
            \beta_{s,2} &= \frac{1}{64} \|f_1'\|_{\infty}\|f_2'\|_{\infty}\|H_1\| \\
            & \quad + \left(\frac{1}{192}\|f_1\|_{\infty}\|f_2''\|_{\infty}+\frac{1}{192}\|f_1''\|_{\infty}\|f_2\|_{\infty} + \frac{1}{48}\|f_1'\|_{\infty}\|f_2'\|_{\infty}  \right)\|[H_1,H_2]\| \\
        & \quad + \frac{1}{96}\|f_1\|_{\infty} \|f_1'\|_{\infty} \|f_2\|_{\infty} \|[H_1,[H_1,H_2]]\| + \frac{1}{48} \|f_1\|_{\infty} \|f_2\|_{\infty} \|f_2'\|_{\infty} \|[H_2,[H_1,H_2]]\| ,
        \end{split}
    \end{equation}
    and 
    \begin{equation}
        \begin{split}
            \gamma_{s,2} = \frac{1}{960}\|f_1'\|_{\infty}^2\|f_2\|_{\infty}\|[H_1,[H_1,H_2]]\| + \frac{1}{480} \|f_1\|_{\infty}\|f_2'\|_{\infty}^2\|[H_2,[H_1,H_2]]\|. 
        \end{split}
    \end{equation}
    \item  Second-order generalized Trotter formula: 
    \begin{equation}
        \|U_{g,2}(h,0)-U(h,0)\| \leq \alpha_{g,2} h^3,
    \end{equation}
    where 
    \begin{equation}
        \begin{split}
            \alpha_{g,2} &= \left(\frac{7}{12}\|f_1\|_{\infty}\|f_2'\|_{\infty}+\frac{11}{24}\|f_1'\|_{\infty}\|f_2\|_{\infty}\right)\|[H_1,H_2]\| \\
            & \quad + \frac{3}{8}\|f_1\|_{\infty}^2\|f_2\|_{\infty}\|[H_1,[H_1,H_2]]\| + \frac{1}{12}\|f_1\|_{\infty}\|f_2\|_{\infty}^2\|[H_2,[H_2,H_1]]\|. 
        \end{split}
    \end{equation}
\end{enumerate}
    
\end{theorem}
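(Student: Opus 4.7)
The proof plan is to take each of the four error representations established in Lemmas~\ref{lem:error_rep_s_1st}--\ref{lem:error_rep_g_2nd} and bound it term by term in the operator norm, exploiting the fact that every matrix exponential appearing with a Hermitian generator is unitary. The common structure of the integrands, as emphasized in \cref{sec:main_idea}, is a product of unitaries, some scalar factors $f_i(\cdot)$ or $f_i'(\cdot)$, and a central matrix $A\in\{H_1,H_2,[H_1,H_2],[H_1,[H_1,H_2]],[H_2,[H_1,H_2]]\}$. Applying submultiplicativity and using $\norm{\exp(\I\tau H_j)}=1$ for Hermitian $H_j$, together with the identity $\norm{\exp(\ad_{\I\tau H_j})B}=\norm{e^{\I\tau H_j}Be^{-\I\tau H_j}}=\norm{B}$, each such integrand is dominated by $\norm{f_{i_1}}_\infty\cdots\norm{f_{i_k}}_\infty\norm{A}$ (with derivatives $f_i'$, $f_i''$ treated identically).

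For the first-order cases I would simply substitute these pointwise bounds into the outer integral $\int_0^h(\cdot)\,ds$ in \cref{lem:error_rep_s_1st} and \cref{lem:error_rep_g_1st}. The integrand $E_{g,1}(s)$ in the generalized case contributes a single factor $s$ from the inner integral $\int_0^s f_2$, giving the clean bound $\frac12\norm{f_1}_\infty\norm{f_2}_\infty\norm{[H_1,H_2]}h^2$. The standard case adds the explicit $-\I h f_1'(h)H_1-\I h f_2'(h)H_2$ terms (producing the $\norm{H_j}$ contributions with coefficient $\tfrac12$) as well as an extra $s$-weighted nested integral, which accounts for the $\beta_{s,1}h^3$ remainder after integration of $s^2$.

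For the second-order cases I would expand $E_{s,2}$ and $E_{g,2}$ as indicated by their definitions (\cref{eqn:Es2}, \cref{eqn:Eg2}), reading off the coefficient of each monomial in $s$. The mechanics are identical: each term is a product of sup-norms of $f_1,f_2$ and their first/second derivatives, multiplied by the operator norm of one Hamiltonian or one single/double commutator, times a polynomial in $s$ coming from the inner integrals and from Taylor expansion of the unitaries. Outer integration $\int_0^h s^k\,ds=h^{k+1}/(k+1)$ yields the $h^3$, $h^4$, $h^5$ coefficients $\alpha_{s,2},\beta_{s,2},\gamma_{s,2}$ for the standard formula and the single $h^3$ coefficient $\alpha_{g,2}$ for the generalized formula, since $E_{g,2}$ contains only commutator terms without bare $H_1,H_2$.

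The main obstacle is purely bookkeeping rather than conceptual: in the second-order standard case one must carefully track which terms carry only the lowest-order part (giving $\alpha_{s,2}h^3$) versus which are higher-order remainders produced by the $-\I h f'_j(\cdot)H_j$ corrections from differentiating $\exp(-\I h f_j(t)H_j)$, as noted in \cref{rem:exact_error_representation_note_above}\footnote{That is, the remark preceding this theorem.}. One must also verify that whenever a conjugation $\exp(\ad_{\I\cdot H_j})$ acts on a commutator, the resulting matrix is controlled by the norm of the unconjugated commutator, so that no extra $\norm{H_j}$ appears in the final bound. Once this combinatorial accounting is done, the stated coefficients $\alpha,\beta,\gamma$ follow by direct summation and the integration formula $\int_0^h s^k ds=h^{k+1}/(k+1)$.
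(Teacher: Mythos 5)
Your proposal is correct and takes essentially the same route as the paper: the operator norm bounds are obtained directly from the error representations of \cref{lem:error_rep_s_1st,lem:error_rep_g_1st,lem:error_rep_s_2nd,lem:error_rep_g_2nd} by bounding every unitary factor (and every unitary conjugation $\exp(\ad_{\I\cdot H_j})$, which preserves the operator norm) by $1$, replacing the scalar functions by their sup-norms, and integrating the resulting powers of $s$ via $\int_0^h s^k\,ds = h^{k+1}/(k+1)$. The only blemish is the reference to a nonexistent remark label, which is cosmetic rather than mathematical.
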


\begin{rem}[The preconstants in \cref{thm:error_trotterstep}]First, the preconstants of the standard Trotter formula involve the norms of both the Hamiltonians as well as their commutators, while the preconstants of the generalized Trotter formula of the first order only involve the commutator $[H_1,H_2]$, and those of the second order involve further nested commutators. Second, the $p$-th order standard Trotter scheme ($p = 1, 2$) depends on the $p$-th order derivatives of the control functions while the $p$-th generalized Trotter scheme depends only on $(p-1)$-th order derivatives. In this sense, when the time derivatives of $f_1$ and $f_2$ are large (or when the regularity of $f_1,f_2$ are limited), the generalized Trotter formula can further outperform the standard Trotter method.
\end{rem}

Now we move on to the global error bounds. 
To obtain an approximation of the exact unitary evolution up to time $T$, we can divide the time interval $[0,T]$ into $L$ equilength segments and implement Trotter discretization on each segment. 
Since the evolutions for both continuous and discretized cases are unitary, the global error is simply a linear accumulation of local errors at each time step. 
For sufficiently large $L$, the total error can be controlled to be arbitrarily small. 
\begin{theorem}\label{thm:operator_norm_bound}
    Let $T > 0$ be the evolution time, and the dynamics \cref{eqn:td_hamsim} is discretized via standard and generalized Trotter formulae with $L$ equidistant time steps (thus the time step size $h=T/L$). Then 
        \begin{align}
            &\left\|\prod_{l=1}^{L}U_{s,1}\left(\frac{lT}{L},\frac{(l-1)T}{L}\right) - U(T,0)\right\| \leq \alpha_{s,1} \frac{T^2}{L} + \beta_{s,1} \frac{T^3}{L^2},\\
            &\left\|\prod_{l=1}^{L}U_{g,1}\left(\frac{lT}{L},\frac{(l-1)T}{L}\right) - U(T,0)\right\| \leq \alpha_{g,1} \frac{T^2}{L},\\ 
            &\left\|\prod_{l=1}^{L}U_{s,2}\left(\frac{lT}{L},\frac{(l-1)T}{L}\right) - U(T,0)\right\| \leq \alpha_{s,2} \frac{T^3}{L^2} + \beta_{s,2} \frac{T^4}{L^3} + \gamma_{s,2} \frac{T^5}{L^4},\\
          &\left\|\prod_{l=1}^{L}U_{g,2}\left(\frac{lT}{L},\frac{(l-1)T}{L}\right) - U(T,0)\right\| \leq \alpha_{g,2} \frac{T^3}{L^2},
        \end{align}
    where preconstants $\alpha$ and $\beta$ are defined in \cref{thm:error_trotterstep}. 
\end{theorem}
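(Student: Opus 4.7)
The plan is to reduce the four global bounds to the corresponding local bounds of \cref{thm:error_trotterstep} via a standard telescoping identity. Let $*\in\{s,g\}$ and $p\in\{1,2\}$, denote the step size $h=T/L$ and the grid points $t_l = lh$, and abbreviate the one-step propagators by $V_l = U_{*,p}(t_l,t_{l-1})$ and $W_l = U(t_l,t_{l-1})$. First I would note that $V_l$ is unitary: it is a product of matrix exponentials of anti-Hermitian operators, hence $\norm{V_l}=1$, and likewise $\norm{W_l}=1$. I would also use the evolution property $\prod_{l=1}^{L} W_l = U(T,0)$, which follows from uniqueness of solutions to the Schr\"odinger equation with initial data at $t_{l-1}$.

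Next I would write the telescoping identity
\begin{equation*}
\prod_{l=1}^{L} V_l - \prod_{l=1}^{L} W_l \;=\; \sum_{l=1}^{L} \left(\prod_{l'=l+1}^{L} V_{l'}\right) (V_l - W_l) \left(\prod_{l'=1}^{l-1} W_{l'}\right),
\end{equation*}
where empty products are interpreted as the identity. Since products of unitaries are unitary, the triangle inequality together with submultiplicativity of the operator norm immediately gives
\begin{equation*}
\left\|\prod_{l=1}^{L} V_l - U(T,0)\right\| \;\le\; \sum_{l=1}^{L} \norm{V_l - W_l}.
\end{equation*}

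Then I would apply \cref{thm:error_trotterstep} to each local error. The lemma was stated on $[0,h]$, but as noted in the text preceding the error representations, the same derivation applies verbatim to any interval $[t_{l-1},t_l]$ of length $h$: the representations only shift the arguments of $f_1,f_2$ and their derivatives, which are already controlled uniformly by the sup-norms $\|f_1^{(k)}\|_\infty,\|f_2^{(k)}\|_\infty$ taken over $[0,T]$, and the Hamiltonian quantities $\|H_j\|$ and nested commutator norms are time-independent. Consequently each term in the sum obeys the relevant bound $\alpha h^{p+1} + \beta h^{p+2} + \gamma h^{p+3}$ with the same preconstants as in \cref{thm:error_trotterstep}. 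Summing $L$ identical bounds with $h=T/L$ yields $L\cdot\alpha h^{p+1}= \alpha T^{p+1}/L^p$, and analogously for the $\beta$ and $\gamma$ contributions, giving the four displayed global estimates.

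The main (and essentially only) obstacle is bookkeeping: one has to verify that the local preconstants $\alpha_{*,p},\beta_{*,p},\gamma_{s,2}$ in \cref{thm:error_trotterstep} are indeed independent of the starting time of the step, which is the case because they depend on $f_1,f_2$ only through sup-norms over $[0,T]$ and on $H_1,H_2$ only through global operator-norm quantities. No new analytic input beyond \cref{thm:error_trotterstep} and unitarity is required.
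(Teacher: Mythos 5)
Your proposal is correct and follows essentially the same route as the paper, which simply invokes unitarity to argue that the global error is a linear accumulation of the local errors from \cref{thm:error_trotterstep}; your explicit telescoping identity and the remark that the local representations hold on any subinterval $[t_{l-1},t_l]$ (with preconstants controlled by sup-norms over $[0,T]$) just make that argument precise.
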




\section{Vector norm error bounds}\label{sec:vector_norm}

Now we consider the case when $H_1$ is an approximation of an unbounded operator, while $H_2$ remains reasonably bounded. 
In this section, we assume that $\norm{H_1}\gg \norm{H_2}$, and the functions $f_1$, $f_2$, as well as their first and second-order derivatives are bounded. To simplify the proof and emphasize our focus on overcoming the difficulty brought by $H_1$, throughout this section we will not track the explicit dependence on $H_2$, $f_1$ and $f_2$.  We use the notation $\widetilde{C}$ with a tilde above to denote preconstants (with possibly varying sizes in different inequalities) which can depend polynomially on $H_2$, $\|f_1^{(k)}\|_{\infty}$ and $\|f_2^{(k)}\|_{\infty}$ but do not depend on $H_1$. Furthermore, we make the following assumptions. 

\begin{assump}[Bounds of commutators]\label{assump:commutator_bound}
We assume $H_1$ is a positive semidefinite operator, and there exists an operator $D_1$ such that $H_1=D_1^\dagger D_1$. 
Furthermore, we assume for any vector $\vec{v}$, there exist constants $\widetilde{C}_{1},\widetilde{C}_{2}$ such that 
    \begin{equation} \label{eqn:assump_comm1}
        \|[H_1,H_2]\vec{v}\| \leq \widetilde{C}_1 (\|D_1\vec{v}\| + \|\vec{v}\|),
    \end{equation}
    and 
    \begin{equation} \label{eqn:assump_comm2}
        \|[H_1,[H_1,H_2]]\vec{v}\| \leq \widetilde{C}_2 (\|H_1\vec{v}\| + \|\vec{v}\|). 
    \end{equation}
\end{assump}

\cref{assump:commutator_bound} has been used in previous works~\cite{JahnkeLubich2000,HochbruckLubich2003} related to the vector norm error bounds of time-independent Trotter formula and exponential integrators for $H=-\Delta+V(x)$, where $H_1=-\Delta$ is positive semidefinite and $H_2=V(x)$ is a bounded operator.
It will be helpful to understand \cref{assump:commutator_bound} from a continuous analog, in which $H_1=D_1^\dagger D_1$ and $D_1$ is a first-order differential operator. 
A direct calculation shows that the operator 
$$[-\Delta,V] = - \partial_x^2 V - 2 (\partial_x V)\partial_x$$ 
is a first-order differential operator, and 
\begin{equation}\label{eqn:comm_explicit_laplapV}
[-\Delta, [-\Delta, V]]  = \partial_x^4 V + 4 \partial_x^3 V \partial_x + 4 \partial_x^2 V \partial_x^2
\end{equation}
is a second-order differential operator, given that $V(x)$ is a $C^4$ function. 
These are exactly what \cref{eqn:assump_comm1,eqn:assump_comm2} are addressing. Furthermore, if the wavefunction $v$ is smooth enough with bounded derivatives, then the right hand sides of these inequalities are bounded, which provides the key motivation and possibility to establish vector norm error bounds and obtain improvement in complexity estimates. In the context of quantum simulation, since all matrices and vectors are finite dimensional, we omit the explicit statements of regularity assumptions of $v$ below. 


As we have briefly discussed before, starting from the error representations, the vector norm error bounds are just linear combinations of the terms in the form of \cref{eqn:trotter_error_rep_form_vec}, and the key step to prove vector norm error bounds is to exchange the order of a Hamiltonian or an commutator with matrix exponentials. 
We find that such an exchange of order will not introduce any overhead in the error bounds. 
This is established by the following lemma. 

\begin{lemma}\label{lem:vector_norm}
    Under \cref{assump:commutator_bound}, we have the following:
    \begin{enumerate}
        \item For any vector $\vec{v}$, 
        \begin{equation}\label{eqn:D1v_bound}
            \|D_1\vec{v}\| \leq \|H_1\vec{v}\| + \|\vec{v}\|. 
        \end{equation}
      \item Let $\xi$ be any real number such that $(\widetilde{C}_1+\|H_2\|) |\xi|  \leq 1/2$. Then for any vector $\vec{v}$, 
        \begin{equation}\label{eqn:H1v_bound1}
        \|H_1 \exp\left(\I \xi H_2\right)\vec{v}\| \leq 2 (\|H_1 \vec{v}\| + \|\vec{v}\|).
        \end{equation}
        \item Let $K$ be a positive integer, and $H_{l_{k}}$ be either $H_1$ or $H_2$, and $\xi_k$ be some real numbers for $1\le k\le K$. Assume that $(\widetilde{C}_1+\|H_2\|) |\xi_k|h  \leq 1/2$, then for any vector $\vec{v}$, all the following inequalities hold: 
        \begin{equation}\label{eqn:lem_3}
            \begin{split}
                \left\|H_1  \left[\prod_{k=1}^{K}\exp\left(\I h \xi_{k} H_{l_{k}}\right) \right]\vec{v}\right\| & \leq \widetilde{C}(\|H_1\vec{v}\| + \|\vec{v}\|), \\
                \left\|H_2  \left[\prod_{k=1}^{K}\exp\left(\I h \xi_{k} H_{l_{k}}\right) \right]\vec{v}\right\| & \leq \widetilde{C}\|\vec{v}\|,\\
                 \left\|[H_1,H_2]  \left[\prod_{k=1}^{K}\exp\left(\I h \xi_{k} H_{l_{k}}\right) \right]\vec{v}\right\| & \leq \widetilde{C} \left(\sqrt{\|\vec{v}\|\|H_1\vec{v}\|} + \|\vec{v}\|\right),\\
                \left\|[H_1,[H_1,H_2]]  \left[\prod_{k=1}^{K}\exp\left(\I h \xi_{k} H_{l_{k}}\right) \right]\vec{v}\right\| & \leq \widetilde{C}(\|H_1\vec{v}\| + \|\vec{v}\|),\\
                \left\|[H_2,[H_2,H_1]]   \left[\prod_{k=1}^{K}\exp\left(\I h \xi_{k} H_{l_{k}}\right) \right]\vec{v}\right\| & \leq \widetilde{C}(\|H_1\vec{v}\| + \|\vec{v}\|). 
            \end{split}
        \end{equation}
        for some constant $\widetilde{C}>0$ depending only on $\|H_2\|$. 
    \end{enumerate}
\end{lemma}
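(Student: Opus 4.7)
For item 1, the inequality will come directly from the factorization $H_1 = D_1^\dagger D_1$. Writing $\|D_1\vec{v}\|^2 = \langle \vec{v}, H_1\vec{v}\rangle$ and applying Cauchy--Schwarz gives $\|D_1\vec{v}\|^2 \le \|\vec{v}\|\,\|H_1\vec{v}\|$, and an AM--GM step $\sqrt{ab}\le \tfrac12(a+b)$ then yields \cref{eqn:D1v_bound}. This is the only place where the positive-semidefiniteness of $H_1$ enters in a serious way, and the resulting inequality will be our workhorse for converting between $\|D_1\cdot\|$ and $\|H_1\cdot\|$ throughout the rest of the proof.

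For item 2, I will use Duhamel's identity
\[
\bigl[\exp(\I\xi H_2),\,H_1\bigr] = \I\xi\int_0^1 \exp((1-s)\I\xi H_2)\,[H_2,H_1]\,\exp(s\I\xi H_2)\,ds,
\]
which gives
\[
\|H_1\exp(\I\xi H_2)\vec{v}\| \le \|H_1\vec{v}\| + |\xi|\int_0^1 \bigl\|[H_1,H_2]\exp(s\I\xi H_2)\vec{v}\bigr\|\,ds.
\]
Applying \cref{eqn:assump_comm1} to the integrand together with item 1 bounds $\|[H_1,H_2]\exp(s\I\xi H_2)\vec{v}\|$ by $\widetilde{C}_1(\|H_1\exp(s\I\xi H_2)\vec{v}\|+2\|\vec{v}\|)$. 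Introducing $M := \sup_{s\in[0,1]}\|H_1\exp(s\I\xi H_2)\vec{v}\|$ (the sup over the rescaled parameter range), I will rerun the same estimate for each $s\in[0,1]$ to obtain a self-improving inequality $M \le \|H_1\vec{v}\| + \widetilde{C}_1|\xi|(M + 2\|\vec{v}\|)$. The smallness assumption $(\widetilde{C}_1+\|H_2\|)|\xi|\le 1/2$ lets me absorb the $M$-term on the left, producing \cref{eqn:H1v_bound1}. The extra $\|H_2\|$ in the smallness condition will be convenient when iterating this estimate across multiple factors in item 3.

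For item 3, I will proceed by induction on $K$, reading the product from left to right. When the outermost exponential is of $H_1$-type, it commutes with $H_1$ and only introduces trivial $D_1$ conjugations; when it is of $H_2$-type, I apply the bootstrap of item 2. Each step increases the right-hand side by at most a constant factor depending on $\|H_2\|$ and the $\widetilde{C}_i$, so after $K$ steps the bound $\|H_1\prod_k\exp(\I h\xi_k H_{l_k})\vec{v}\|\le \widetilde{C}(\|H_1\vec{v}\|+\|\vec{v}\|)$ emerges. The $H_2$-bound is immediate from $\|H_2\vec{w}\|\le\|H_2\|\|\vec{v}\|$ using unitarity of the product. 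For $[H_1,H_2]$, \cref{eqn:assump_comm1} reduces the question to bounding $\|D_1\vec{w}\|$, which by item 1 is $\le\sqrt{\|\vec{w}\|\,\|H_1\vec{w}\|}=\sqrt{\|\vec{v}\|\,\|H_1\vec{w}\|}$, and inserting the $H_1$-bound just derived gives the desired $\sqrt{\|\vec{v}\|\|H_1\vec{v}\|}+\|\vec{v}\|$ scaling. For $[H_1,[H_1,H_2]]$, assumption \cref{eqn:assump_comm2} applied to $\vec{w}$ directly reduces to the already-established $H_1$-bound.

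The main obstacle will be $[H_2,[H_2,H_1]]$, which is not covered by \cref{assump:commutator_bound}. I will handle it by expanding
\[
[H_2,[H_2,H_1]]\vec{w} = H_2\,[H_2,H_1]\vec{w} - [H_2,H_1]\,H_2\vec{w},
\]
which reduces everything to estimating $[H_1,H_2]$ applied to either $\vec{w}$ or $H_2\vec{w}$. The first piece is the bound just established. For the second, I will treat $H_2\vec{w}$ as a new vector $\vec{u}$ with $\|\vec{u}\|\le\|H_2\|\|\vec{v}\|$, apply \cref{eqn:assump_comm1} to get $\|[H_1,H_2]\vec{u}\|\le\widetilde{C}_1(\|D_1\vec{u}\|+\|\vec{u}\|)$, and then bound $\|D_1 H_2\vec{w}\|^2 = \langle H_2\vec{w}, H_1 H_2\vec{w}\rangle$ by using $H_1H_2 = H_2 H_1 + [H_1,H_2]$, Cauchy--Schwarz, and the already-proven $H_1$- and $[H_1,H_2]$-bounds. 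A final AM--GM cleanup converts the resulting $\sqrt{\|\vec{v}\|(\|H_1\vec{v}\|+\|\vec{v}\|)}$ factor into the target $\|H_1\vec{v}\|+\|\vec{v}\|$ form, completing the chain.
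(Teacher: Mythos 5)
Your plan is correct and follows essentially the same route as the paper's proof: item 1 via the quadratic form $\|D_1\vec{v}\|^2=\langle\vec{v},H_1\vec{v}\rangle$ and Cauchy--Schwarz, item 2 via an integral representation plus an absorption argument, and item 3 by reducing every commutator bound to the $H_1$ bound, peeling off exponentials one at a time, and expanding $[H_2,[H_2,H_1]]$ so that \cref{eqn:assump_comm1} applies to $H_2\vec{w}$. The only (immaterial) difference is that you use the Duhamel commutator identity with a supremum-absorption step where the paper Taylor-expands $\exp(\I t\xi H_2)$ and invokes Gronwall's inequality; both yield the same constant under the smallness condition on $\xi$.
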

\begin{proof}
    1. By the definition of $D_1$ and the Cauchy-Schwarz inequality,
    \begin{equation}\label{eqn:D1v_step}
            \|D_1\vec{v}\|^2 = (D_1\vec{v})^{\dagger}D_1\vec{v} =  \vec{v}^{\dagger} H_1 \vec{v}
             \leq \|\vec{v}\|\|H_1 \vec{v}\| 
             \leq (\|H_1 \vec{v}\| + \|\vec{v}\|)^2. 
    \end{equation}
    
      2. We start with Taylor's theorem of $\exp(\I t \xi  H_2)$ up to first-order, then 
    \begin{equation*}
    \begin{split}
        H_1\exp(\I t \xi  H_2 ) \vec{v} &= H_1 \vec{v} + \int_0^t \I \xi H_1H_2 \exp\left(\I  \alpha \xi H_2\right) \vec{v} d\alpha \\
        &= H_1 \vec{v} + \int_0^t \I \xi [H_1,H_2] \exp\left(\I \alpha \xi H_2\right) \vec{v} d\alpha + \int_0^t \I  \xi H_2H_1 \exp\left(\I \alpha \xi H_2\right) \vec{v} d\alpha.
    \end{split}
    \end{equation*}
    The norm can be estimated using \cref{eqn:assump_comm1} as
    \begin{equation} \label{eq: lemma5_pf_h1exph2}
    \begin{split}
        \left\| H_1\exp(\I t\xi  H_2 ) \vec{v} \right\| 
        &\leq  \| H_1 \vec{v}\| + \int_0^t |\xi|\| [H_1,H_2] \exp\left(\I \alpha \xi H_2\right)\vec{v}\| d\alpha + \int_0^t |\xi| \| H_2\| \left\| H_1 \exp\left(\I \alpha \xi H_2\right)\vec{v} \right\| d\alpha \\
        &\leq \| H_1\vec{v}\| + \widetilde{C}_1\|\xi\|\|\vec{v}\| t + \int_0^t \widetilde{C}_1|\xi|\| D_1 \exp\left(\I \alpha \xi H_2\right)\vec{v}\| d\alpha  \\
        & \quad + \int_0^t |\xi| \| H_2\| \left\| H_1 \exp\left(\I \alpha \xi H_2\right)\vec{v} \right\| d\alpha.
    \end{split}
    \end{equation}
Define $M(t) : = \left\| H_1\exp(\I t\xi  H_2 ) \vec{v} \right\|$, and it follows from \cref{eqn:D1v_bound} that
    \begin{equation*}
    \begin{split}
    \| D_1 \exp\left(\I t \xi H_2\right)\vec{v}\| \leq M(t) + \|\vec{v}\|. 
    \end{split}
    \end{equation*}
    Thus \cref{eq: lemma5_pf_h1exph2} can be rewritten as
    \begin{equation*}
        M(t) \leq \|H_1 \vec{v}\| + 2\widetilde{C}_1 |\xi| \| \vec{v}\| t + \int_0^t |\xi| \left(\widetilde{C}_1 +\|H_2\| \right) M(\alpha) \, d\alpha.
    \end{equation*}
 Since $\|H_1 \vec{v}\| + 2\widetilde{C}_1 |\xi| \| \vec{v}\| t$ is non-decreasing with respect to $t$, applying Gronwall's inequality yields the bound
 \begin{equation*}
 M(t) \leq \left(\|H_1 \vec{v}\| + 2\widetilde{C}_1 |\xi| \| \vec{v}\| t\right) \exp\left( |\xi| (\widetilde{C}_1 +\|H_2\| ) t \right).
 \end{equation*}
 Finally taking $t = 1$ and applying the condition on $\xi$, the desired result is achieved
 \begin{equation*}
\left\| H_1\exp(\I \xi  H_2 )\vec{v} \right\|  \leq \left(\|H_1 \vec{v}\| + 2\widetilde{C}_1 |\xi| \| \vec{v}\| \right) \exp\left( |\xi| (\widetilde{C}_1 +\|H_2\| )  \right) \leq 2(\|H_1 \vec{v}\| +  \|\vec{v}\|). 
 \end{equation*}  

 3. We first show that it suffices to only prove the first inequality in \cref{eqn:lem_3}. 
 Let $\vec{w} = \left[\prod_{k=1}^{K}\exp\left(\I h \xi_{k} H_{l_{k}}\right) \right]\vec{v}$. Since $H_2$ is bounded, $\|H_2 \vec{w}\|$ is directly bounded by $\widetilde{C}\|\vec{v}\|$ . 
By \cref{eqn:assump_comm1,eqn:D1v_step}, and the fact $\norm{\vec{w}}=\norm{\vec{v}}$, we have
 \begin{equation}
     \|[H_1,H_2]\vec{w}\| \leq \widetilde{C}_1(\|D_1\vec{w}\| + \|\vec{w}\|) 
      \leq \widetilde{C}\left(\sqrt{\|\vec{w}\|\|H_1\vec{w}\|}+\|\vec{w}\|\right) 
      = \widetilde{C}\left(\sqrt{\|\vec{v}\|\|H_1\vec{w}\|}+\|\vec{v}\|\right). 
 \end{equation}
 Similarly \cref{eqn:assump_comm2} gives 
 \begin{equation}
     \|[H_1,[H_1,H_2]]\vec{w}\| \leq \widetilde{C}_2(\|H_1\vec{w}\| + \|\vec{w}\|) = \widetilde{C}_2(\|H_1\vec{w}\| + \|\vec{v}\|).
 \end{equation}
 Furthermore, 
 \begin{equation}
 \begin{split}
     \|[H_2,[H_2,H_1]]\vec{w}\| &\leq \|H_2[H_2,H_1]\vec{w}\| + \|[H_2,H_1]H_2\vec{w}\| \\
     & \leq \widetilde{C} \|[H_2,H_1]\vec{w}\| + \widetilde{C}_1(\|D_1H_2\vec{w}\| + \|H_2\vec{w}\|) \\
     & \leq \widetilde{C}(\|D_1\vec{w}\| + \|\vec{w}\|) + \widetilde{C}(\|H_1H_2\vec{w}\|+\|H_2\vec{w}\|) \\
     & \leq \widetilde{C}(\|H_1\vec{w}\|+\|\vec{w}\|) + \widetilde{C}(\|[H_1,H_2]\vec{w}\|+\|H_2H_1\vec{w}\| + \|\vec{w}\|) \\
     & \leq \widetilde{C} (\|H_1\vec{w}\|+\|\vec{w}\|) + \widetilde{C} (\|D_1\vec{w}\| + \|\vec{w}\|)\\
     & \leq \widetilde{C} (\|H_1\vec{w}\|+\|\vec{w}\|)  = \widetilde{C} (\|H_1\vec{w}\|+\|\vec{v}\|). 
 \end{split}
 \end{equation}
 Therefore we only need to bound $\|H_1\vec{w}\|$ further by $\widetilde{C}(\|H_1\vec{v}\|+\|\vec{v}\|)$. 
 
 Notice that $\|H_1\exp(\I\xi H_1)\vec{v}\| = \|\exp(\I\xi H_1)H_1\vec{v}\| = \|H_1\vec{v}\| $, together with \cref{eqn:H1v_bound1},  
 \begin{equation}
     \|H_1\exp(\I\xi H_l)\vec{v}\| \leq 2(\|H_1\vec{v}\| + \|\vec{v}\|)
 \end{equation}
 for $H_l$ being either $H_1$ or $H_2$. 
 Then we have the recursive relation
 \begin{equation}
     \begin{split}
         \left\|H_1\left[\prod_{k=1}^{K}\exp\left(\I h \xi_{k} H_{l_{k}}\right) \right]\vec{v}\right\| & \leq 2\left\|H_1\left[\prod_{k=1}^{K-1}\exp\left(\I h \xi_{k} H_{l_{k}}\right) \right]\vec{v}\right\| + 2\left\|\left[\prod_{k=1}^{K-1}\exp\left(\I h \xi_{k} H_{l_{k}}\right) \right]\vec{v}\right\| \\
         & = 2\left\|H_1\left[\prod_{k=1}^{K-1}\exp\left(\I h \xi_{k} H_{l_{k}}\right) \right]\vec{v}\right\| + 2\left\|\vec{v}\right\|.
     \end{split}
 \end{equation}
 By applying this estimation $K$ times, we obtain the desired result 
 \begin{equation}
         \left\|H_1\left[\prod_{k=1}^{K}\exp\left(\I h \xi_{k} H_{l_{k}}\right) \right]\vec{v}\right\| \leq \widetilde{C} \left(\|H_1\vec{v}\| + \|\vec{v}\|\right). 
 \end{equation}

\end{proof}

Now we are ready to state our main theorems for the vector norm estimates.

\begin{theorem}\label{thm:Trotter_vector_norm_error_bound_local}
    For any vector $v$ and time step size $h \leq (\|f_1\|_{\infty}+\|f_2\|_{\infty})^{-1}(\widetilde{C}_1+\|H_2\|)^{-1}/2$, there exists a constant $\widetilde{C}$ such that 
    \begin{align*}
        \|U_{s,1}(h,0)\vec{v}-U(h,0)\vec{v}\| & \leq \widetilde{C}h^2(\|H_1\vec{v}\| + \|\vec{v}\|), \\ 
        \|U_{g,1}(h,0)\vec{v}-U(h,0)\vec{v}\| & \leq \widetilde{C}h^2(\sqrt{\|\vec{v}\|\|H_1\vec{v}\|} + \|\vec{v}\|), \\
        \|U_{s,2}(h,0)\vec{v}-U(h,0)\vec{v}\| & \leq \widetilde{C}h^3(\|H_1\vec{v}\| + \|\vec{v}\|), \\
        \|U_{g,2}(h,0)\vec{v}-U(h,0)\vec{v}\| & \leq \widetilde{C}h^3(\|H_1\vec{v}\| + \|\vec{v}\|). 
    \end{align*}
\end{theorem}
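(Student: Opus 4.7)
The strategy is to apply each of the four exact error representations in Lemmas~\ref{lem:error_rep_s_1st}--\ref{lem:error_rep_g_2nd} to $\vec{v}$ and estimate the resulting vector norm with \cref{lem:vector_norm}. In every representation, the integrand has the form $U(h,s)\,W_L(s)\,E(s)\,W_R(s)\,\vec{v}$, where $U(h,s)$ and each factor of $W_L(s), W_R(s)$ is a unitary matrix exponential. Unitarity lets us strip $U(h,s)$ and $W_L(s)$ without increasing the norm, reducing the task to bounding $\int_0^h \norm{E(s)\,W_R(s)\,\vec{v}}\,ds$. The hypothesis on $h$ is chosen precisely so that every exponent appearing in $W_R(s)$ satisfies the magnitude condition $(\widetilde C_1+\|H_2\|)|\xi_k|h\le 1/2$ required by \cref{lem:vector_norm}.

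For the first-order standard Trotter formula, $E_{s,1}(s)$ decomposes into four pieces: two $\mathrm{ad}$-conjugations of $[H_1,H_2]$ carrying at least one explicit factor of $s$, together with the linear-in-$s$ terms $-\I s f_1'(s)H_1$ and $-\I s f_2'(s)H_2$. Each piece, acting on $W_R(s)\vec{v}$, fits the template \cref{eqn:trotter_error_rep_form_vec} with middle factor $A\in\{H_1,H_2,[H_1,H_2]\}$. Applying the corresponding inequalities from \cref{eqn:lem_3} and using $\sqrt{ab}\le(a+b)/2$ to absorb the square-root piece arising from $[H_1,H_2]$ into the linear piece, I obtain $\norm{E_{s,1}(s)\,W_R(s)\vec{v}}\le \widetilde C s\,(\norm{H_1\vec{v}}+\norm{\vec{v}})$, and integration over $s\in[0,h]$ yields the claimed $\widetilde C h^2(\norm{H_1\vec{v}}+\norm{\vec{v}})$.

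For the first-order generalized formula, $E_{g,1}(s)$ consists only of an $\mathrm{ad}$-conjugation of $[H_1,H_2]$ with an $O(s)$ prefactor. Applying only the third inequality of \cref{eqn:lem_3}, \emph{without} invoking AM-GM, gives the sharper bound $\norm{E_{g,1}(s)\,W_R(s)\vec{v}}\le \widetilde C s\,(\sqrt{\norm{\vec{v}}\norm{H_1\vec{v}}}+\norm{\vec{v}})$, which integrates to $\widetilde C h^2(\sqrt{\norm{\vec{v}}\norm{H_1\vec{v}}}+\norm{\vec{v}})$. For the second-order schemes, the kernels $E_{s,2}(s)$ and $E_{g,2}(s)$ defined in \cref{eqn:Es2,eqn:Eg2} are $O(s^2)$ linear combinations of template terms with middle factor $A$ drawn from the full list $\{H_1,H_2,[H_1,H_2],[H_1,[H_1,H_2]],[H_2,[H_2,H_1]]\}$. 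Each choice of $A$ is covered by one of the five inequalities in \cref{eqn:lem_3}, and combined once more with AM-GM on any bare-commutator contribution, this produces $\norm{E(s)\,W_R(s)\vec{v}}\le \widetilde C s^2(\norm{H_1\vec{v}}+\norm{\vec{v}})$ and hence $\widetilde C h^3(\norm{H_1\vec{v}}+\norm{\vec{v}})$ after integration.

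The main obstacle is essentially bookkeeping: $E_{s,2}$ in particular is a long sum, and one must inspect every summand to confirm that it matches \cref{eqn:trotter_error_rep_form_vec} with $A$ in the admissible list, so that \cref{lem:vector_norm} is actually applicable. Once this structural check is done term by term, and the $h$-smallness hypothesis is tracked through each exponent so that the hypothesis of \cref{lem:vector_norm} holds uniformly, the rest reduces to elementary integration in $s$ and polynomial-in-constants bookkeeping hidden inside $\widetilde C$.
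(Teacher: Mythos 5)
Your proposal is correct and follows essentially the same route as the paper's own proof: apply the exact error representations to $\vec{v}$, discard the left-multiplied unitaries by unitarity, recognize each summand as an instance of \cref{eqn:trotter_error_rep_form_vec}, and invoke part 3 of \cref{lem:vector_norm}, keeping the square-root bound only for $U_{g,1}$ since its kernel involves only $[H_1,H_2]$. The step-size hypothesis and the absorption of higher-order-in-$h$ commutator terms are handled exactly as in the paper, so there is nothing to add.
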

\begin{proof}
    We start with the error representations (\cref{lem:error_rep_s_1st,lem:error_rep_g_1st,lem:error_rep_s_2nd,lem:error_rep_g_2nd}). 
    By multiplying a vector $\vec{v}$ on the right of the error representations, bounding all the $f_j^{(k)}$ by its supremum, bounding all the higher order terms involving the (nested) commutators by second-order terms, and bounding all the unitaries multiplied on the left by $1$, we obtain  
    \begin{align*}
        \|U_{s,1}(h,0)\vec{v}-U(h,0)\vec{v}\| & \leq \widetilde{C} \theta_{s,1} h^2, \\
        \|U_{g,1}(h,0)\vec{v}-U(h,0)\vec{v}\| & \leq \widetilde{C} \theta_{g,1} h^2, \\
        \|U_{s,2}(h,0)\vec{v}-U(h,0)\vec{v}\| & \leq \widetilde{C} \theta_{s,2} h^3, \\
        \|U_{g,2}(h,0)\vec{v}-U(h,0)\vec{v}\| & \leq \widetilde{C} \theta_{g,2} h^3, 
    \end{align*}
    where $\theta_s$ and $\theta_g$ are linear combinations of the terms in the form of \cref{eqn:lem_3}, with an exception that $\theta_{g,1}$ only consists terms like $\|[H_1,H_2]\vec{w}\|$. 
    Then part 3 of \cref{lem:vector_norm} completes the proof. 
\end{proof}

Similar to the operator norm error bound case, from \cref{thm:Trotter_vector_norm_error_bound_local}, we can establish the global error bound and estimate the total number of time steps we need to achieve a desired accuracy using standard and generalized Trotter formulae. The proof of the global error bound is essentially the same as the standard argument for error accumulation in quantum computing, that is, to replace the exact evolution operator by numerical evolution operator step by step and bound each step by local error bound. 
In the operator norm case, it does not matter whether we replace the local evolution operator in a forward or backward fashion. 
However, in the vector norm case, the order of the replacements indeed matters. In particular, we would like to obtain an error bound that depends on the exact, instead of the numerical solution of the dynamics. 
We state our vector norm global error bound in \cref{thm:vector_norm_bound}, and provide a complete proof for the second-order generalized Trotter formula. 
\begin{theorem}\label{thm:vector_norm_bound}
    Let $T > 0$ be the evolution time, $\vec{\psi}(t)$ be the exact solution of the dynamics \cref{eqn:td_hamsim}, and the dynamics \cref{eqn:td_hamsim} is discretized via standard and generalized Trotter formulae with $L$ time steps such that the time step size $h=T/L$ is bounded by $(\|f_1\|_{\infty}+\|f_2\|_{\infty})^{-1}(\widetilde{C}_1+\|H_2\|)^{-1}/2$. 
    Then there exists a constant $\widetilde{C}$ such that 
    \begin{align}
        &\left\|\left(\prod_{l=1}^L U_{s,1}\left(\frac{lT}{L},\frac{(l-1)T}{L}\right)\right)\vec{\psi}(0) - U(T,0)\vec{\psi}(0)\right\| \leq \widetilde{C} \frac{T^2}{L} \left(\sup_{t\in[0,T]}\|H_1\vec{\psi}(t)\|+\|\vec{\psi}(0)\|\right),\\
        &\left\|\left(\prod_{l=1}^L U_{g,1}\left(\frac{lT}{L},\frac{(l-1)T}{L}\right)\right)\vec{\psi}(0) - U(T,0)\vec{\psi}(0)\right\| \leq \widetilde{C} \frac{T^2}{L} \left(\sup_{t\in[0,T]}\sqrt{\|\vec{\psi}(0)\|\|H_1\vec{\psi}(t)\|}+\|\vec{\psi}(0)\|\right),\\
        &\left\|\left(\prod_{l=1}^L U_{s,2}\left(\frac{lT}{L},\frac{(l-1)T}{L}\right)\right)\vec{\psi}(0) - U(T,0)\vec{\psi}(0)\right\| \leq \widetilde{C} \frac{T^3}{L^2} \left(\sup_{t\in[0,T]}\|H_1\vec{\psi}(t)\|+\|\vec{\psi}(0)\|\right),\\
        &\left\|\left(\prod_{l=1}^L U_{g,2}\left(\frac{lT}{L},\frac{(l-1)T}{L}\right)\right)\vec{\psi}(0) - U(T,0)\vec{\psi}(0)\right\| \leq \widetilde{C} \frac{T^3}{L^2} \left(\sup_{t\in[0,T]}\|H_1\vec{\psi}(t)\|+\|\vec{\psi}(0)\|\right).\label{eqn:vectornorm_ug2}
    \end{align}
\end{theorem}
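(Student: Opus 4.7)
The plan is to apply the standard telescoping argument for error accumulation, but with careful attention to the order in which we swap numerical and exact propagators, so that the final bound depends on $\|H_1\vec{\psi}(t)\|$ for the \emph{exact} solution rather than on the numerical trajectory (whose norm under $H_1$ we have no way to control a priori). Denote by $V_k$ the one-step numerical propagator from $t_{k-1}$ to $t_k$ (with $t_k = kT/L$) for whichever of the four Trotter schemes is being analyzed, and by $U_k = U(t_k,t_{k-1})$ the corresponding exact propagator. The telescoping identity
\begin{equation*}
\prod_{l=1}^{L} V_l \;-\; U(T,0) \;=\; \sum_{k=1}^{L}\Bigl(\prod_{l=k+1}^{L} V_l\Bigr)(V_k - U_k)\Bigl(\prod_{l=1}^{k-1} U_l\Bigr)
\end{equation*}
is chosen precisely so that the exact propagators sit on the right. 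Applying both sides to $\vec{\psi}(0)$ and using $\bigl(\prod_{l=1}^{k-1} U_l\bigr)\vec{\psi}(0) = \vec{\psi}(t_{k-1})$ turns the sum into $\sum_{k=1}^L (\prod_{l=k+1}^L V_l)(V_k-U_k)\vec{\psi}(t_{k-1})$.

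Next I take norms. Each $V_l$ is unitary (it is a product of matrix exponentials of Hermitian operators), so the leftmost products contribute no growth and the triangle inequality gives
\begin{equation*}
\Bigl\|\Bigl(\prod_{l=1}^{L} V_l - U(T,0)\Bigr)\vec{\psi}(0)\Bigr\| \;\leq\; \sum_{k=1}^{L} \|(V_k - U_k)\vec{\psi}(t_{k-1})\|.
\end{equation*}
Now \cref{thm:Trotter_vector_norm_error_bound_local} is applied on $[t_{k-1},t_k]$ with $\vec{v}=\vec{\psi}(t_{k-1})$. For the second-order generalized Trotter this yields $\|(V_k-U_k)\vec{\psi}(t_{k-1})\|\le \widetilde{C}h^3(\|H_1\vec{\psi}(t_{k-1})\|+\|\vec{\psi}(t_{k-1})\|)$, and analogously for the three other schemes (with $h^2$ in the first-order cases and $\sqrt{\|\vec{\psi}(t_{k-1})\|\|H_1\vec{\psi}(t_{k-1})\|}$ replacing $\|H_1\vec{\psi}(t_{k-1})\|$ in the first-order generalized case).

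Finally, using unitarity of the exact evolution to obtain $\|\vec{\psi}(t_{k-1})\|=\|\vec{\psi}(0)\|$, bounding $\|H_1\vec{\psi}(t_{k-1})\|\leq\sup_{t\in[0,T]}\|H_1\vec{\psi}(t)\|$, summing the $L$ identical bounds, and substituting $h=T/L$ produces the four inequalities in the statement — for example $L\cdot \widetilde{C}(T/L)^3(\sup_t\|H_1\vec{\psi}(t)\|+\|\vec{\psi}(0)\|) = \widetilde{C}(T^3/L^2)(\sup_t\|H_1\vec{\psi}(t)\|+\|\vec{\psi}(0)\|)$ for the second-order cases.

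The only genuinely delicate point is the direction of the telescoping: the symmetric choice that replaces numerical propagators from right to left would leave the numerical wavefunction $\widetilde{\vec{\psi}}(t_{k-1})$ inside each local error factor, and the vector-norm local estimates would then involve $\|H_1\widetilde{\vec{\psi}}(t_{k-1})\|$, which cannot be controlled without a separate a priori stability argument on the Trotter iterates. Telescoping left-to-right (exact-on-the-right) sidesteps this difficulty entirely, and the remainder of the proof is routine accumulation of the four local bounds already established in \cref{thm:Trotter_vector_norm_error_bound_local}.
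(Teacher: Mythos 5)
Your proposal is correct and follows essentially the same route as the paper's proof: the same telescoping with the exact propagators on the right so that each local error acts on the exact state $\vec{\psi}(t_{k-1})$, unitarity of the numerical propagators, the local bounds of \cref{thm:Trotter_vector_norm_error_bound_local}, and $\|\vec{\psi}(t_{k-1})\|=\|\vec{\psi}(0)\|$. Your closing remark on why the direction of the telescoping matters is exactly the point the paper emphasizes before stating the theorem.
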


\paragraph{Remark} 
Although the error for the first-order generalized Trotter formula can also be bounded by $\sup_{t\in[0,T]}\|H_1\vec{\psi}(t)\|+\|\vec{\psi}(0)\|$ as the other schemes by a direct application of the Cauchy-Schwarz inequality, we keep it as $\sup_{t\in[0,T]}\sqrt{\|\vec{\psi}(0)\|\|H_1\vec{\psi}(t)\|}+\|\vec{\psi}(0)\|$ which promotes a better dependence on $\|H_1\vec{\psi}(t)\|$. This improvement is achievable only for the first-order generalized Trotter formula  since its error only contains terms depending on $[H_1, H_2]$ which process a better bound as in \cref{eqn:lem_3}, while the other schemes also contains terms depending on $H_1$ and/or the nested commutators $[H_1, [H_1, H_2]]$ and $[H_1, [H_2, H_1]]$.
\begin{proof}

    Here we only present the proof for second-order generalized Trotter formula \cref{eqn:vectornorm_ug2}. 
    The other three cases can be proved using the same approach. 
    
    For the second-order generalized Trotter formula, according to \cref{thm:Trotter_vector_norm_error_bound_local} and notice that $\|\vec{\psi}(t)\|=\|\vec{\psi}(0)\|$ for all $t\in[0,T]$, we obtain
    \begin{equation}
        \begin{split}
            &\quad \left\|\left(\prod_{l=1}^L U_{g,2}\left(\frac{lT}{L},\frac{(l-1)T}{L}\right)\right)\vec{\psi}(0) - U(T,0)\vec{\psi}(0)\right\| \\
            & = \left\|\left(\prod_{l=1}^L U_{g,2}\left(\frac{lT}{L},\frac{(l-1)T}{L}\right)\right)\vec{\psi}(0) - \left(\prod_{l=1}^L U\left(\frac{lT}{L},\frac{(l-1)T}{L}\right)\right)\vec{\psi}(0)\right\|\\
            & \leq \sum_{k=1}^{L}\Bigg\|\left(\prod_{l=k+1}^L U_{g,2}\left(\frac{lT}{L},\frac{(l-1)T}{L}\right)\right)\left(\prod_{l=1}^{k} U\left(\frac{lT}{L},\frac{(l-1)T}{L}\right)\right)\vec{\psi}(0) \\
            & \quad\quad\quad\quad - \left(\prod_{l=k}^L U_{g,2}\left(\frac{lT}{L},\frac{(l-1)T}{L}\right)\right)\left(\prod_{l=1}^{k-1} U\left(\frac{lT}{L},\frac{(l-1)T}{L}\right)\right)\vec{\psi}(0)\Bigg\|\\
            &  \leq \sum_{k=1}^{L}\left\|\left(\prod_{l=1}^{k} U\left(\frac{lT}{L},\frac{(l-1)T}{L}\right)\right)\vec{\psi}(0) 
            -  U_{g,2}\left(\frac{kT}{L},\frac{(k-1)T}{L}\right)\left(\prod_{l=1}^{k-1} U\left(\frac{lT}{L},\frac{(l-1)T}{L}\right)\right)\vec{\psi}(0)\right\| \\
            & = \sum_{k=1}^{L}\left\|\left( U\left(\frac{kT}{L},\frac{(k-1)T}{L}\right) - U_{g,2}\left(\frac{kT}{L},\frac{(k-1)T}{L}\right)\right) 
            \vec{\psi}((k-1)T/L)\right\| \\
            & \leq \widetilde{C} \frac{T^3}{L^3} \sum_{k=1}^{L}\left(\left\|H_1\vec{\psi}((k-1)T/L)\right\| + \left\|\vec{\psi}((k-1)T/L)\right\|\right) \\
            & \leq \widetilde{C} \frac{T^3}{L^2} \left(\sup_{t\in[0,T]}\left\|H_1\vec{\psi}(t)\right\| + \left\|\vec{\psi}(0)\right\|\right). 
        \end{split}
    \end{equation}

\end{proof}


Now we compare the error bounds in terms of operator norm  (\cref{thm:operator_norm_bound}) with those in terms of vector norm (\cref{thm:vector_norm_bound}). 
We notice that the scalings with respect to $T$ and $L$ are the same for schemes of the same order, and the difference is in the dependence of the preconstants on $H_1$ and $H_2$. 
More precisely, the operator norm error bounds still depend on $\|H_1\|$ for standard Trotter formula and \REV{depend} on norms of commutators like $\|[H_1,[H_1,H_2]]\|$ for generalized Trotter formula. 
On the other hand, the vector norm error bounds only depend on $\norm{H_2}$, and the dependence on $H_1$ only appears in the form of $\|H_1\vec{\psi}\|$. 
Such a difference implies that the vector norm bounds can be much sharper than the operator norm bounds when $\norm{H_1}$ \REV{is} very large, but $\norm{H_1\vec{\psi}}$ and $\norm{H_2}$ are relatively small. 
We will show later that this is indeed the case for the model of interest in \cref{eqn:schrodinger_tdmass}.
Furthermore, the difference in the error bounds can influence the scaling of total required Trotter steps with respect to  the accuracy $\epsilon$. 

\section{Application to Schr\"odinger equation with time-dependent effective mass and frequency}\label{sec:harmonic}

The model of the Schr\"odinger equation with a time-dependent effective mass and frequency in \cref{eqn:schrodinger_tdmass} has been studied in many works \cite{DantasPedrosaEtAl1992,PedrosaGuedes1997,Pedrosa1997,JiKimEtAl1995,Feng2001,SchulzeHalberg2005}. 
Our goal is to study the complexity to obtain an $\epsilon$-approximation of the wavefunction at time $T\sim \Or(1)$, where $D=[0,1]$ with periodic boundary conditions. Throughout the section we make the following assumptions. 
\begin{enumerate}
    \item $M_{\text{eff}}(t)$ is positive function, and is uniformly bounded from below.

    \item $M_{\text{eff}}(t),\omega(t)$ are second-order continuously differentiable functions in $t$ with uniformly bounded function and derivative values up to second order. 
    \item $V(x)$ is a fourth-order continuously differentiable function in $x$ with bounded function and derivative values up to fourth order.
\end{enumerate}
Here the fourth order derivative of $V(x)$ is required when estimating the errors of the second-order formulae in the operator norm. To be specific, it guarantees the nested commutator $[H_1, [H_1, H_2]]$ in its spatial discretization with $n$ spatial grids to have an operator norm bounded by $n^2$ (instead of $n^4$). Without going into details of the discretization, which will be presented in the proofs, here we provide an intuition of this requirement on the continuous level -- the presence of the fourth derivative of $V$ in $[-\Delta, [-\Delta, V]]$ as in \cref{eqn:comm_explicit_laplapV}.
Since the control functions and potential are bounded, throughout we will not track explicitly the dependence on them and absorb them into the preconstant denoted by $\widetilde{C}$ or the big-O notation $\Or$ in our estimates. 

We discretize the dynamics \cref{eqn:schrodinger_tdmass} as follows. 
First we perform spatial discretization using a central finite difference scheme with $n$ equidistant nodes $x_k = k/n, 0 \leq k \leq n-1$. 
Then the semi-discretized dynamics becomes $\I \partial_t \vec{\psi}(t) = H(t)\vec{\psi}(t)$. 
Here the $k$-th entry of $\vec{\psi}(t)$ will be an approximation of the exact wavefunction evaluated at $t$ and $x = (k-1)/n$. 
$H(t) = f_1(t)H_1 + f_2(t)H_2$ with 
\begin{equation}\label{eqn:H_1_fd}
    H_1 = n^2 \left(\begin{array}{ccccc}
        2 & -1 & & & -1 \\
         -1& 2 & -1 & & \\
          & \ddots& \ddots& \ddots& \\
           & & -1& 2 & -1\\
        -1& & & -1 & 2 \\
    \end{array}\right), 
\end{equation}
and 
\begin{equation}\label{eqn:H_2_fd}
    H_2 = \text{diag}(V(0),V(1/n),\cdots,V((n-1)/n)).
\end{equation}
The standard or generalized Trotter formulae are used to discretize the dynamics in time with equidistant time steps and obtain numerical approximation of the wavefunction.

Furthermore, the $H_1$ and $H_2$ under central finite difference scheme using $n$ equidistant nodes satisfy \cref{assump:commutator_bound} with $H_1 = D_1^{\dagger}D_1$, 
\begin{equation} \label{eqn:d_1_fd}
    D_1 =  n \left(\begin{array}{ccccc}
        -1 &  & & & 1 \\
         1& -1 &  & & \\
          & \ddots& \ddots& & \\
           & & 1& -1 & \\
         & & & 1 & -1 
    \end{array}\right), 
\end{equation}
and therefore the vector norm error bounds proved in \cref{sec:vector_norm} can be applied. 
This can be verified by straightforward but somewhat tedious matrix computations. We formally state the result in \cref{lem:harmonic_verification_assumption}, and its proof is given in \cref{append:proof_assumption}.

\begin{lemma}\label{lem:harmonic_verification_assumption}
    Consider $H_1$ and $H_2$ defined in \cref{eqn:H_1_fd,eqn:H_2_fd}, 
    then \cref{assump:commutator_bound} is satisfied under the rescaled 2-norm $\|\cdot\|_{\star}$ with $D_1$ defined in \cref{eqn:d_1_fd}. 
\end{lemma}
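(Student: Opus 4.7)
The plan is as follows. Since the rescaled 2-norm $\|\cdot\|_\star$ differs from the standard 2-norm only by the scalar factor $1/\sqrt{n}$, and \cref{assump:commutator_bound} is homogeneous in $\vec{v}$, it suffices to establish the two inequalities in the standard 2-norm. The proof then exploits the algebraic identity $D_1 = n(S - I)$, where $S$ is the cyclic shift $(S\vec{v})_i = v_{i-1}$. Because $S$ is unitary, $D_1^\dagger = n(S^{-1} - I)$, and a direct calculation gives $D_1^\dagger D_1 = n^2(2I - S - S^{-1}) = H_1$, verifying the factorization.

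For the first commutator bound, I would start from the Leibniz identity
\begin{equation*}
    [H_1, H_2] = D_1^\dagger [D_1, H_2] + [D_1^\dagger, H_2] D_1.
\end{equation*}
A direct computation gives $([D_1, H_2]\vec{v})_i = n(V(x_{i-1}) - V(x_i)) v_{i-1}$, which by the mean value theorem is bounded entrywise by $\|V'\|_\infty |v_{i-1}|$; hence $\|[D_1, H_2]\| \leq \|V'\|_\infty$ uniformly in $n$, and analogously for $[D_1^\dagger, H_2]$. This already handles the second term: $\|[D_1^\dagger, H_2]\,D_1 \vec{v}\| \leq \|V'\|_\infty \|D_1\vec{v}\|$. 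For the first term, I would factor $[D_1, H_2] = -\widetilde{V}'\,S$, where $\widetilde{V}'$ is the diagonal matrix with entries $n(V(x_i) - V(x_{i-1}))$, and commute $D_1^\dagger$ past $\widetilde{V}'$ via the same Leibniz trick (which requires $V \in C^2$ to keep $[D_1^\dagger, \widetilde{V}']$ bounded by $\|V''\|_\infty$). The key algebraic cancellation
\begin{equation*}
    D_1^\dagger S = n(S^{-1} - I)S = n(I - S) = -D_1
\end{equation*}
then replaces the leading $D_1^\dagger$ by $D_1$ acting on $\vec{v}$ on the right, producing $\|D_1^\dagger [D_1, H_2]\vec{v}\| \leq \widetilde{C}(\|D_1\vec{v}\| + \|\vec{v}\|)$. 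Summing the two contributions gives \cref{eqn:assump_comm1}.

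For the nested commutator $[H_1,[H_1,H_2]]$, I would apply the same strategy one layer deeper. Iterating the Leibniz identities on the decomposition above, and systematically using $D_1^\dagger S = -D_1$ to migrate any surplus $D_1^\dagger$ back to a $D_1$ on the right, leads to an expansion of the form
\begin{equation*}
    [H_1,[H_1,H_2]] = A_0 + A_1 D_1 + A_2 H_1,
\end{equation*}
where $A_0, A_1, A_2$ are bounded operators whose norms are controlled by $\|V^{(4)}\|_\infty$, $\|V^{(3)}\|_\infty$, and $\|V''\|_\infty$, independently of $n$. This discrete decomposition mirrors the continuous identity $[-\partial_x^2,[-\partial_x^2,V]] = V^{(4)} + 4V^{(3)}\partial_x + 4V''\partial_x^2$ recorded in \cref{eqn:comm_explicit_laplapV}, and it explains precisely why the paper assumes $V \in C^4$. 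Combined with the bound $\|D_1\vec{v}\| \leq \|H_1\vec{v}\| + \|\vec{v}\|$ from part~1 of \cref{lem:vector_norm}, this gives \cref{eqn:assump_comm2}.

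The main obstacle I anticipate is the bookkeeping in the nested-commutator step: one must carefully track which side of which commutator each factor sits on, and repeatedly invoke identities like $D_1^\dagger S = -D_1$ and $[S, D_1] = 0$ to ensure that every ``leftward'' difference operator gets migrated over to the right. The structural reason no power of $n$ leaks into the constants is that every discrete derivative of $V$ that appears in the expansion arises from a commutator of a difference operator with a diagonal multiplication operator, and each such commutator costs one additional sup-norm derivative of $V$ and \emph{no} factor of $n$. The $C^4$ hypothesis is exactly the minimum regularity that keeps all coefficient operators $A_0, A_1, A_2$ uniformly bounded in $n$.
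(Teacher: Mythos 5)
Your proposal is correct, and it lands on the same structural decomposition as the paper's proof -- the commutators are written as bounded multiplication operators times $I$, $D_1$, and (shifted copies of) $H_1$, with each layer of commutation against $V$ trading one factor of $n$ for one sup-norm derivative of $V$ -- but you derive that decomposition by a genuinely different route. The paper writes out $[H_1,H_2]$ and $[H_1,[H_1,H_2]]$ as explicit $n\times n$ matrices and splits them by inspection into named pieces ($D_L+D_R+S$ for the first commutator; $H_L+H_R+2H_C+2D_{DL}+2D_{DR}+W$ for the nested one), then bounds each piece by a direct entrywise computation of $\|\cdot\,\vec{v}\|_{\star}^2$. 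You instead work in the shift algebra $D_1=n(S-I)$, use the Leibniz identity $[D_1^\dagger D_1,H_2]=D_1^\dagger[D_1,H_2]+[D_1^\dagger,H_2]D_1$, and migrate surplus $D_1^\dagger$ factors to the right via $D_1^\dagger S=-D_1$ (noting also $D_1^2=-SH_1$, so the $A_2H_1$ term really appears as a bounded operator times a unitary shift times $H_1$, which is harmless). Your route is more systematic and makes transparent both why $C^4$ regularity is the right hypothesis and how the argument would extend to higher nested commutators as in \cref{eqn:assumption_high_order}; the paper's route is more concrete and self-verifying but harder to generalize. Two caveats, neither fatal: (i) your nested-commutator step is only sketched, and the bookkeeping you flag is real -- it is exactly the content of the paper's six-term splitting -- but the sketch is structurally sound and I verified that iterating your Leibniz/shift manipulations does terminate in the claimed form $A_0+A_1D_1+A_2H_1$ with coefficients controlled by $\|V''\|_\infty,\|V^{(3)}\|_\infty,\|V^{(4)}\|_\infty$; (ii) your mean-value-theorem bound on the wrap-around entry $n(V(x_{n-1})-V(x_0))$ implicitly requires the periodic extension of $V$ to be smooth, but the paper's proof makes the identical implicit assumption when it declares all $V_k^{(j)}$ bounded under the cyclic convention $V_{k+n}=V_k$, so this is not a gap relative to the paper.
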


\cref{lem:harmonic_verification_assumption} can be directly used to provide the scaling of the Hamiltonians and their commutators in terms of $n$. 
Using the fact that the matrix 2-norms can be estimated by considering its 1-norm (the maximum absolute column sum) and $\infty$-norm (the maximum absolute row sum) by virtue of the fact that $\| M\|_2 \leq \sqrt{\|M\|_1 \|M\|_\infty}$, we obtain the following Lemma. 
Notice that this result is consistent with the continuous picture that their continuous analogs $[-\Delta, V]$ and $[V, [-\Delta, V]]$ are differential operators of the first order and  $[-\Delta, [-\Delta, V]]$ of second order. 
\begin{lemma}\label{lem:harmonic_H_commu_scaling}
    Consider $H_1$, $H_2$ and $D_1$ defined in \cref{eqn:H_1_fd,eqn:H_2_fd,eqn:d_1_fd}, 
    then
    \begin{equation} \label{eqn:h1h2_scaling_in_n}
    \|H_1\| = \Or(n^2), \quad \|H_2\| = \Or(1), \quad \|D_1\| = \Or(n),
    \end{equation}
    and 
    \begin{equation} \label{eqn:commutator_scaling_in_n}
     \|[H_1, H_2]\| = \Or(n), \quad \|[H_2, [H_1, H_2]] \|\le 2\|[H_2\|\| [H_1, H_2]] \|= \Or(n), \quad \|[H_1, [H_1, H_2]] \|= \Or(n^2). 
     \end{equation}
\end{lemma}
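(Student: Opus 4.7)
My plan is to bound every matrix norm via $\|M\|_2 \le \sqrt{\|M\|_1 \|M\|_\infty}$, combined with the explicit sparsity patterns of $H_1$, $H_2$ and $D_1$. The bounds on $\|H_1\|$, $\|D_1\|$, $\|H_2\|$ and $\|[H_1,H_2]\|$ are then routine; the only nontrivial step is the nested commutator $\|[H_1,[H_1,H_2]]\|$, which requires exploiting a discrete cancellation.

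First, since each row and column of $H_1$ has absolute sum $4n^2$ (tridiagonal with corner entries from the periodic boundary), $\|H_1\|_1 = \|H_1\|_\infty = 4n^2$, giving $\|H_1\| \le 4n^2$. Analogous row-sum estimates give $\|D_1\|\le 2n$, and $\|H_2\| = \max_i |V(i/n)| \le \|V\|_\infty$. For $[H_1,H_2]$, I would use the fact that $H_2$ is diagonal to write $[H_1,H_2]_{ij} = (H_1)_{ij}(V_j-V_i)$, which vanishes on the diagonal and is supported only on the $\pm 1$ off-diagonals of $H_1$ (modulo $n$). There $|V_j-V_i| \le \|V'\|_\infty/n$ by the mean value theorem, so every nonzero entry is of size $O(n)$; row and column sums are $O(n)$, giving $\|[H_1,H_2]\| = O(n)$. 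The estimate $\|[H_2,[H_1,H_2]]\| \le 2\|H_2\|\|[H_1,H_2]\| = O(n)$ is then immediate from submultiplicativity.

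The delicate part is $\|[H_1,[H_1,H_2]]\|$, for which the naive triangle inequality yields only $O(n^3)$. Setting $K := [H_1,H_2]$ and using that $H_1$ has bandwidth one (modulo $n$), a direct calculation gives
\begin{equation*}
    [H_1,K]_{ij} = n^2\bigl(K_{i,j+1}+K_{i,j-1}-K_{i+1,j}-K_{i-1,j}\bigr),
\end{equation*}
so $[H_1,K]$ is supported on the diagonal and on the $\pm 2$ off-diagonals (modulo $n$); the $\pm 1$ off-diagonals vanish identically because each of the four $K$-entries that appears there is either on the main diagonal of $K$ (hence zero, since $V_j-V_j=0$) or outside the sparsity support of $K$. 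On the surviving bands, substituting $K_{ab}=(H_1)_{ab}(V_b-V_a)$ makes the four would-be $O(n^3)$ contributions telescope into a discrete second difference of $V$, e.g.
\begin{equation*}
    [H_1,K]_{ii} = -2n^4\bigl(V_{i+1}-2V_i+V_{i-1}\bigr),\qquad [H_1,K]_{i,i\pm 2} = n^4\bigl(V_{i\pm 2}-2V_{i\pm 1}+V_i\bigr).
\end{equation*}
Since $V\in C^2$, each such second difference is at most $\|V''\|_\infty/n^2$ by Taylor expansion, so every entry of $[H_1,K]$ is $O(n^2)$. Row and column sums are $O(n^2)$, and $\|[H_1,[H_1,H_2]]\| \le \sqrt{\|[H_1,K]\|_1\|[H_1,K]\|_\infty} = O(n^2)$.

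The main obstacle is precisely this cancellation step: without extracting the discrete second-difference structure (and thus using at least $C^2$ regularity of $V$), one loses a factor of $n$ in the final bound. This matches the continuous observation that $[-\Delta,[-\Delta,V]]$ is a second- rather than a third-order differential operator, as shown in \cref{eqn:comm_explicit_laplapV}.
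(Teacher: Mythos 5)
Your proof is correct and follows essentially the same route as the paper: the explicit entrywise computation showing that $[H_1,H_2]$ and $[H_1,[H_1,H_2]]$ collapse to $n$ and $n^2$ times discrete first and second differences of $V$ is exactly what the paper carries out in the proof of \cref{lem:harmonic_verification_assumption}, and \cref{lem:harmonic_H_commu_scaling} is then read off via the same bound $\|M\|_2 \le \sqrt{\|M\|_1\|M\|_\infty}$ that you invoke. Your remark that only $C^2$ regularity of $V$ is needed for this particular operator-norm estimate is a valid, slightly sharper observation (the paper's $C^4$ assumption is used for the vector-norm bounds, not here), but it does not change the argument.
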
 

\begin{rem}
    Although $\|H_1\|$ depends quadratically in $n$, the time-independent simulations $\exp(-\I H_1)$ and $\exp(-\I H_2)$ can still be performed efficiently. For $H_1$, it can be diagonalized under Fourier transform. Specifically, let $F = (\omega^{jk}/\sqrt{n})_{j,k=0}^{n-1}$ be the Fourier transform unitary matrix with $\omega = \exp(2\pi\I/n)$, then $H_1 = F\Lambda F^{\dagger}$ where $\Lambda = \text{diag}(2n^2(1-\cos(2\pi\I j/n)))_{j=0}^{n-1}$. 
    Therefore $\exp(-\I H_1) = F \exp(-\I \Lambda) F^{\dagger}$ can be simulated efficiently, by first applying inverse quantum Fourier transform $F^{\dagger}$, then applying fast-forwarding techniques~\cite{childs2003exponential,ahokas2004improved} for $\exp(-\I \Lambda)$, and finally applying quantum Fourier transform $F$.
    For $H_2$, it can be implemented via either QSP technique~\cite{LowChuang2017} due to the boundedness of $\|H_2\|$, or fast-forwarding techniques since $H_2$ is a diagonal matrix as well. 
   Due to the efficiency of simulating $\exp(-\I H_1)$ and $\exp(-\I H_2)$, it is reasonable to estimate the query complexity by counting the total number of Trotter steps. 
\end{rem}

Now we are ready to analyze the errors. We measure the discretization errors using rescaled 2-norm, \emph{i.e.} $\|\vec{\widetilde{\psi}}(t)-(\phi(t,k/n))_{k=0}^{n-1}\|_{\star}$, where $\vec{\widetilde{\psi}}(t)$ is the numerical solution after spatial and time discretization at time $t$, and $\phi(t,x)$ denotes the exact solution. 
As discussed before, the reason why we use rescaled 2-norm, rather than regular 2-norm, to measure the error is because the exact solution $(\phi(t,k/n))_{k=0}^{n-1}$ is a discrete representation of the function $\phi$, which is normalized under continuous $L^2$ norm rather than discrete 2-norm. Furthermore, if we encode the spatial discretized solution $\vec{\psi}$ into a quantum state, then the normalized condition requires $\ket{\psi} \sim \frac{1}{\sqrt{n}}\vec{\psi}$, thus $\|\ket{\psi}\| \sim \frac{1}{\sqrt{n}}\|\vec{\psi}\| = \|\vec{\psi}\|_{\star}$, that is, under correct normalization in each scenario, bounding regular 2-norm error for quantum states is equivalent to bounding rescaled 2-norm error for spatial discretized vectors. We remark that if we would like to control the relative error, then it does not matter whether the rescaled 2-norm or the 2-norm is used.  

The errors are from two sources: spatial discretization of the Laplacian  and potential operator, and the time discretization by Trotter formulae. 
We first bound the error from spatial discretization in \cref{lem:harmonic_space_error}, and its proof is given in \cref{append:proof_harmonic_space_error}.
\begin{lemma}\label{lem:harmonic_space_error}
    Let the exact solution of the Schr\"odinger equation with Hamiltonian~\cref{eqn:schrodinger_tdmass}  be $\phi(t,x)$. 
    Then 
    \begin{enumerate}
        \item for any $0\le t\le T,x\in [0,1]$, 
        \begin{equation}
        \begin{split}
            & \quad |n^2(\phi(t,x+1/n)-2\phi(t,x)+\phi(t,x-1/n)) - \Delta \phi(t,x)| \\
            & \leq \frac{1}{3n^2} \sup_{y\in[0,1]} \left|\frac{\partial^4}{\partial x^4}\phi(t,y)\right|. 
        \end{split}
    \end{equation}
        \item Let $\vec{\psi}(t)$ denote the solution of the dynamics
        \begin{equation} \label{eqn:schd_discretized_h1_h2}
            \I \partial_t \vec{\psi}(t) = (f_1(t)H_1 + f_2(t)H_2)\vec{\psi}(t)
        \end{equation}
        where $H_1$ and $H_2$ are the discretized Hamiltonian defined in \cref{eqn:H_1_fd} and \cref{eqn:H_2_fd}, then for any $0\le t\in T$,
        \begin{equation}\label{eqn:schrodinger_tdmass_spatial_error}
            \|(\phi(t,k/n))_{k=0}^{n-1}-\vec{\psi}(t)\|_{\star} \leq \frac{t}{3n^2} \|f_1\|_{\infty}\sup_{s\in[0,t], y\in[0,1]} \left|\frac{\partial^4}{\partial x^4}\phi(s,y)\right|. 
        \end{equation}
    \end{enumerate}
\end{lemma}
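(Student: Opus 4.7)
The approach is a direct Taylor expansion of $\phi(t,x\pm 1/n)$ about $x$ up to fourth order with integral remainder. Adding the two expansions, the first and third derivative terms cancel by the symmetry of the central difference, the zeroth-order pieces give the $-2\phi(t,x)$, and the second-order terms combine to produce $(1/n^2)\partial_x^2\phi(t,x)$ exactly; only a fourth-derivative remainder survives. After multiplying through by $n^2$, the standard central-difference estimate yields a bound of $\frac{1}{12 n^2}\sup_y|\partial_x^4\phi(t,y)|$, which is comfortably within the stated constant $\frac{1}{3 n^2}$.

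\textbf{Plan for part 2.} The strategy is to identify the semi-discrete solution $\vec\psi(t)$ and the sampled vector $\vec\phi(t):=(\phi(t,k/n))_{k=0}^{n-1}$ as solutions of the same linear ODE up to a spatial-discretization residual, then apply the variation-of-parameters formula (\cref{lem:VoP}). Let
\begin{equation*}
    \vec r(t) := \I\partial_t \vec\phi(t) - H(t)\vec\phi(t).
\end{equation*}
Because $H_2$ is the diagonal matrix $\operatorname{diag}(V(k/n))$, the potential part of the Schr\"odinger equation is satisfied \emph{exactly} at the sample points, and the only discrepancy comes from replacing the continuous Laplacian by its finite-difference counterpart $H_1$. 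Consequently the $k$-th entry of $\vec r(t)$ equals $f_1(t)$ times the central-difference error at $x=k/n$, which by part 1 has magnitude at most $\frac{|f_1(t)|}{3n^2}\sup_{y\in[0,1]}|\partial_x^4\phi(t,y)|$. Since $\|\vec r(t)\|_{\star}=n^{-1/2}\|\vec r(t)\|\le \|\vec r(t)\|_{\infty}$, this pointwise bound transfers directly to the rescaled 2-norm.

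\textbf{Closing step.} Setting $\vec e(t):=\vec\phi(t)-\vec\psi(t)$ with $\vec e(0)=0$, the vector $\vec e$ solves $\partial_t \vec e = -\I H(t)\vec e - \I\vec r(t)$. Applying part 2 of \cref{lem:VoP} with forcing $-\I\vec r$ yields $\vec e(t)=-\I\int_0^t U(t,s)\vec r(s)\,ds$, where $U(t,s)$ is the unitary semi-discrete propagator. Unitarity is preserved under the rescaled 2-norm (the rescaling is global), so $\|U(t,s)\vec r(s)\|_{\star}=\|\vec r(s)\|_{\star}$, and the triangle inequality gives $\|\vec e(t)\|_{\star}\le \int_0^t\|\vec r(s)\|_{\star}\,ds$. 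Substituting the estimate on $\|\vec r(s)\|_{\star}$ and taking the supremum over $s$ out of the integral produces exactly \cref{eqn:schrodinger_tdmass_spatial_error}. The only nontrivial ingredient beyond bookkeeping is ensuring that $\sup_{s,y}|\partial_x^4\phi(s,y)|$ is finite; this is inherited from the standing $C^4$ regularity of $V$ and the smoothness of $M_{\text{eff}},\omega$, via classical regularity theory for the linear Schr\"odinger equation on the torus, and is the only place a care is needed.
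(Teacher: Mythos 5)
Your proposal is correct and follows essentially the same route as the paper: Taylor's theorem with remainder for the central-difference error in part 1 (the paper settles for the cruder constant $\tfrac{1}{3n^2}$ rather than $\tfrac{1}{12n^2}$, but that is immaterial), and for part 2 the identification of the sampled exact solution as solving the semi-discrete ODE with a residual forcing equal to $f_1(t)$ times the central-difference error, followed by variation of parameters (\cref{lem:VoP}), unitarity of the propagator, and the bound $\|\vec r\|_{\star}\le\|\vec r\|_{\infty}$. No gaps.
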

This lemma shows that in order to make the vector error induced by the spatial discretization bounded by $\epsilon$, it suffices to choose 
    \begin{equation} \label{eqn:n_estimate}
        n = \Or \left( \frac{T^{1/2}}{\epsilon^{1/2}} \left(\sup \left|\frac{\partial^4 \phi}{\partial x^4}\right|\right)^{1/2}\right).
    \end{equation}
The second source of the error is the time discretization using standard or generalized Trotter formula by applying the aforementioned Theorems for the errors in the operator and vector norm. The following lemma makes explicit the scaling in $n$.
\begin{lemma}\label{lem:harmonic_time_error}
    Let $\vec{\psi}(t)$ be the solution of spatially discretized Schr\"odinger equation \cref{eqn:schrodinger_tdmass} using finite difference with $n$ grid points, and $U(t,0)$ be the evolution operator.  Let $\vec{\psi}_{s,p}(t)$ and $\vec{\psi}_{g,p}(t)$ be the corresponding numerical solution from $p$-th order standard and generalized Trotter formula with $L$ equidistant time steps, respectively, and $U_{s,p}(t,0)$, $U_{g,p}(t,0)$ be the corresponding evolution operators. 
    Assume that $L$ is sufficiently large, then there exists a constant $\widetilde{C} > 0$, independent of $T,L,H_1,n,\vec{\psi},\phi$, such that 
    \begin{enumerate}
        \item 
        \begin{equation}
            \begin{split}
                \|U_{s,1}(T,0)-U(T,0)\| &\leq \widetilde{C}\frac{n^2 T^2}{L} ,\\
                \|U_{g,1}(T,0)-U(T,0)\| &\leq \widetilde{C} \frac{n T^2}{L} ,\\
                \|U_{s,2}(T,0)-U(T,0)\| &\leq \widetilde{C} \frac{n^2 T^3}{L^2} ,\\
                \|U_{g,2}(T,0)-U(T,0)\| &\leq \widetilde{C} \frac{n^2 T^3}{L^2} .
            \end{split}
        \end{equation}
        \item 
         \begin{equation}
            \begin{split}
                \|\vec{\psi}_{s,1}(T)-\vec{\psi}(T)\|_{\star} &\leq \widetilde{C} \frac{T^2}{L}  \left((T+1)\sup\left|\frac{\partial^4 \phi}{\partial x^4}\right| + \sup \left|\frac{\partial^2 \phi}{\partial x^2}\right| + \sup_{y\in[0,1]} \left|\phi(0,y)\right|\right) ,\\
                \|\vec{\psi}_{g,1}(T)-\vec{\psi}(T)\|_{\star} &\leq \widetilde{C} \frac{T^2}{L}  \left[\left((T+1)\sup\left|\frac{\partial^4 \phi}{\partial x^4}\right| + \sup \left|\frac{\partial^2\phi}{\partial x^2}\right|\right)^{1/2}\left(\sup_{y\in[0,1]} \left|\phi(0,y)\right|\right)^{1/2} + \sup_{y\in[0,1]} \left|\phi(0,y)\right|\right] ,\\
                \|\vec{\psi}_{s,2}(T)-\vec{\psi}(T)\|_{\star} &\leq \widetilde{C} \frac{T^3}{L^2}  \left((T+1)\sup\left|\frac{\partial^4 \phi}{\partial x^4}\right| + \sup \left|\frac{\partial^2 \phi}{\partial x^2}\right| + \sup_{y\in[0,1]} \left|\phi(0,y)\right|\right) ,\\
                \|\vec{\psi}_{g,2}(T)-\vec{\psi}(T)\|_{\star} &\leq \widetilde{C} \frac{T^3}{L^2}  \left((T+1)\sup\left|\frac{\partial^4 \phi}{\partial x^4}\right| + \sup \left|\frac{\partial^2 \phi}{\partial x^2}\right| + \sup_{y\in[0,1]} \left|\phi(0,y)\right|\right) ,
            \end{split}
        \end{equation}
        where the notation $\sup$ without any subscript should be interpreted as $\sup_{t\in[0,T],x\in [0,1]}$.
    \end{enumerate}
\end{lemma}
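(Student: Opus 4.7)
My plan is to treat the two parts of the lemma separately, both by invoking the abstract global error bounds from \cref{thm:operator_norm_bound} and \cref{thm:vector_norm_bound} and then specializing the preconstants to the finite-difference discretization. The operator norm bounds follow immediately from \cref{thm:operator_norm_bound} once the norms of the Hamiltonians and their nested commutators are substituted in from \cref{lem:harmonic_H_commu_scaling}; the vector norm bounds require the additional step of converting the quantities $\sup_t\|H_1\vec{\psi}(t)\|_\star$ and $\|\vec{\psi}(0)\|_\star$, which appear in \cref{thm:vector_norm_bound}, into quantities involving derivatives of the continuous solution $\phi$.

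For the operator norm part, I would plug $\|H_1\|=\Or(n^2)$, $\|H_2\|=\Or(1)$, $\|[H_1,H_2]\|=\Or(n)$, $\|[H_2,[H_1,H_2]]\|=\Or(n)$, and $\|[H_1,[H_1,H_2]]\|=\Or(n^2)$ into the explicit preconstants $\alpha_{s,1},\alpha_{g,1},\alpha_{s,2},\alpha_{g,2}$ of \cref{thm:error_trotterstep}. Since $\|f_j^{(k)}\|_\infty$ are absorbed into $\widetilde{C}$, the dominant term is $\Or(n^2)$ for first-order standard (from $\|H_1\|$), $\Or(n)$ for first-order generalized (only $\|[H_1,H_2]\|$ contributes), and $\Or(n^2)$ for both second-order schemes (from $\|H_1\|$ and $\|[H_1,[H_1,H_2]]\|$ respectively). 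The higher-order $\beta$, $\gamma$ terms in \cref{thm:operator_norm_bound} are of lower order in $h$ and of no larger order in $n$, so for $L$ large enough they are absorbed.

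For the vector norm part, applying \cref{thm:vector_norm_bound} reduces the task to uniformly bounding $\|H_1\vec{\psi}(t)\|_\star$ and $\|\vec{\psi}(0)\|_\star$. For the latter, $\vec{\psi}(0)=(\phi(0,k/n))_{k=0}^{n-1}$ so $\|\vec{\psi}(0)\|_\star\le \sup_y|\phi(0,y)|$. For the former, write $\vec{\phi}(t):=(\phi(t,k/n))_{k=0}^{n-1}$ and split
\begin{equation*}
\|H_1\vec{\psi}(t)\|_\star \;\le\; \|H_1\bigl(\vec{\psi}(t)-\vec{\phi}(t)\bigr)\|_\star + \|H_1\vec{\phi}(t)\|_\star.
\end{equation*}
The first term is bounded by $\|H_1\|\cdot\|\vec{\psi}(t)-\vec{\phi}(t)\|_\star$; crucially, the $\Or(n^2)$ from \cref{lem:harmonic_H_commu_scaling} cancels exactly against the $\Or(t/n^2)$ from the spatial error estimate \cref{eqn:schrodinger_tdmass_spatial_error} of \cref{lem:harmonic_space_error}, yielding a bound $\widetilde{C}\,t\sup|\partial_x^4\phi|$ that is independent of $n$. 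For the second term, apply the pointwise finite-difference consistency estimate of \cref{lem:harmonic_space_error}(1) componentwise to bound $\|H_1\vec{\phi}(t)-(-\Delta\phi(t,k/n))_{k=0}^{n-1}\|_\star\le \tfrac{1}{3n^2}\sup|\partial_x^4\phi|$, then control the remaining vector via its $L^\infty$-bound $\sup|\partial_x^2\phi|$. Combined, this gives $\sup_{t\in[0,T]}\|H_1\vec{\psi}(t)\|_\star \le \widetilde{C}\bigl((T+1)\sup|\partial_x^4\phi|+\sup|\partial_x^2\phi|\bigr)$, and feeding this plus the bound on $\|\vec{\psi}(0)\|_\star$ into each of the four estimates of \cref{thm:vector_norm_bound} produces exactly the four stated inequalities (including the Cauchy-Schwarz-style square-root form preserved in the first-order generalized case).

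The main obstacle is verifying that the hypotheses of \cref{thm:vector_norm_bound} are satisfied under the rescaled 2-norm — namely \cref{assump:commutator_bound} — which is exactly the content of \cref{lem:harmonic_verification_assumption}, and ensuring that the constants $\widetilde{C}_1$, $\|H_2\|$, $\|f_j\|_\infty$ appearing in the step-size condition $h\le(\|f_1\|_\infty+\|f_2\|_\infty)^{-1}(\widetilde{C}_1+\|H_2\|)^{-1}/2$ are all uniformly bounded in $n$ (so that ``$L$ sufficiently large'' is an $n$-independent condition). Beyond this, the only delicate bookkeeping is the cancellation of the two factors of $n^2$ in the first term of the split above, which is what makes the vector norm estimate asymptotically independent of the spatial discretization parameter and is the essential conceptual content of the lemma.
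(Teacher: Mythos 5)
Your proposal is correct and follows essentially the same route as the paper: part 1 by substituting the commutator scalings of \cref{lem:harmonic_H_commu_scaling} into \cref{thm:operator_norm_bound}, and part 2 by applying \cref{thm:vector_norm_bound} and bounding $\sup_t\|H_1\vec{\psi}(t)\|_\star$ via the same triangle-inequality split into $\|H_1(\vec{\psi}(t)-\vec{\phi}(t))\|_\star$ (where the $\Or(n^2)$ operator norm cancels the $\Or(t/n^2)$ spatial error) plus $\|H_1\vec{\phi}(t)\|_\star$ controlled by the finite-difference consistency estimate. The additional points you flag --- verifying \cref{assump:commutator_bound} via \cref{lem:harmonic_verification_assumption} and the $n$-independence of the step-size condition --- are exactly the prerequisites the paper handles before invoking the vector norm theorem.
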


\begin{rem}
The condition that $L$ should be sufficiently large is to ensure that the lowest order term in the error bounds are dominant and to allow us to discard the higher order terms.  This can be guaranteed by requiring the desired level of error $\epsilon$ to be sufficiently small in the complexity estimate later. 
\end{rem}

\begin{proof}
  1. The result follows by combining \cref{thm:operator_norm_bound} and the scaling of the matrix norms provided in \cref{lem:harmonic_H_commu_scaling}.
    
    \smallskip
    
    2. According to \cref{thm:vector_norm_bound} and the fact that $\|\vec{\psi}(0)\|_{\star} \leq \sup_{y\in[0,1]} |\psi(0,y)|$, we only need to bound $\|H_1\vec{\psi}(t)\|_{\star}$ for any $t\in[0,T]$. 
    Let $r(t,x) = n^2(\phi(t,x+1/n)-2\phi(t,x)+\phi(t,x-1/n))-\Delta\phi(t,x)$ where $\phi(t,x)$ is the exact solution before any discretization. 
    By \cref{lem:harmonic_space_error}, 
    \begin{equation}
    \begin{split}
        \|H_1\vec{\psi}(t)\|_{\star} & \leq \|H_1(\vec{\psi}(t)-(\phi(t,k/n))_{k=0}^{n-1})\|_{\star} + \|H_1(\phi(t,k/n))_{k=0}^{n-1} \|_{\star} \\
        & \leq \|H_1 (\vec{\psi}(t)-(\phi(t,k/n))_{k=0}^{n-1}\|_{\star} + \|( (\Delta\phi(t,k/n))_{k=0}^{n-1}\|_{\star} + \|(r(t,k/n))_{k=0}^{n-1}\|_{\star} \\
        & \leq \widetilde{C} \left(t \sup_{s\in[0,T],y\in[0,1]}\left|\frac{\partial^4}{\partial x^4}\phi(s,y)\right| + \sup_{y\in[0,1]}\left|\frac{\partial^2}{\partial x^2}\phi(t,y)\right| + \frac{1}{n^2}\sup_{s\in[0,T],y\in[0,1]}\left|\frac{\partial^4}{\partial x^4}\phi(s,y)\right|\right) \\
        & \leq \widetilde{C} \left((T+1)\sup_{s\in[0,T],y\in[0,1]}\left|\frac{\partial^4}{\partial x^4}\phi(s,y)\right| + \sup_{s\in [0,T],y\in[0,1]}\left|\frac{\partial^2}{\partial x^2}\phi(s,y)\right|\right)
    \end{split}
    \end{equation}
\end{proof}

Finally, we combine both spatial and temporal errors. It is not possible to obtain an operator norm error bound between the evolution operator of an unbounded operator and that of a finite dimensional matrix. Hence the operator norm bounds below are obtained by plugging in the estimate of $n$ that achieves the vector norm error with precision $\epsilon$. In particular, the operator norm error bound involves the derivatives of the exact solution of interest $\phi$.  Combining \cref{eqn:n_estimate} and \cref{lem:harmonic_time_error}, we obtain the total complexity estimates. 
\begin{theorem}\label{thm:main_harmonic}
    We use central finite difference for spatial discretization and Trotter formulae for time discretization to obtain an $\epsilon$-approximation in rescaled 2-norm of the solution $\phi(t,x)$. 
    Let $L_{\text{ope},s,1}$ and $L_{\text{ope},g,1}$ denote the total number of required time steps of first-order standard and generalized Trotter estimated from operator norm error bounds, respectively, $L_{\text{vec},s,1}$ and $L_{\text{vec},g,1}$ denote the estimates from vector norm error bounds, and $L_{\text{ope},s,2}$, $L_{\text{ope},g,2}$, $L_{\text{vec},s,2}$, $L_{\text{vec},g,2}$ are the corresponding estimates for second-order schemes. Then for sufficiently small $\epsilon$, 
    \begin{equation}
    \begin{split}
        &L_{\text{ope},s,1} = \Or\left(\frac{ T^3}{\epsilon^2} \left(\sup \left|\frac{\partial^4 \phi}{\partial x^4}\right|\right)\right), \\
        &L_{\text{ope},g,1} = \Or\left(\frac{ T^{5/2}}{\epsilon^{3/2}} \left(\sup \left|\frac{\partial^4 \phi}{\partial x^4}\right|\right)^{1/2}\right), \\
        &L_{\text{vec},s,1} = \Or\left(\frac{T^2}{\epsilon}\left((T+1)\sup\left|\frac{\partial^4 \phi}{\partial x^4}\right| + \sup \left|\frac{\partial^2 \phi}{\partial x^2}\right| + \sup_{y\in[0,1]} \left|\phi(0,y)\right|\right) \right) , \\
        &L_{\text{vec},g,1} = \Or\left(\frac{T^2}{\epsilon}\left[\left((T+1)\sup\left|\frac{\partial^4 \phi}{\partial x^4}\right| + \sup \left|\frac{\partial^2\phi}{\partial x^2}\right|\right)^{1/2}\left(\sup_{y\in[0,1]} \left|\phi(0,y)\right|\right)^{1/2} + \sup_{y\in[0,1]} \left|\phi(0,y)\right|\right]\right) , 
    \end{split}
    \end{equation}
    and 
    \begin{equation}
    \begin{split}
        &L_{\text{ope},s,2} = L_{\text{ope},g,2} = \Or\left(\frac{ T^{2}}{\epsilon} \left(\sup \left|\frac{\partial^4 \phi}{\partial x^4}\right|\right)^{1/2}\right), \\
        &L_{\text{vec},s,2} =  L_{\text{vec},g,2} = \Or\left(\frac{T^{3/2}}{\epsilon^{1/2}} \left((T+1)\sup\left|\frac{\partial^4 \phi}{\partial x^4}\right| + \sup \left|\frac{\partial^2 \phi}{\partial x^2}\right| + \sup_{y\in[0,1]} \left|\phi(0,y)\right|\right)^{1/2} \right). 
    \end{split}
    \end{equation}
\end{theorem}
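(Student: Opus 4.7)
The plan is to combine the spatial discretization error of \cref{lem:harmonic_space_error} with the temporal Trotter error of \cref{lem:harmonic_time_error} through a triangle inequality
\begin{equation*}
\|\vec{\widetilde{\psi}}(T)-(\phi(T,k/n))_{k=0}^{n-1}\|_{\star}\le \|\vec{\widetilde{\psi}}(T)-\vec{\psi}(T)\|_{\star}+\|\vec{\psi}(T)-(\phi(T,k/n))_{k=0}^{n-1}\|_{\star},
\end{equation*}
where $\vec{\psi}(t)$ is the semi-discretized solution of \cref{eqn:schd_discretized_h1_h2} and $\vec{\widetilde{\psi}}(t)$ is the full Trotter approximation. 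I would impose that each of the two contributions be at most $\epsilon/2$, and then read off $n$ and $L$ from the corresponding bounds.

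First I would fix $n$ by inverting \cref{eqn:schrodinger_tdmass_spatial_error}: the spatial term is controlled by $\epsilon/2$ once $n=\Or\bigl(T^{1/2}\epsilon^{-1/2}(\sup|\partial_x^4\phi|)^{1/2}\bigr)$, which is exactly \cref{eqn:n_estimate} and is independent of the chosen time integrator. Next, for each of the eight cases (first/second order, standard/generalized, operator/vector norm), I would invert the corresponding estimate in \cref{lem:harmonic_time_error} to solve for $L$. The first-order bounds scale like $T^2/L$, giving $L\sim T^2(\text{bracket})/\epsilon$; the second-order bounds scale like $T^3/L^2$, giving $L\sim T^{3/2}(\text{bracket})^{1/2}\epsilon^{-1/2}$. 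For the operator norm cases the brackets contain explicit factors of $n$ or $n^2$ from \cref{eqn:h1h2_scaling_in_n,eqn:commutator_scaling_in_n}, and substituting the previously chosen $n$ eliminates $n$ in favor of $T,\epsilon$, and $\sup|\partial_x^4\phi|$. For the vector norm cases the brackets already depend only on $\sup|\partial_x^2\phi|$, $\sup|\partial_x^4\phi|$, and $\sup|\phi(0,\cdot)|$ (no $n$ appears), since $\sup_t\|H_1\vec\psi(t)\|_\star$ was shown in the proof of \cref{lem:harmonic_time_error} to be controlled by those quantities. Straightforward algebraic simplification yields each of the eight expressions in the statement.

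The only subtlety I anticipate concerns two compatibility conditions hidden in the hypotheses. The vector norm bound of \cref{thm:vector_norm_bound} requires $h=T/L\le (\|f_1\|_\infty+\|f_2\|_\infty)^{-1}(\widetilde C_1+\|H_2\|)^{-1}/2$; since $\|H_2\|$, $\widetilde C_1$, and the control functions are all $\Or(1)$ uniformly in $n$, and $L$ scales as a positive power of $1/\epsilon$, this condition holds automatically for $\epsilon$ small enough. Similarly, the higher-order remainder terms $\beta_{s,1}h^3$, $\beta_{s,2}h^4$, and $\gamma_{s,2}h^5$ in \cref{thm:operator_norm_bound} are dominated by the leading $\alpha$-term once $L$ is sufficiently large, which again is implied by $\epsilon$ being small. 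With these qualitative checks in hand, the theorem reduces to the direct bookkeeping described above and presents no substantive obstacle; the real mathematical content lies in \cref{lem:harmonic_space_error,lem:harmonic_time_error,thm:operator_norm_bound,thm:vector_norm_bound}.
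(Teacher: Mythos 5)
Your proposal is correct and follows essentially the same route as the paper: fix $n$ by requiring the spatial error bound of \cref{lem:harmonic_space_error} to be $\Or(\epsilon)$ (yielding \cref{eqn:n_estimate}), then substitute that $n$ into each bound of \cref{lem:harmonic_time_error} and solve for $L$. Your additional remarks on the step-size restriction of \cref{thm:vector_norm_bound} and the domination of the higher-order $\beta,\gamma$ terms are exactly the content of the paper's ``sufficiently small $\epsilon$'' / ``sufficiently large $L$'' hypotheses, so nothing is missing.
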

\begin{proof}
    The complexity can be estimated by requiring both error bounds in \cref{lem:harmonic_space_error} and \cref{lem:harmonic_time_error} to be smaller than $\epsilon$. 
    First, by requiring the right hand side of \cref{eqn:schrodinger_tdmass_spatial_error} to be bounded by $\epsilon$, the scaling of $n$ should be that in \cref{eqn:n_estimate}. 
    Plug this back into \cref{lem:harmonic_time_error} and also let the bounds in \cref{lem:harmonic_time_error} to be bounded by $\epsilon$, we obtain the complexity estimates. 
\end{proof}

\cref{thm:main_harmonic} clearly illustrates the advantage of vector norm error bounds in terms of the desired level of error $\epsilon$. 
More precisely, the total number of required Trotter steps estimated from vector norm bounds only scales $\Or(1/\epsilon^{1/p})$ for $p$-th order schemes.
This is  because the operator norm error bounds depend on the spatial discretization $n$, where $n = \Or(1/\epsilon^{1/2})$ for second order spatial discretization, but the vector norm error bounds do not. We summarize our complexity estimates in terms of the spatial discretization as well as the error level $\epsilon$ in \cref{tab:trotter_summary}, where the simulation time $T$ is $\Or(1)$. 

The best scaling is achieved by the second order standard and generalized Trotter formulae with the vector norm error bound, which is the result we are referring to as `This work' in \cref{tab:trotter_comparison} for comparison with existing estimates. As discussed earlier, in order to demonstrate the behavior of the Trotter formulae for unbounded operators, we require $n$ to grow as $\operatorname{poly}(1/\epsilon)$. Therefore we choose $V(x)$ to be a $C^4$ function so that the commutator scaling of the second order Trotter formulae are valid.

Numerical tests in \cref{sec:numer} demonstrate that the complexity estimates from vector norm error bounds are sharp in terms of $\epsilon$ for all the schemes we consider. 

\begin{rem}[\REV{\emph{a priori} estimates of the solution $\phi$}]
\REV{Due to the potential growth of the derivatives of the exact solution with respect to $T$, \emph{a priori} estimates are required if we would like to obtain the overall scalings in $T$.} Such \emph{a priori} estimates, where the spatial derivatives are bounded by polynomials of $T$, have been established in the literature for various special cases, such as when $f_1 \equiv 1$, $f_2$ is smooth in $t$ and $V$ is a real potential, smooth in $x$ and periodic in $x$ as considered in \cite{Bourgain1999}, and for strictly positive $f_1$ in \cite{Montalto2018}. \REV{The corresponding estimates are usually  technical, while the common approach to derive them is a combination of various analytical tools and a careful capture on the resonance in the dynamics.} \REV{Detailed discussions are orthogonal to the topic here and are omitted. }
\end{rem}

As we have already observed in \cref{thm:operator_norm_bound}, the generalized Trotter formula exhibits commutator type error bounds, while the standard Trotter formula does not. However, the commutator error bound only translates to improved asymptotic complexity for the first order generalized Trotter scheme.  
For second order schemes, there is no significant difference in the scaling with respect to $\epsilon$  between the standard and generalized Trotter formulae. As discussed before, this is due to the fact that $\norm{H_1}$ and $\norm{[H_1,[H_1,H_2]]}$ have the same asymptotic scaling in $n$. The generalized Trotter is less restrictive on the control functions, namely, the $p$-th order generalized Trotter formula ($p = 1,2$) only requires the boundedness of the derivatives of control functions up to the $(p-1)$-th order while the $p$-th standard one requires the boundedness up to the $p$-th order. We expect the same situation for higher order Trotter formulae. 

\section{Numerical Results}\label{sec:numer}
To illustrate the difference between the operator norm and vector norm, we consider the following Hamiltonian
\begin{equation}
H(t) = -\frac{1}{2}(2 + \sin(at+0.5)) \Delta + (1 + \cos(t)) V(x), \quad V(x) = 1 - \cos(x), \quad x\in [-\pi,\pi]
\label{eqn:}
\end{equation}
with periodic boundary conditions. Here $a>0$ controls the magnitude of the derivatives $\|f_1'\|_\infty$ and $\|f_1''\|_\infty$. These sizes play a role in the preconstants of the errors as shown in \cref{thm:error_trotterstep}.  As discussed in \cref{sec:harmonic}, $H_1$ and $H_2$ correspond to the discretized matrices of $-\Delta$ and $V(x)$, respectively. Besides the second order finite difference scheme, we also demonstrate that our estimates are equally applicable to the Fourier discretization. Though in this particular example $V(x)$ is smooth, the scaling of $n$ is still chosen according to \cref{eqn:n_estimate}, which only requires the regularity of $V$ up to its fourth order derivatives and hence works for more general potentials.

We first demonstrate the scaling of the vector norms and the operator norms, respectively.  Consider the vector $\vec{v}$ as the discretization of the smooth function $\cos(x)$. \cref{fig: op_vec_norms} plots the operator norms and the vector norms for various number of spatial grids $n$ using the  finite difference and Fourier spatial discretization. We find that $\|[H_1, [H_1, H_2]]\|$  grows quadratically with respect the the number of spatial grids while $\|[H_1, H_2] \|$ scales linearly, which agrees with \cref{lem:harmonic_H_commu_scaling}.  However, the vector norms $\|[H_1, [H_1, H_2]]\vec{v}\|_\star$, $\|[H_1, H_2] \vec{v}\|_\star$ remain of the same scale. This behavior is not restricted to the specific spatial discretization. Moreover, the scalings of $\|D_1 \vec{v} \|_\star$ and $\|H_1 \vec{v}\|_\star$ are found to be the same as those of $\|[H_1, [H_1, H_2]]\vec{v}\|_\star$ and $\|[H_1, H_2]\vec{v}\|_\star$. This verifies the in assumptions \cref{eqn:assump_comm1,eqn:assump_comm2}, which is also proved for the finite difference scheme in \cref{lem:harmonic_verification_assumption}.

\begin{figure}
    \centering
    \subfloat[Finite difference discretization]{
        \includegraphics[width=.49\textwidth]{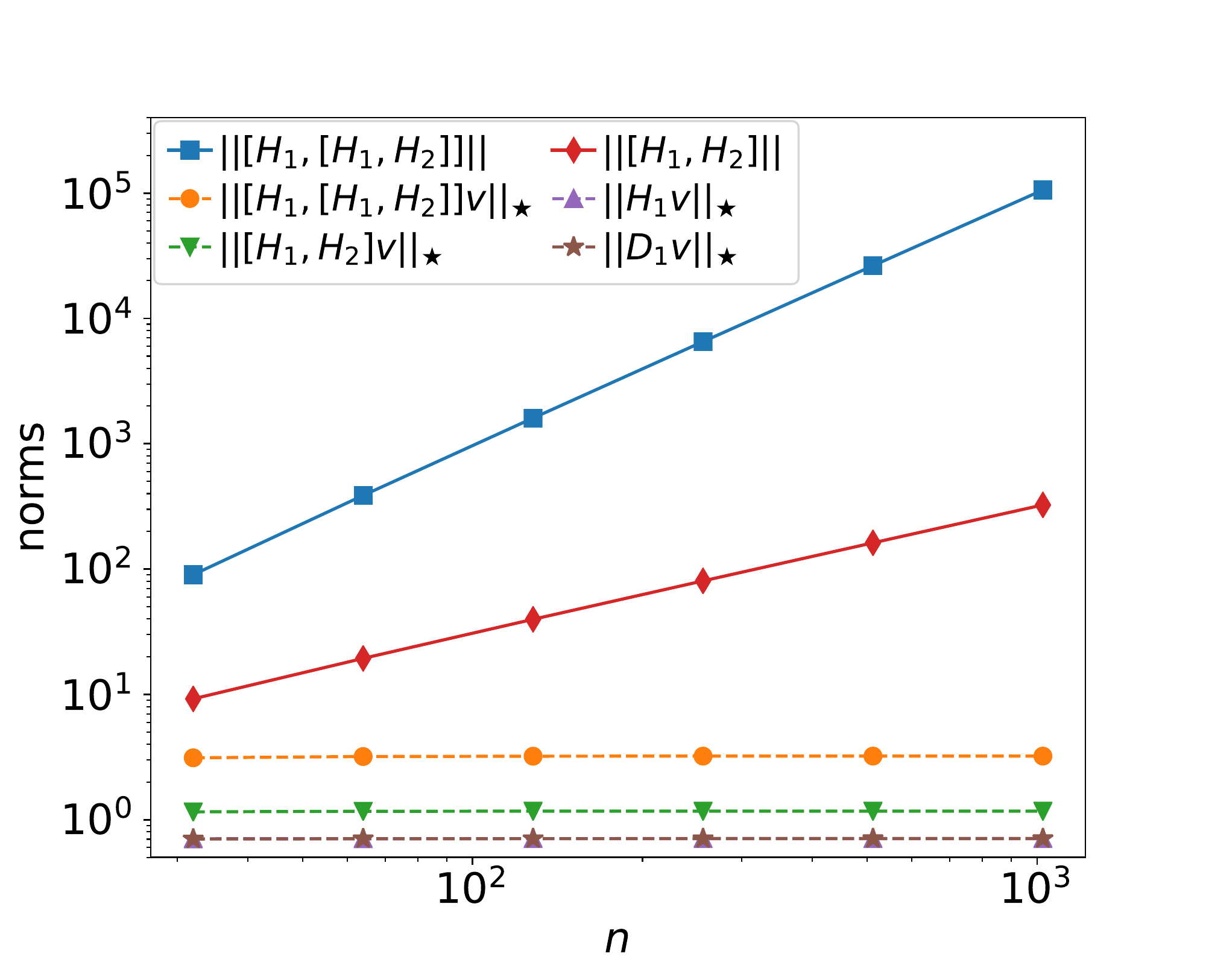}}
    \subfloat[Fourier discretization]{
        \includegraphics[width=.49\textwidth]{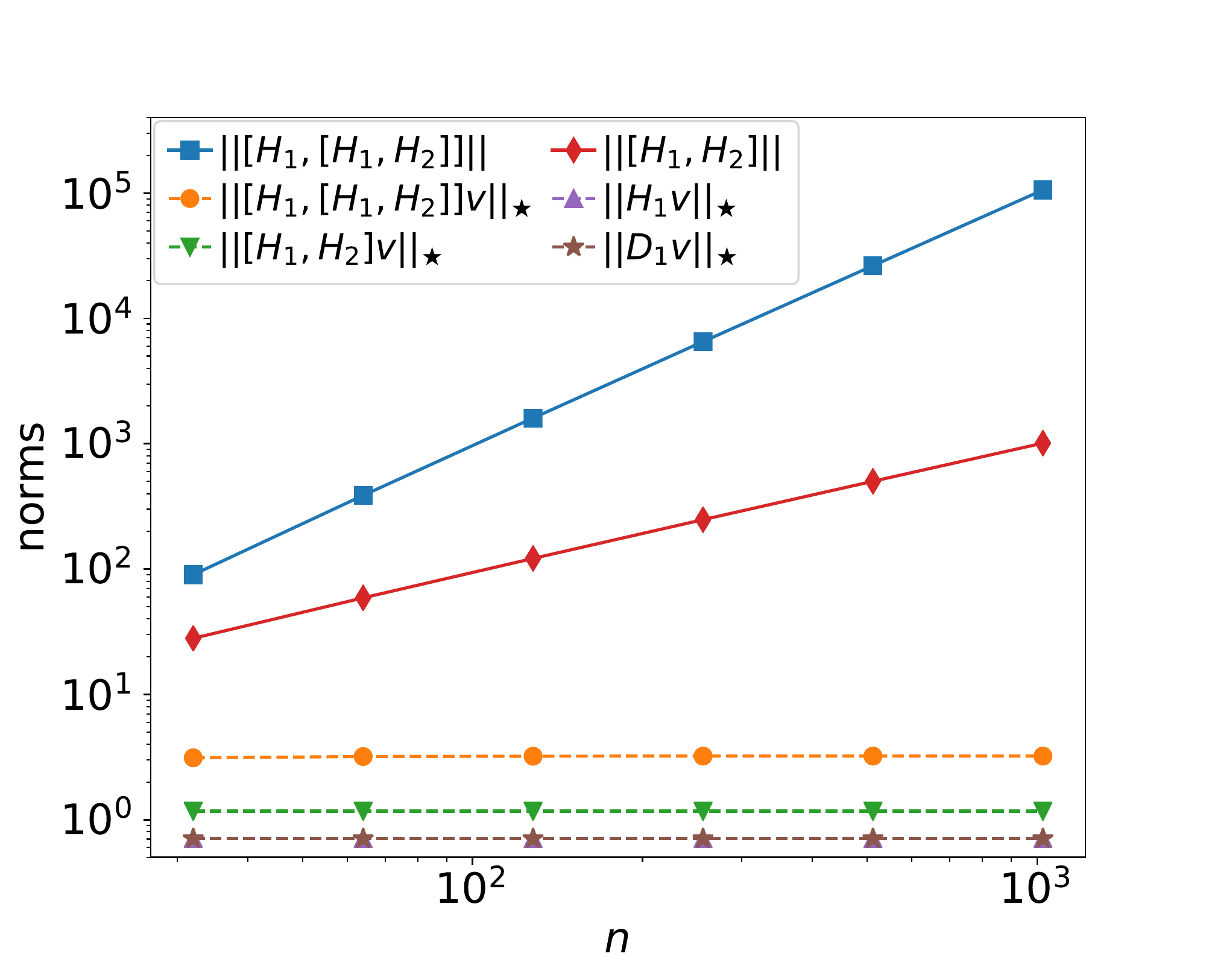}}
    \caption{Operator and vector norms of a smooth vector for various numbers of spatial grids $n$. $\|[H_1, [H_1, H_2]]\|$ and $\|[H_1, H_2]\|$ scales quadratically and linearly with respect to $n$, but the vector norms do not grow as $n$ gets larger.}
    \label{fig: op_vec_norms}
\end{figure}

We then verify the scaling of the errors with respect to $n$. The initial wavefunction is $\phi(x,0)=\cos(x)$. The time step size $h$ is fixed to be $10^{-4}$. We run the Trotter formulae for 10 steps, which is sufficient for demonstrating the difference in scalings.  The relative errors for both the operator and vector norms are plotted in \cref{fig: num_errors} for $a = 1$ and $a = 10$.  In terms of the operator norm, the generalized Trotter formula has a smaller error compared to the standard one: the relative error in the operator norms for the first-order standard Trotter scheme scales quadratically with respect to the number of grids while the first-order generalized Trotter schemes admits a linear scaling thanks to the commutator bounds. On the other hand, the relative errors in the vector norm do not grow with respect to $n$. 

For second-order schemes, it can been seen that the errors measured by the operator norm for both methods grow quadratically with respect to $n$, while the corresponding errors in the vector norm are stable as $n$ increases. These results agree with ~\cref{lem:harmonic_time_error}. Note that though the operator norm errors of the second-order schemes have the same asymptotic scaling in $n$, their preconstants may differ. When $a = 1$, the sizes of $\norm{f_1}_{\infty},\norm{f_1'}_{\infty},\norm{f_1''}_{\infty}$ are comparable, and there is no significant difference in the preconstants. However, when $a =10$,  $\norm{f_1''}_{\infty}$ is one order of magnitude larger than $\norm{f_1'}_{\infty},\norm{f_1}_{\infty}$. In this case, we find from \cref{fig: num_errors} that the generalized Trotter formula has a smaller preconstant, which agrees with the preconstant estimates as described in \cref{thm:error_trotterstep}.

Moreover, we compare the scaling of the number of Trotter steps for various precision $\epsilon$, measuring the relative error via the vector norm. We fix $a = 10$, $T = 0.16$, and consider the precision $\epsilon$ as $2^{-10}$, $2^{-12}$, $2^{-14}$, $2^{-16}$, $2^{-18}$, $2^{-20}$ and take $n \propto \epsilon^{-0.5}$ as $2^5$, $2^6$, $2^7$, $2^8$, $2^9$ and $2^{10}$. As is presented in \cref{fig:vec_error_eps_scaling}, both second-order Trotter formulae requires the number of Trotter steps $L = \Or(\epsilon^{-0.5})$ while it requires $L = \Or(\epsilon^{-1})$ for both first-order Trotter formulae.  These results agree with \cref{thm:main_harmonic}.


\begin{figure}
    \centering
        \subfloat[$a = 1$. Finite difference discretization]{
        \includegraphics[width=.48\textwidth]{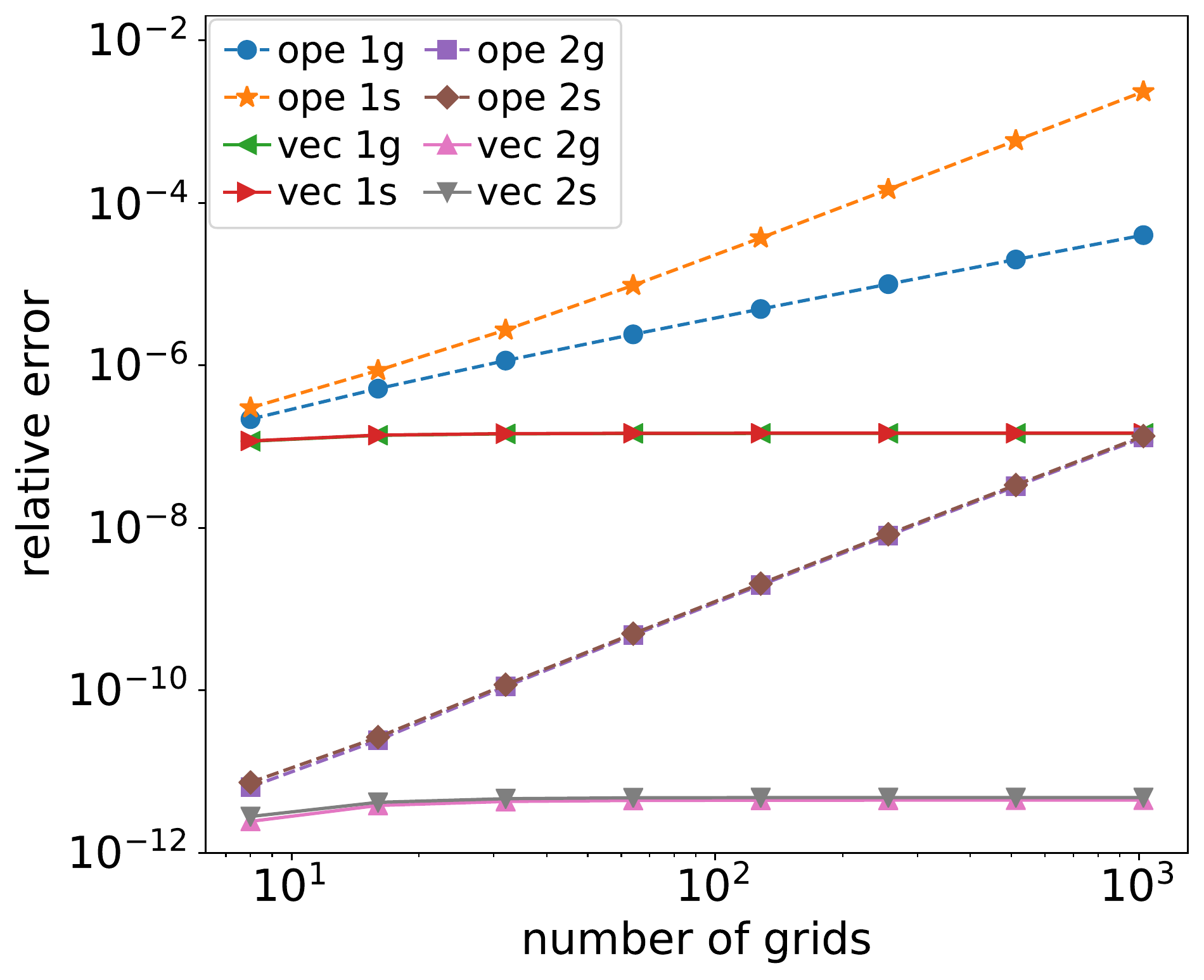}}
    \subfloat[$a = 1$. Fourier discretization]{
        \includegraphics[width=.49\textwidth]{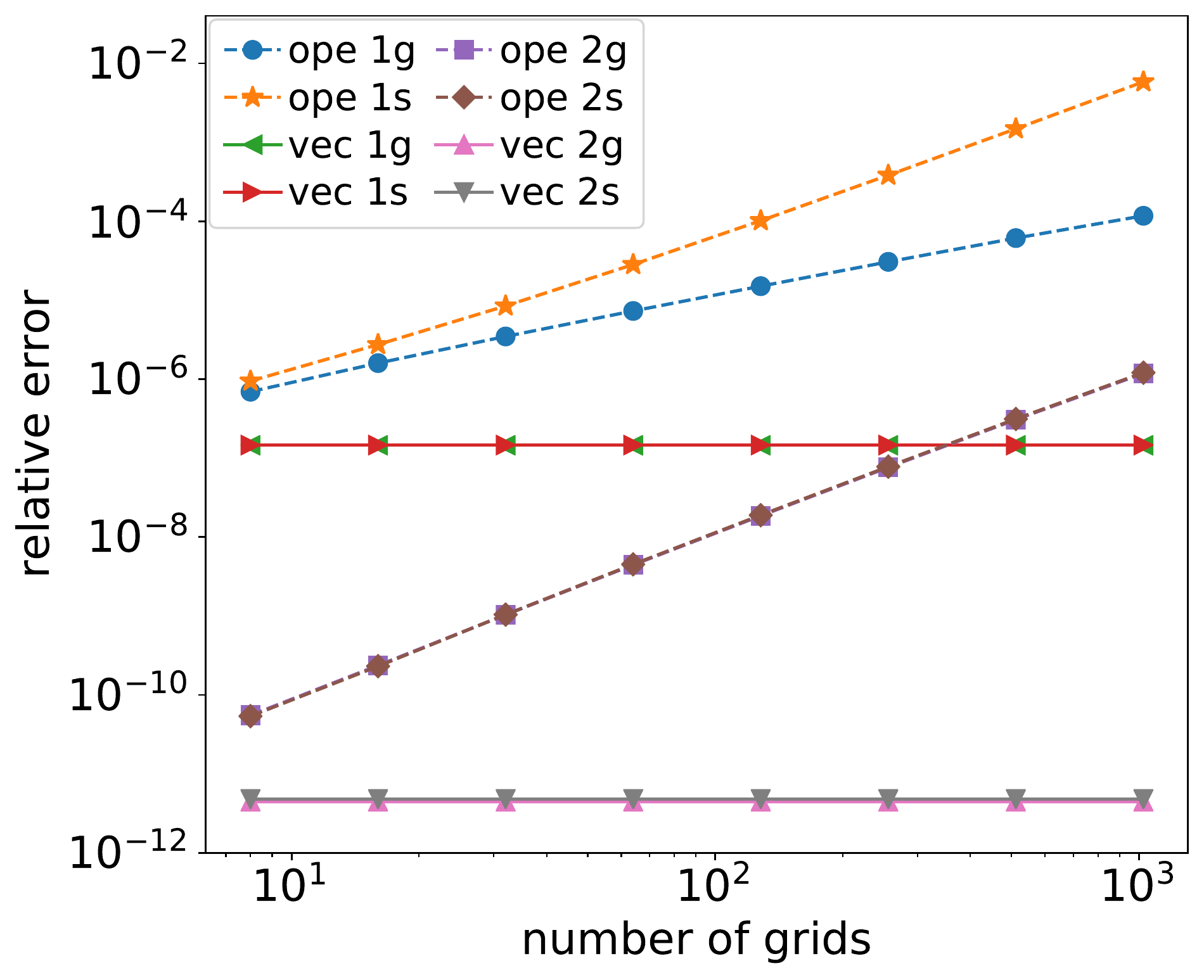}}\\
    \subfloat[$a = 10$. Finite difference discretization]{
        \includegraphics[width=.48\textwidth]{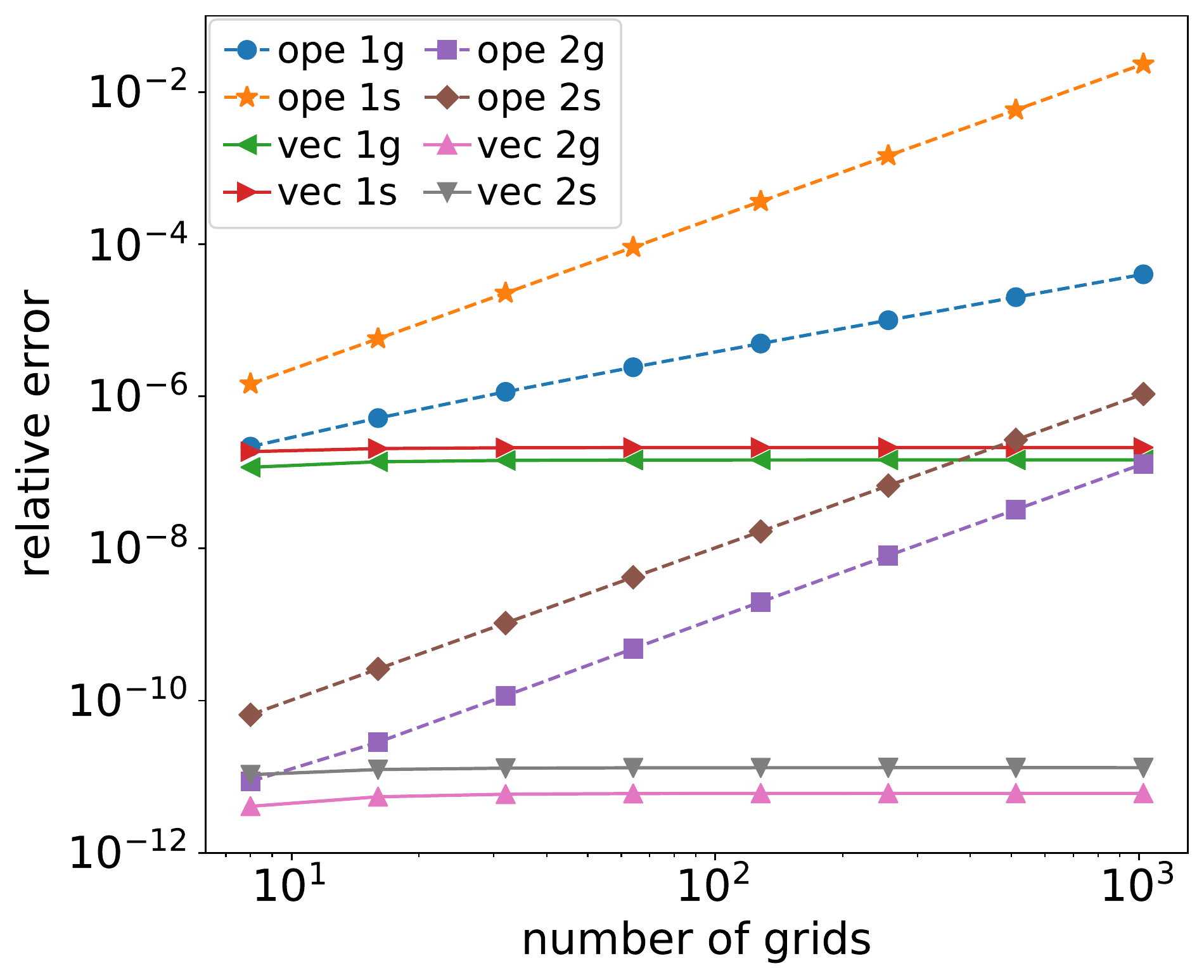}}
    \subfloat[$a = 10$. Fourier discretization]{
        \includegraphics[width=.49\textwidth]{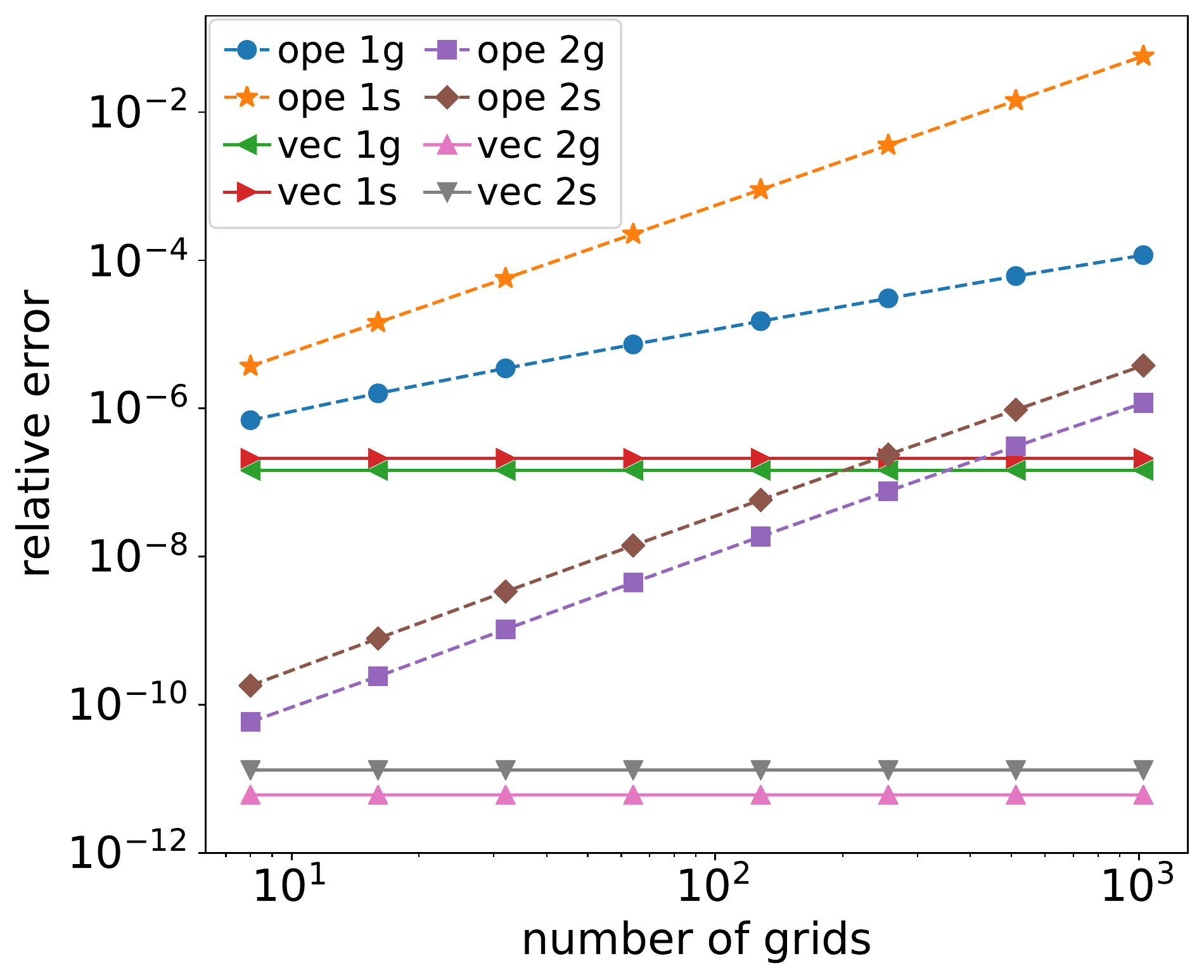}} 
    \caption{Relative Errors in the operator and vector norms. In the legend, ``g'' stands for the generalized Trotter formula and ``s'' for the standard Trotter formula. The error in operator norm is labeled as ``ope'' while the one in vector norm as ``vec''. First Row: $a = 1$ with slowly varying control functions. Second Row: $a = 10$ with fast varying control functions.}
    \label{fig: num_errors}
\end{figure}

\begin{figure}
    \centering
    \includegraphics[width=.55\textwidth]{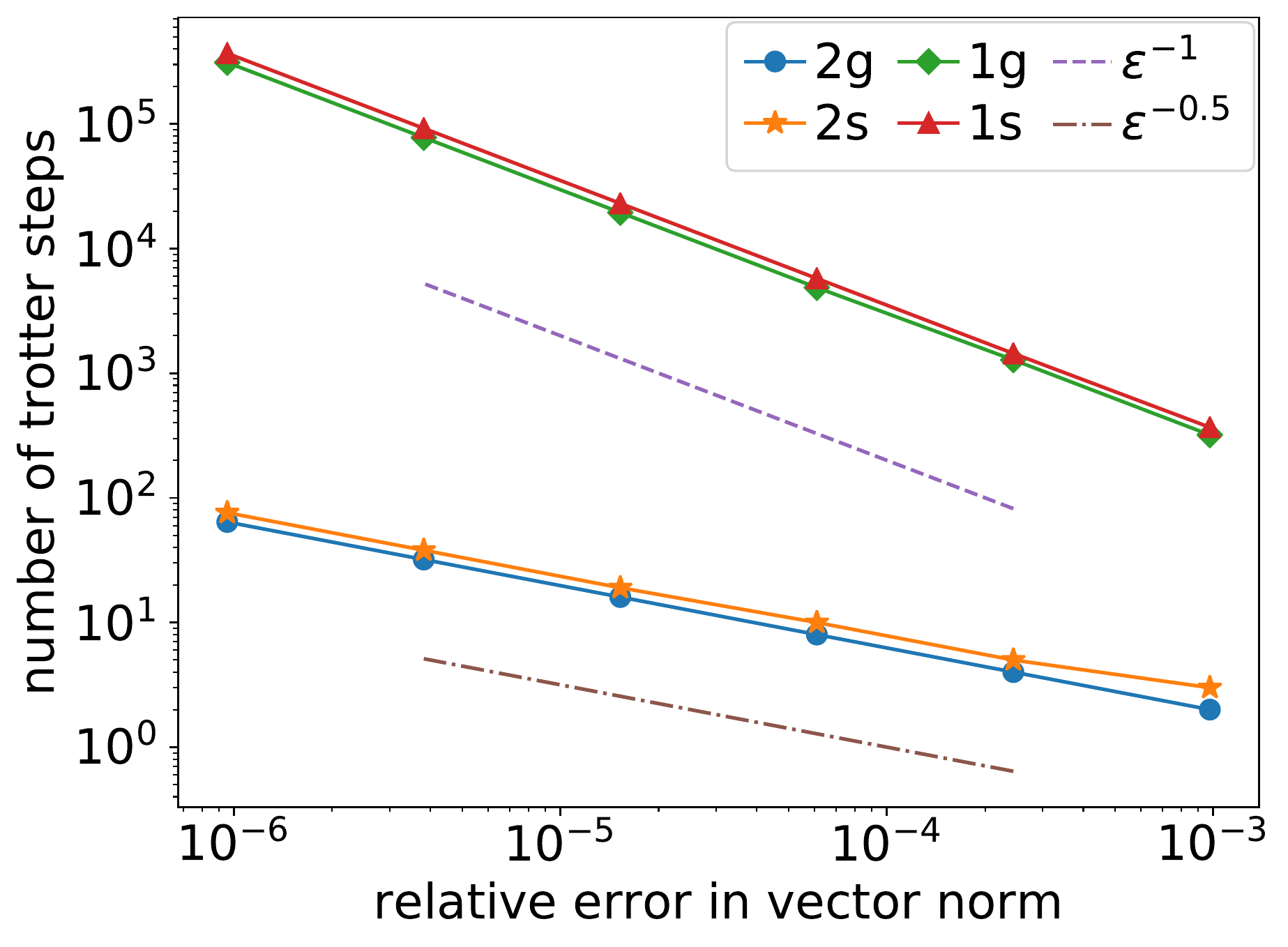}
    \caption{The number of Trotter steps required to achieve various precision for the relative error in the vector norm. The spatial discretization is finite difference. Both second-order Trotter formulae scales proportionally to $\epsilon^{-0.5}$ while the first-order formulae scales as $\epsilon^{-1}$, which agrees with the theoretical bounds.}
    \label{fig:vec_error_eps_scaling}
\end{figure}

\section{Conclusion}\label{sec:conclusion}

We have studied in detail the behavior of first and second order standard and generalized Trotter formulae for time-dependent Hamiltonian simulation with unbounded, control type Hamiltonians. We demonstrated that the error of the Hamiltonian simulation for a given initial state can often be overestimated using the standard analysis based on operator norms, which overestimates the computational cost. By taking into account the information of the initial state in the error analysis, sharper error estimates can be derived via the vector norm scaling. As a side product, we also obtained improved error bounds of the standard and generalized Trotter formulae in operator norm as well in the time-dependent setting.

As an example, we applied our results to the time-dependent Schr\"odinger equation with a time-dependent effective mass and frequency. While the complexities of existing quantum algorithms for time-dependent Hamiltonian simulation scale at least linearly in the spatial discretization parameter $n$, we demonstrate that, the error bounds in vector norm do not suffer from such overheads (for both the standard and generalized Trotter formulae). Thus in this setting, our results outperform all existing quantum algorithms, including higher order Trotter and post-Trotter methods. 

The bilinear form in \cref{eqn:control_ham} facilitates the discussion of the error of the Trotter formulae. 
For the most general Hamiltonian $H(t) = H_1(t)+H_2(t)$, it has been demonstrated that the error bound can be much more complicated even in the second order case~\cite{HuyghebaertDeRaedt1990}.
Nevertheless, under suitable modifications, the main conclusion of this paper can still be applicable to more general time-dependent Hamiltonian under further assumption that $\partial_t^{k} H_j(s)$ and $\partial_t^{k'} H_j(s')$ commute for any  $j = 1, 2$ and $k,k',s,s'$ (thus no essential difference is introduced in taking derivatives of unitaries and deriving error representation).
This allows us to simulate e.g. Schr\"odinger equation with general time-dependent potential function $V(x,t)$.

A natural extension of this work is to consider general high order time-dependent standard and generalized Trotter formulae defined by Suzuki recursion~\cite{Suzuki1993,WiebeBerryHoyerEtAl2010}.
For the operator norm error bound, our results can be generalized to higher order case with a control Hamiltonian \cref{eqn:control_ham}. 
More specifically, let $\mathcal{C}_{k}$ denote the set of the norms of all possible $k$-th order nested commutators of $H_1$ and $H_2$, for example $\mathcal{C}_0 = \{\|H_1\|,\|H_2\|\}$, $\mathcal{C}_1 = \{\|[H_1,H_2]\|\}$, and $\mathcal{C}_2 = \{\|[H_1,[H_1,H_2]]\|, \|[H_2,[H_2,H_1]]\|\}$. 
For $p$-th order schemes, we expect that the one-step operator norm error bounds for the standard and generalized Trotter formula scales as $\Or(\alpha_{s,p}h^{p+1})$, $\Or(\alpha_{g,p}h^{p+1})$, respectively. Here $\alpha_{s,p}$ is a linear combination of terms in the set $\bigcup_{k=0}^{p} \mathcal{C}_k$, while $\alpha_{g,p}$ is expressed as a linear combination of terms in the set $\bigcup_{k=1}^{p} \mathcal{C}_k$. Hence the difference lies in whether $\mathcal{C}_0$ is included, and generalized Trotter formula allows a commutator scaling. Notice that such an error bound will improve the best existing estimate~\cite{WiebeBerryHoyerEtAl2010}\REV{, which depends on the norms of the Hamiltonians as well as their high order derivatives, and does not demonstrate possible commutator scalings}.  



The extension of our vector norm error bounds to $p$-th order time-dependent Trotter formula is also possible. 
The corresponding assumption on the bounds of commutators (\emph{i.e.} counterpart of \cref{assump:commutator_bound} in this work) becomes 
\begin{equation}\label{eqn:assumption_high_order}
    \|[\underbrace{H_1,[H_1,\cdots,[H_1}_{k\text{ repeats}},H_2]]\cdots ]\vec{v}\| \leq \Or\left(\|H_1^{k/2}\vec{v}\|+\|\vec{v}\|\right) 
\end{equation}
for any $1 \leq k \leq p$. 
Compared with the operator norm error bounds, for the Schr\"odinger equation with a time-dependent mass and frequency, such vector norm error bounds can still remove the dependence on the spatial discretization thus provide speedup in terms of the accuracy. However, the significance of such improvement might be subtle: in order to satisfy the assumption in \cref{eqn:assumption_high_order}, the potential function $V(x)$ needs to be much smoother with bounded higher order derivatives. Hence, the dependence of $n$ on the error $\epsilon$ may become much weaker by employing higher order discretization schemes. In such a scenario, the spectral norms $\norm{H_1}$ and $\norm{H_2}$ may even become comparable, and the Hamiltonian $H(t)$ may not be regarded as an unbounded operator after all.

 In this work, we mainly focus on the improvement brought by vector norm error bounds in terms of the accuracy. 
It is also interesting to study whether vector norm error bounds can improve the scalings of other parameters. 
For example, if the Schr\"odinger equation is in $d$ dimension rather than one dimension considered in this paper, then a vector norm error bound may offer speedup in terms of $d$, since the degree of freedom for spatial discretization can scale linearly in $d$~\cite{KivlichanWiebeBabbushEtAl2017}. 
Another related topic is the scaling with respect to the number of the particles in quantum many-body systems.  Recently 
\cite{SuHuangCampbell2020} obtained an improved estimate in terms of the number of electrons for electronic structure problem with  plane-wave basis functions in a second quantized formulation, by combining sparsity, commutator scalings and initial-state knowledge and bounding the operator norm on an $\eta$-electron sub-manifold. 
Although much smaller than that on the entire space, the operator norm on the $\eta$-electron sub-manifold may still overestimate the error, and a vector norm error bound might offer further improvement by taking the smoothness and low-energy property of the wavefunction into consideration. 
It is also an interesting question to investigate whether a vector norm error bound can provide any benefit for other applications such as spin systems.

Our work suggests that it may be of interest to explore the gap between the operator norm and vector norm error bounds in other schemes for Hamiltonian simulations with unbounded operators. Note that such a gap may not exist in all methods. For instance, for time-independent Hamiltonian simulation, the quantum signal processing (QSP) method~\cite{LowChuang2017} is based on the polynomial approximation to the function $\cos(xt)$ and $\sin(xt)$, and we do not expect that the error bound can be significantly improved by considering vector norms. However, it may be possible to prove vector norm error bounds for other post-Trotter methods.


\section*{Acknowledgments:} 

This work was partially supported by the Department of Energy under Grant No.  DE-SC0017867 (D.A.,L.L.), under Grant No. DE-AC02-05CH11231, No. FWP-NQISCCAWL, and by the National Science Foundation under the QLCI program through grant number OMA-2016245 (L.L.). We thank Andrew Childs, Jinpeng Liu, Yuan Su, Yu Tong for helpful discussions.

\bibliographystyle{abbrvurl}
\bibliography{gtrotter}

\clearpage
\appendix

\appendix

\section{Derivation of results in \cref{tab:trotter_comparison}}\label{app:derivation_table}

In this section we show explicitly how to derive the results in \cref{tab:trotter_comparison}. 
Throughout this section we are considering the setup in \cref{sec:harmonic} with $T = \Or(1)$. 

To obtain \cref{tab:trotter_comparison}, we first restate all the complexity estimates for different methods proved in existing literature and show how they depend on $\epsilon$ as well as the scale of the Hamiltonians $H_1$ and $H_2$. 
The dependence on $H_1$ naturally gives rise to the dependence on $n$, by noticing that
$$\|H_1\| = \Or(n^2), \quad \|H_2\| = \Or(1), \quad \|[H_1,H_2]\| = \Or(n),$$
$$\|[H_1,[H_1,H_2]]\| = \Or(n^2), \quad \|[H_2,[H_2,H_1]]\| = \Or(n),$$
as is discussed in \cref{lem:harmonic_H_commu_scaling}. 
Then, under second order finite difference spatial discretization, \cref{lem:harmonic_space_error} and \cref{eqn:n_estimate} tell that $n$ should be chosen as large as $\Or(\epsilon^{-1/2})$. 
Plugging this back into the complexity estimates leads to the overall scaling in terms of $\epsilon$, as shown in the last column of \cref{tab:trotter_comparison}. 

\subsection{Time-independent schemes}

\paragraph{Time-independent second order Trotter formula} \cite[Proposition 16]{ChildsSuTranEtAl2020} gives an operator norm error bound for time-independent second order Trotter formula that the one-step local Trotter error is bounded by 
$$\frac{h^3}{12} \|[H_2,[H_2,H_1]]\| + \frac{h^3}{24} \|[H_1,[H_1,H_2]]\|,$$
thus the global Trotter error is bounded by 
$$\left(\frac{1}{12} \|[H_2,[H_2,H_1]]\| + \frac{1}{24} \|[H_1,[H_1,H_2]]\|\right) \frac{T^3}{L^2} = \Or\left(\frac{n^2 }{L^2}\right).$$
To bound this by $\epsilon$, it suffices to choose 
$$L = \Or\left(\frac{n}{\epsilon^{1/2}}\right) = \Or\left(\frac{1}{\epsilon}\right). $$

\cite[Theorem 3.2]{JahnkeLubich2000} provides a vector norm error bound for time-independent second order Trotter formula that the global Trotter error is bounded by 
$$\Or\left(h^2 \left(\|H_1\vec{v}\|_{\star} + \|D_1\vec{v}\|_{\star} + \|\vec{v}\|_{\star}\right)\right) = \Or\left(\frac{1}{L^2}\left(\|H_1\vec{v}\|_{\star} + \|\vec{v}\|_{\star}\right)\right).$$
We remark that~\cite{JahnkeLubich2000} does not track explicitly the dependence on $T$. 
Noticing that $\|H_1\vec{v}\|_{\star}$ and $\|\vec{v}\|_{\star}$ are independent of $\epsilon$ and $n$ (shown in the proof of \cref{lem:harmonic_time_error}), the number of required Trotter steps scales 
$$L = \Or\left(\frac{1}{\epsilon^{1/2}}\right).$$

\paragraph{Time-independent high order Trotter formula} \cite[Corollary 12]{ChildsSuTranEtAl2020} shows that for a $p$-th order time-independent Trotter formula, the number of required Trotter steps to obtain an $\epsilon$-approximation of the exact evolution operator is 
$$L = \Or\left(\frac{\widetilde{\alpha}_{\text{comm}}^{1/p}T^{1+1/p}}{\epsilon^{1/p}}\right),$$
where 
$$\widetilde{\alpha}_{\text{comm}} = \sum_{\gamma_1,\cdots,\gamma_{p+1} = 1}^2 \|[H_{\gamma_{p+1}},\cdots[H_{\gamma_2},H_{\gamma_1}]]\|.$$ 
Straightforward bounds for these $p$-th nested commutators are that $$\|[H_{\gamma_{p+1}},\cdots[H_{\gamma_2},H_{\gamma_1}]]\| = \Or(\|H_1\|^{p-2} \|[H_1,[H_1,H_2]]\|) = \Or(n^{2p-2}),$$ 
which results in 
$$L = \Or\left(\frac{n^{2-2/p}}{\epsilon^{1/p}}\right) = \Or\left(\frac{1}{\epsilon}\right).$$ 
Notice that the scaling of $\epsilon$ is not improved by higher order Trotter formula. 
This is because such an estimate is made under the assumption that the potential $V(x)$ is a $C^4$ function, therefore we only have better scaling for nested commutator up to second order. 
If the potential $V(x)$ has higher regularity, we expect better bounds to exist for general nested commutators, just like the case of $[H_1,H_2]$ and $[H_1,[H_1,H_2]]$. 
In particular, although we do not present complete proof in this paper, a continuous analog as well as discretization under Fourier basis suggests that the norm of $p$-th order nested commutator $\|[H_1,\cdots,[H_1,H_2]]\|$ is bounded by $\Or(\|D_1^p\|)$ if $V(x)$ is $(2p)$-th order continuously differentiable. 
In that case the complexity can be improved to $L = \Or(n/\epsilon^{1/p})$, although there is still a linear dependence on $n$.

\paragraph{Truncated Taylor series} \cite[Theorem 1]{BerryChildsCleveEtAl2014} shows that to obtain an $\epsilon$-approximation of the exact evolution operator using truncated Taylor series, the query complexity is 
$$\Or\left(d^2\|H\|_{\max}\frac{\log(d \|H\|_{\max}/\epsilon)}{\log\log(d \|H\|_{\max}/\epsilon)} \right). $$
Here $d$ is the sparsity of the Hamiltonian, $\|H\|_{\max}$ denotes the largest matrix element of $H$ in absolute value. 
Notice that $\|H_1\|_{\max} = \Or(n^2)$ since every non-zero entry of $H_1$ is either $n^2$ or $(-2n^2)$, and $\|H_2\|_{\max} = \Or(1)$, we have $\|H\|_{\max} = \Or(n^2)$. 
Therefore the query complexity becomes 
$$\Or\left(n^2\frac{\log(n^2/\epsilon)}{\log\log(n^2/\epsilon)} \right) = \widetilde{\Or}\left(n^2\right) = \widetilde{\Or}\left(\frac{1}{\epsilon}\right).$$ 
We remark that the work~\cite{KivlichanWiebeBabbushEtAl2017} studies further the complexity of simulating time-independent many-body Hamiltonian and discusses carefully the errors from both time and space discretization. 
In this work, the authors use truncated Taylor series as well for time discretization, and use high order finite difference formula for spatial discretization. 
However, they only assume that the potential $V(x)$ is first-order continuous differentiable thus the high order finite difference formula does not offer improved scaling of $n$ than $\Or(1/\epsilon)$~\cite[Theorem 4]{KivlichanWiebeBabbushEtAl2017}, which results in a total complexity $\widetilde{\Or}(1/\epsilon^2)$~\cite[Theorem 3 \& 4]{KivlichanWiebeBabbushEtAl2017}. 
The scaling can be improved if $V(x)$ becomes smoother.

\paragraph{Quantum signal processing} \cite[Theorem 3]{LowChuang2017} proposes a quantum signal processing approach for time-independent Hamiltonian simulation with optimal query complexity in all parameters, which is 
$$\Or\left(d\|H\|_{\max} + \frac{\log(1/\epsilon)}{\log\log(1/\epsilon)} \right). $$
Here $d$ is the sparsity of the Hamiltonian, $\|H\|_{\max}$ denotes the largest matrix element of $H$ in absolute value. 
Notice that $\|H_1\|_{\max} = \Or(n^2)$ since every non-zero entry of $H_1$ is either $n^2$ or $(-2n^2)$, and $\|H_2\|_{\max} = \Or(1)$, we have $\|H\|_{\max} = \Or(n^2)$. 
Therefore the query complexity becomes 
$$\Or\left(n^2 + \frac{\log(n^2/\epsilon)}{\log\log(n^2/\epsilon)} \right) = {\Or}\left(n^2 + \log(n^2/\epsilon)\right) = \Or\left(\frac{1}{\epsilon} + \log(1/\epsilon)\right) = \Or\left(\frac{1}{\epsilon}\right).$$

\paragraph{Interaction picture} \cite[Theorem 7]{LowWiebe2019} shows that by applying truncated Dyson series to simulate time-independent Hamiltonian $H_1+H_2$ in the interaction picture rather than the Schr\"odinger picture, it requires
$$\Or\left(\|H_2\| \frac{\log(\|H_2\|/\epsilon)}{\log\log(\|H_2\|/\epsilon)}\right)$$
queries to $H_2$ and 
$$\Or\left(\|H_2\| \frac{\log(\|H_2\|/\epsilon)}{\log\log(\|H_2\|/\epsilon)} \log\left(\frac{\|H_1\| +\|H_2\|}{\epsilon}\right)\right)$$
queries to the unitary time evolution $e^{-\I s H_1 }$. Therefore the query complexity is logarithmic in $n$ and thus the scaling in terms of $\epsilon$ is still poly-logarithmic. Note that the number of time steps is included in the oracle HAM-T and scales as $\Or(\norm{H_1})$ \cite[Lemma 6]{LowWiebe2019}.

\subsection{Time-dependent schemes}

\paragraph{Time-dependent second order Trotter formulae} \cite[Eq. (A12-A14)]{HuyghebaertDeRaedt1990} show that for generalized second-order Trotter formula applied to the model \cref{eqn:schrodinger_tdmass} with time-independent mass and time-dependent frequency (in particular, $f_2(t)H_2$ and $f_2(s)H_2$ commute for any $t$ and $s$), the one-step local Trotter error scales as 
$$\Or\left(h^3 \left(\|[H_1,H_2]\|+\|[H_1,[H_1,H_2]]\|+\|[H_2,[H_2,H_1]]\|\right)\right),$$
thus the global error scales 
$$\Or\left(h^2\left(\|[H_1,H_2]\|+\|[H_1,[H_1,H_2]]\|+\|[H_2,[H_2,H_1]]\|\right)\right) = \Or\left(\frac{n^2}{L^2}\right).$$
To bound this by $\epsilon$, it suffices to choose 
$$L = \Or\left(\frac{n}{\epsilon^{1/2}}\right) = \Or\left(\frac{1}{\epsilon}\right). $$

The second order complexity estimate from~\cite{WiebeBerryHoyerEtAl2010} is a special case of their general high order result. 
We will show the general case later. 

\cite[Appendix A]{WeckerHastingsWiebeEtAl2015} proves an improved operator norm error bound for the second order standard Trotter formula. 
The one-step local Trotter error is bounded by 
$$\left(\frac{1}{24}\sup\|H''(s)\| + \frac{1}{12} \sup\|[H'(s),H(s)]\| + \sup \|[H_1(s),[H_1(s),H_2(s)]\| + \|[H_2(s),[H_2(s),H_1(s)]\|\right)h^3,$$
thus the global error scales 
$$\Or\left(h^2\left(\|H_1\|+\|[H_1,H_2]\|+\|[H_1,[H_1,H_2]]\|+\|[H_2,[H_2,H_1]]\|\right)\right) = \Or\left(\frac{n^2}{L^2}\right).$$
To bound this by $\epsilon$, it suffices to choose 
$$L = \Or\left(\frac{n}{\epsilon^{1/2}}\right) = \Or\left(\frac{1}{\epsilon}\right).$$

\paragraph{Time-dependent high order Trotter formula} \cite[Theorem 1]{WiebeBerryHoyerEtAl2010} proves that, to simulate a system with Hamiltonian $H(t) = \sum_{j=1}^m H_j(t)$ within operator spectral norm error $\epsilon$ using a $2k$-th order standard Trotter formula, the total number of exponentials is 
$$2m 5^{k-1}\left\lceil 5k \Lambda T \left(\frac{5}{3}\right)^k\left(\frac{\Lambda T}{\epsilon}\right)^{1/(2k)}\right\rceil$$
where 
$$\Lambda = \sup_{p = 0,1,\cdots,2k} \left(\sup_t \left(\sum_{j=1}^m \|\partial_t^p H_j(t)\|\right)^{1/(p+1)}\right).$$
We first notice that the total number of exponentials only differ from the total number of Trotter steps by a factor of $2m5^{k-1}$. After absorbing all the terms independent of $n$ and $\epsilon$ into the big-$\Or$ notation, in the case of the Schr\"odinger equation with a time-dependent effective mass, the total number of Trotter steps becomes 
$$\Or\left(\Lambda \left(\frac{\Lambda}{\epsilon}\right)^{1/2k}\right).$$
It remains to estimate the scaling of $\Lambda$. 
By noticing $\partial_t^p H_j(t) = f_j^{(p)}(t) H_j$, we obtain that  $\left(\sum_{j=1}^m \|\partial_t^p H_j(t)\|\right)$ is dominated by $H_1 = \Or(n^2)$, and 
$$\Lambda = \Or\left(\sup_{p=0,1,\cdots,2k} (n^2)^{1/(p+1)}\right) = \Or\left(n^2\right).$$
Therefore the total number of Trotter steps becomes 
$$\Or\left(\frac{n^{2+1/k}}{\epsilon^{1/(2k)}}\right) =\Or\left(\frac{1}{\epsilon^{1+1/k}}\right).$$

\paragraph{Truncated Dyson series} \cite[Theorem 9]{LowWiebe2019} shows that to obtain an $\epsilon$ approximation of the exact evolution operator with success probability at least $(1-\epsilon)$ using truncated Dyson series method, the query complexity is 
\REV{
$$\Or\left(d \|H\|_{\max,\infty} T \frac{\log(d \|H\|_{\max,\infty} T/\epsilon)}{\log\log(d \|H\|_{\max,\infty} T/\epsilon)}\right). $$
Here $d$ is the sparsity of the Hamiltonian, and $\|H\|_{\max,\infty} = \sup_{t\in[0,T]} \|H(t)\|_{\max}$, where $\|A\|_{\max}$ denotes the largest matrix element of $A$ in absolute value. 
In the case of the model \cref{eqn:schrodinger_tdmass}, noticing that $\|H_1\|_{\max} = \Or(n^2)$ because every non-zero entry of $H_1$ is either $n^2$ or $(-2n^2)$, we have $\|H(t)\|_{\max,\infty} = \Or(\|H_1\|_{\max}) = \Or(n^2)$, then the query complexity becomes 
$$\Or\left(n^2\frac{\log(n^2/\epsilon)}{\log\log(n^2/\epsilon)} \right) = \widetilde{\Or}\left(n^2\right) = \widetilde{\Or}\left(\frac{1}{\epsilon}\right).$$}

\paragraph{Rescaled Dyson series} \cite[Theorem 10]{BerryChildsSuEtAl2020} shows that to obtain an $\epsilon$ approximation of the exact evolution operator using rescaled Dyson series method, the query complexity is 
$$\Or\left(d \|H\|_{\max,1} \frac{\log(d \|H\|_{\max,1}/\epsilon)}{\log\log(d \|H\|_{\max,1}/\epsilon)}\right).$$
Here $d$ is the sparsity of the Hamiltonian, $\|H\|_{\max,1} = \int_0^T \|H(t)\|_{\max} \, dt$ where $\|A\|_{\max} $ denotes the largest matrix element of $A$ in absolute value. 
In the case of the model \cref{eqn:schrodinger_tdmass}, noticing that $\|H_1\|_{\max} = \Or(n^2)$ because every non-zero entry of $H_1$ is either $n^2$ or $(-2n^2)$, we have $\|H\|_{\max,1} = \Or(n^2)$. 
Therefore the query complexity becomes 
$$\Or\left(n^2\frac{\log(n^2/\epsilon)}{\log\log(n^2/\epsilon)} \right) = \widetilde{\Or}\left(n^2\right) = \widetilde{\Or}\left(\frac{1}{\epsilon}\right).$$

We mention that in~\cite{BerryChildsSuEtAl2020} another method called continuous qDRIFT is also proposed to successfully achieve $L^1$ scaling of the Hamiltonian.
However, continuous qDRIFT is a first order method, and its complexity dependence on $\|H\|_{\max,1}$ is quadratic, which is worse than that of rescaled Dyson series. 
Hence we only include the rescaled Dyson series method in our table for comparison.

\section{Proof of error representations}\label{append:proof_error_represent}

In this part, we derive the error representations of the first-order and second-order Trotter formulae, as presented in \cref{lem:error_rep_s_1st} - \cref{lem:error_rep_g_2nd}. All of the proofs consisting of the following two steps: One first compares the derivatives
\begin{equation} \label{eqn:u_h}
        \partial_h U(h,0) = (-\I f_1(h)H_1-\I f_2(h)H_2) U(h,0),
\end{equation} 
and its numerical analogs $\partial_h U_{m,p}(h,0)$ ($m = g, s$ and $p = 1, 2$), and apply the variation of parameter formula (\cref{lem:VoP}); Then the Taylor theorem (\cref{lem:Taylor}) is applied to further simplify the terms.

We first present the proof of \cref{lem:error_rep_g_1st}, since its error representation contains fewest terms. The rest of the error representations, \cref{lem:error_rep_s_1st}, \cref{lem:error_rep_s_2nd} and \cref{lem:error_rep_g_2nd}, follow the exact same idea of proof, just involving more calculations.

\begin{proof}[Proof of \cref{lem:error_rep_g_1st}] By taking derivative of $U_{g,1}(h,0)$ with respect to $h$, one has
\begin{equation} \label{eqn:u_h_g1}
        \begin{split}
            \partial_h U_{g,1}(h,0) &= -\I f_2(h) H_2 U_{g,1}(h,0) + \exp\left(-\I \int_0^h f_2(s)ds H_2\right)\left(-\I f_1(h) H_1\right)\exp\left(-\I \int_0^h f_1(s)ds H_1\right)\\
            & = \left(-\I f_1(h)H_1 - \I f_2(h)H_2\right) U_{g,1}(h,0) \\
            & \quad +  \exp\left(-\I \int_0^h f_2(s)ds H_2\right) E_{g,1}(h)\exp\left(-\I \int_0^h f_1(s)ds H_1\right)
        \end{split}
    \end{equation}
    where $E_{g,1}(h)$ is defined as
    \begin{equation}
        E_{g,1}(h) = \I f_1(h) \left[ \exp\left(\ad_{\I \int_0^h f_2(s')ds' H_2}\right)H_1 - H_1\right]. 
    \end{equation}
    By applying \cref{lem:VoP} to \cref{eqn:u_h} and \cref{eqn:u_h_g1}, one obtains 
    \begin{equation}
        U_{g,1}(h,0) = U(h,0) + \int_0^h U(h,s) \exp\left(-\I \int_0^s f_2(s')ds' H_2\right) E_{g,1}(s)\exp\left(-\I \int_0^s f_1(s')ds' H_1\right) ds. 
    \end{equation}
    The representation of $E_{g,1}$, by Taylor's theorem (\cref{lem:Taylor}), reads
    \begin{equation}
        \begin{split}
             E_{g,1}(h) &= \int_0^h f_1(h) f_2(s) \left(\exp\left(\ad_{\I \int_0^h f_2(s')ds' H_2}\right)([H_1,H_2])\right) ds. 
        \end{split}
    \end{equation}
    
\end{proof}

\begin{proof}[Proof of \cref{lem:error_rep_s_1st}]
One starts by taking derivative of $U_{s,1}(h,0)$ with respect to $h$, which reads
    \begin{equation}\label{eqn:u_h_s1}
        \begin{split}
            \partial_h U_{s,1}(h,0) &= \left(-\I f_2(h)H_2 - \I h f_2'(h)H_2\right)\exp\left(-\I h f_2(h)H_2\right)\exp\left(-\I h f_1(h)H_1\right) \\
            & \quad + \exp\left(-\I h f_2(h)H_2\right)\left(-\I f_1(h)H_1 - \I h f_1'(h)H_1\right)\exp\left(-\I h f_1(h)H_1\right)\\
            & = \left(-\I f_2(h)H_2 - \I f_1(h)H_1\right) U_{s,1}(h,0) \\
            & \quad + \exp\left(-\I h f_2(h)H_2\right) E_{s,1}(h) \exp\left(-\I h f_1(h)H_1\right),
        \end{split}
    \end{equation}
where $E_{s,1}(h)$ is defined as
    \begin{equation}
    \begin{split}
        E_{s,1}(h) =  \I f_1(h) \left[\exp\left(\ad_{\I h f_2(h) H_2}\right)H_1 - H_1\right] 
        - \I h f_1'(h)H_1 - \I h f_2'(h)H_2. 
    \end{split}
    \end{equation}
By applying \cref{lem:VoP} to \cref{eqn:u_h} and \cref{eqn:u_h_s1}, one has
    \begin{equation}
        U_{s,1}(h,0) = U(h,0) + \int_{0}^h U(h,s) \exp\left(-\I s f_2(s)H_2\right) E_{s,1}(s) \exp\left(-\I s f_1(s)H_1\right) ds.
    \end{equation}
It remains to derive the representation of $E_{s,1}$. The representation of $E_{s,1}$ can be derived from Taylor's theorem up to first-order. 
    By \cref{lem:Taylor}
    \begin{equation}
        \begin{split}
            & \quad \exp\left(\ad_{\I h f_2(h) H_2}\right)H_1 - H_1 \\
            & = \int_{0}^h \I f_2(s) \left(\exp\left(\ad_{\I s f_2(s) H_2}\right)([H_2,H_1])\right)ds + \int_{0}^h \I s f_2'(s) \left(\exp\left(\ad_{\I s f_2(s) H_2}\right)([H_2,H_1])\right)ds.
        \end{split}
    \end{equation}
    Therefore, one has
    \begin{equation}
        \begin{split}
            E_{s,1}(h) & =  \int_{0}^h f_1(h)f_2(s) \left(\exp\left(\ad_{\I s f_2(s) H_2}\right)([H_1,H_2])\right)ds - \I h f_1'(h)H_1 - \I h f_2'(h)H_2 \\
            & \quad + \int_{0}^h  s f_1(h) f_2'(s) \left(\exp\left(\ad_{\I s f_2(s) H_2}\right)([H_1,H_2])\right)ds.
        \end{split}
    \end{equation}
\end{proof}

Before proceeding, we first define the following quantities needed in the error representations of the second order standard and generalized Trotter formulae
     \begin{equation}\label{eqn:Es2}
        \begin{split}
            E_{s,2}(h) &= \I \int_0^h f_1''(s) (h-s) H_1 ds -  \frac{\I}{8}\int_0^h f_1''(s/2) (2h-s) H_1 ds - \frac{\I}{4} \int_0^h f_2''(s/2) (2h-s) H_2 ds \\
            & \quad - \frac{\I}{8} \int_0^h  \left[f_1''(s/2)\exp\left(\ad_{-\I s f_2(s/2)H_2}\right)H_1\right](2h-s)ds \\
            & \quad + \frac{1}{4} \int_0^h \left[ f_1'(s/2)f_2(s/2) \left(\exp\left(\ad_{-\I s f_2(s/2)H_2 }\right)H_1\right)\right] h ds \\
            & \quad + \I \int_0^h \left[ f_2''(s)\exp\left(\ad_{i\frac{s}{2}f_1(s/2)H_1}\right) H_2\right](h-s) ds \\
            & \quad +\frac{1}{2}\int_0^h  (f_1'(s/2)f_2(s/2)+f_1(s/2)f_2'(s/2)) \left(\exp\left(\ad_{-\I s f_2(s/2)H_2}\right)[H_1,H_2]\right) (h-s)ds \\
            & \quad - \frac{1}{2} \int_0^h \left(f_2'(s)f_1(s/2)+f_2(s)f_1'(s/2) \right)\left(\exp\left(\ad_{i\frac{s}{2}f_1(s/2)H_1}\right) [H_1,H_2]\right)(h-s)ds  \\
            & \quad +\frac{\I}{2} \int_0^h \left[f_1(s/2)f_2^2(s/2) \left(\exp\left(\ad_{-\I s f_2(s/2)H_2}\right) [H_2,[H_1,H_2]]\right)\right](h-s)ds \\
            & \quad - \frac{\I}{4} \int_0^h \left[f_2(s)f_1^2(s/2) \left(\exp\left(\ad_{i\frac{s}{2}f_1(s/2)H_1}\right) [H_1,[H_1,H_2]]\right) \right] (h-s)ds \\
            & \quad + \frac{1}{8} \int_0^h \left[ f_1'(s/2)f_2'(s/2) \left(\exp\left(\ad_{-\I s f_2(s/2)H_2}\right)H_1\right) \right]sh ds  \\
            & \quad +\frac{1}{4}\int_0^h  \left( f_1'(s/2)f_2'(s/2) + \frac{1}{2}f_1(s/2)f_2''(s/2)\right) \left(\exp\left(\ad_{-\I s f_2(s/2)H_2}\right) [H_1,H_2]\right) s(h-s)ds  \\
            & \quad - \frac{1}{4}\int_0^h \left( f_2'(s)f_1'(s/2) + \frac{ 1}{2}f_2(s)f_1''(s/2)\right)\exp\left(\ad_{i\frac{s}{2}f_1(s/2)H_1}\right) [H_1,H_2]s(h-s)ds  \\
            & \quad +\frac{\I}{2} \int_0^h \left[f_1(s/2) f_2(s/2)f_2'(s/2) \left(\exp\left(\ad_{-\I s f_2(s/2)H_2}\right) [H_2,[H_1,H_2]]\right)\right]s(h-s)ds \\
            & \quad - \frac{\I}{4} \int_0^h \left[f_2(s)f_1(s/2)f_1'(s/2) \left(\exp\left(\ad_{i\frac{s}{2}f_1(s/2)H_1}\right) [H_1,[H_1,H_2]]\right) \right] s(h-s)ds \\
            & \quad +\frac{\I}{8} \int_0^h \left[f_1(s/2)f_2'^2(s/2) \left(\exp\left(\ad_{-\I s f_2(s/2)H_2}\right) [H_2,[H_1,H_2]]\right)\right]s^2(h-s)ds \\
            & \quad - \frac{\I}{16} \int_0^h \left[f_2(s)f_1'^2(s/2) \left(\exp\left(\ad_{i\frac{s}{2}f_1(s/2)H_1}\right) [H_1,[H_1,H_2]]\right) \right] s^2(h-s)ds,  
        \end{split}
    \end{equation}
and
    \begin{equation}\label{eqn:Eg2}
        \begin{split}
            E_{g,2}(h) &= -\frac{h}{2}f_1(0)\int_0^h f_2'(s)ds [H_1,H_2] + 
        \frac{h}{4}f_2(0)\int_0^h f_1'(s/2)ds[H_1,H_2] \\
        & \quad - f_2(h)\int_0^h( f_1'(s)-\frac{1}{4} f_1'(s/2))
        \left(\exp\left(\ad_{\I\int_{s/2}^{s}f_1(s')ds' H_1}\right)[H_1,H_2]\right)(h-s)ds \\
         & \quad + \frac{1}{2}f_1(h/2) \int_0^h  f_2'(s)\left(\exp\left(\ad_{-\I\int_{0}^{s}f_2(s')ds' H_2}\right)[H_1,H_2]\right)(h-s)ds \\
        & \quad - \I f_2(h)\int_0^h\left( f_1(s)-\frac{1}{2}f_1(s/2)\right)^2
        \left(\exp\left(\ad_{\I\int_{s/2}^{s}f_1(s')ds' H_1}\right)[H_1,[H_1,H_2]]\right)(h-s)ds\\
        &\quad + \frac{\I}{2}f_1(h/2)\int_0^h  f_2^2(s)\left(\exp\left(\ad_{-\I\int_{0}^{s}f_2(s')ds' H_2}\right)[H_2,[H_2,H_1]]\right)(h-s)ds.
        \end{split}
    \end{equation}

\begin{proof}[Proof of \cref{lem:error_rep_s_2nd}]
One first compute the derivative with respect to $h$ of $U_{s,2}$
    \begin{equation}
        \begin{split}
            \partial_h U_{s,2} &= \left(-\I \frac{1}{2}f_1(h/2)- \I\frac{h}{4}f_1'(h/2)\right)H_1\exp\left(-\I\frac{h}{2}f_1(h/2)H_1\right)\exp\left(-\I h f_2(h/2)H_2\right)\exp\left(-\I\frac{h}{2}f_1(h/2)H_1\right) \\
            & \quad + \exp\left(-\I\frac{h}{2}f_1(h/2)H_1\right)\left(-\I f_2(h/2)-\I \frac{h}{2} f_2'(h/2)\right)H_2\exp\left(-\I h f_2(h/2)H_2\right)\exp\left(-\I\frac{h}{2}f_1(h/2)H_1\right) \\
            & \quad + \exp\left(-\I\frac{h}{2}f_1(h/2)H_1\right)\exp\left(-\I h f_2(h/2)H_2\right)\left(-\I \frac{1}{2}f_1(h/2)- \I\frac{h}{4}f_1'(h/2)\right)H_1\exp\left(-\I\frac{h}{2}f_1(h/2)H_1\right) \\
            & = \left(-\I f_1(h)H_1 - \I f_2(h)H_2\right)U_{s,2} \\
            & \quad + \exp\left(-\I\frac{h}{2}f_1(h/2)H_1\right)E_{s,2}(h)\exp\left(-\I h f_2(h/2)H_2\right)\exp\left(-\I\frac{h}{2}f_1(h/2)H_1\right), 
        \end{split}
    \end{equation}
where $E_{s,2}(h)$ is defined as
    \begin{equation}
        \begin{split}
            E_{s,2}(h) &= \I f_1(h)H_1 - \I\frac{1}{2}f_1(h/2)H_1 -  \I \frac{h}{4}f_1'(h/2) H_1 -\left(\I\frac{1}{2}f_1(h/2)+\I\frac{h}{4}f_1'(h/2)\right)\exp\left(\ad_{-\I h f_2(h/2)H_2 }\right)H_1 \\
            & \quad + \I f_2(h) \exp\left(\ad_{i\frac{h}{2}f_1(h/2)H_1}\right) H_2 - \left(\I f_2(h/2) + \frac{\I h}{2}f_2'(h/2)\right)H_2. 
        \end{split}
    \end{equation}
    Similar as the proofs for first-order formulae, applying \cref{lem:VoP} gives
    \begin{equation}
        U_{s,2}(h,0) = U(h,0) + \int_0^{h} U(h,s) \exp\left(-\I\frac{s}{2}f_1(s/2)H_1\right)E_{s,2}(s)\exp\left(-\I s f_2(s/2)H_2\right)\exp\left(-\I\frac{s}{2}f_1(s/2)H_1\right) ds. 
    \end{equation}
    The rest of the proof follows straightforward calculations. To be exact, one then applies the Taylor's theorem (\cref{lem:Taylor}) to expand each term in $E_{s,2}$ to second-order in terms of $h$ with respect to 0. 
    The first three terms can be expressed as
    \begin{equation}
        \I f_1(h) H_1 = \I f_1(0) H_1 + \I h f_1'(0) H_1 + \I \int_0^h f_1''(s) (h-s) H_1 ds, 
    \end{equation}
    \begin{equation}
        -\I \frac{1}{2}f_1(h/2) H_1 = -\I \frac{1}{2} f_1(0) H_1 - \I \frac{h}{4} f_1'(0) H_1 - \I \frac{1}{8}\int_0^h f_1''(s/2) (h-s) H_1 ds, 
    \end{equation}
    \begin{equation}
        -  \I \frac{h}{4}f_1'(h/2) H_1 = -  \I \frac{h}{4}f_1'(0)H_1 -  \I \frac{h}{8} \int_0^h f_1''(s/2) H_1 ds, 
    \end{equation}
    Similarly, let us apply the Taylor theorem to the fourth term in $E_{s,2}$, which is the sum of
    \begin{equation}
        \begin{split}
            & -\I\frac{1}{2}f_1(h/2)\exp\left(\ad_{-\I h f_2(h/2)H_2 }\right)H_1 
            \\
            = & -\frac{\I}{2} f_1(0)H_1 - \frac{\I h}{4}f_1'(0)H_1 + \frac{h}{2}f_1(0)f_2(0)[H_1,H_2] 
            - \frac{\I}{2} \int_0^h ds \frac{1}{4}f_1''(s/2)\exp\left(\ad_{-\I s f_2(s/2)H_2}\right)H_1 (h-s)
            \\
            &- \frac{\I}{2} \int_0^h ds f_1(s/2)\left(\I f_2(s/2)+\frac{\I s}{2}f_2'(s/2)\right)^2 \exp\left(\ad_{-\I s f_2(s/2)H_2}\right) [H_2,[H_1,H_2]] (h-s) 
            \\
            & - \frac{\I}{2} \int_0^h ds \left(\I(f_1'(s/2)f_2(s/2)+f_1(s/2)f_2'(s/2))+\frac{\I s}{2}f_1'(s/2)f_2'(s/2) + \frac{\I s}{4}f_1(s/2)f_2''(s/2)\right)  \\
            & \quad\quad\quad\quad\quad\quad\quad\quad \times \exp\left(\ad_{-\I s f_2(s/2)H_2}\right) [H_1,H_2] (h-s),  
        \end{split}
    \end{equation}
    and
    \begin{equation}
        \begin{split}
            &-\I\frac{h}{4}f_1'(h/2)\exp\left(\ad_{-\I h f_2(h/2)H_2 }\right)H_1 \\
            = & -\frac{\I h}{4} f_1'(0) H_1 
             - \frac{\I h}{4} \int_0^h ds  \frac{1}{2}f_1''(s/2)\exp\left(\ad_{-\I s f_2(s/2)H_2 }\right)H_1 \\
            &  - \frac{\I h}{4} \int_0^h ds \I f_1'(s/2)\left(f_2(s/2) + \frac{s}{2}f_2'(s/2)\right) \exp\left(\ad_{-\I s f_2(s/2)H_2 }\right)H_1  , 
        \end{split}
    \end{equation}
  The fifth term in $E_{s,2}$ reads
    \begin{equation}
        \begin{split}
            &  \I f_2(h) \exp\left(\ad_{i\frac{h}{2}f_1(h/2)H_1}\right) H_2 \\
            & = \I f_2(0) H_2 + \I h f_2'(0) H_2 - \frac{h}{2} f_2(0)f_1(0) [H_1,H_2] \\
            &   + \I \int_0^h ds (h-s)
            f_2''(s)\exp\left(\ad_{i\frac{s}{2}f_1(s/2)H_1}\right) H_2 (h-s) \\
            & + \I \int_0^h ds (h-s) f_2(s)\left(\frac{\I}{2}f_1(s/2)+\frac{\I s}{4}f_1'(s/2)\right)^2 \exp\left(\ad_{i\frac{s}{2}f_1(s/2)H_1}\right) [H_1,[H_1,H_2]] \Big] (h-s)\\
            & + \I \int_0^h ds (h-s) \left(\frac{\I}{2}f_2'(s)f_1(s/2)+\frac{\I}{2}f_2(s)f_1'(s/2) + \frac{\I s}{4} f_2'(s)f_1'(s/2) + \frac{\I s}{8}f_2(s)f_1''(s/2)\right)   \\
            & \quad\quad\quad\quad\quad\quad\quad\quad\quad\quad\times \exp\left(\ad_{i\frac{s}{2}f_1(s/2)H_1}\right) [H_1,H_2] (h-s), 
        \end{split}
    \end{equation}
    The last term in $E_{s,2}$ is the sum of
    \begin{equation}
        -\I f_2(h/2)H_2 = -\I f_2(0)H_2 - \I \frac{h}{2} f_2'(0)H_2 - \I \int_0^h \frac{1}{4}f_2''(s/2) (h-s)ds, 
    \end{equation}
    and 
    \begin{equation}
        -\frac{\I h}{2}f_2'(h/2)H_2 = -\frac{\I h}{2} f_2'(0)H_2 - \frac{\I h}{4}\int_0^h f_2''(s/2)H_2ds. 
    \end{equation}
    Notice that, if we add the above eight equations together, all the zeroth-order and first-order terms of $h$ cancel, then the desired expression of $ E_{s,2}(h)$ is achieved.

\end{proof}

\begin{proof}[Proof of \cref{lem:error_rep_g_2nd}]
    The strategy for proving \cref{lem:error_rep_g_2nd} is the same as that for \cref{lem:error_rep_s_2nd}. 
    By taking derivatives with respect to $h$ in both $U(h,0)$ and $U_{g,2}(h,0)$, we have \cref{eqn:u_h} and 
    \begin{equation}
        \begin{split}
            \partial_h U_{g,2}(h,0) &= \left(-\I f_1(h)+ \frac{\I}{2} f_1(h/2)\right)H_1U_{g,2}(h,0)\\
        & \quad + \exp\left(-\I\int_{h/2}^{h}f_1(s)dsH_1\right)(-\I f_2(h)H_2) \exp\left(-\I\int_{0}^{h}f_2(s)dsH_2\right)\exp\left(-\I\int_{0}^{h/2}f_1(s)dsH_1\right) \\
        & \quad + \exp\left(-\I\int_{h/2}^{h}f_1(s)dsH_1\right)\exp\left(-\I\int_{0}^{h}f_2(s)dsH_2\right)\\
        & \quad\quad \times \left(-\frac{\I}{2} f_1(h/2)/H_1\right) \exp\left(-\I\int_{0}^{h/2}f_1(s)dsH_1\right) \\
        & = -(\I f_1(h)H_1 + \I f_2(h)H_2)U_{g,2}(h,0) \\
        & \quad\quad + \exp\left(-\I\int_{h/2}^h f_1(s)ds H_1\right)E_{g,2}(h)\exp\left(-\I \int_0^h f_2(s)ds H_2\right)\exp\left(-\I\int_0^{h/2}f_1(s)ds H_1\right), 
        \end{split}
    \end{equation}
    where $E_{g,2}(h)$ denotes
    \begin{equation}
        \begin{split}
            E_{g,2}(h) &= if_2(h)\exp(\ad_{\I\int_{h/2}^{h}f_1(s)dsH_1})H_2 
        + \frac{\I}{2} f_1(h/2)H_1  \\
        & \quad - \I f_2(h)H_2 - \frac{\I}{2} f_1(h/2)\exp\left(\ad_{-\I\int_{0}^{h}f_2(s)dsH_2}\right)H_1 \\
        &= \I f_2(h)\left[\exp\left(\ad_{\I\int_{h/2}^{h}f_1(s)dsH_1}\right)H_2 - H_2\right] \\
        & \quad - \frac{\I}{2}f_1(h/2)\left[\exp\left(\ad_{-\I\int_{0}^{h}f_2(s)dsH_2}\right)H_1-H_1\right]. 
        \end{split}
    \end{equation}
     By applying \cref{lem:VoP}, we have 
    \begin{equation}
    \begin{split}
        U_{g,2}(h,0) = U(h,0) + & \int_0^{h} U(h,s) \exp\left(-\I\int_{s/2}^s f_1(s')ds' H_1\right)E_{g,2}(s) \\
        & \quad \times \exp\left(-\I \int_0^s f_2(s')ds' H_2\right)\exp\left(-\I\int_0^{s/2}f_1(s')ds' H_1\right) ds. 
    \end{split}
    \end{equation}
    It remains to derive the representation of $E_{g,2}$. It follows from the Taylor's theorem (\cref{lem:Taylor}) that
    \begin{align*}
        &\quad \exp\left(\ad_{\I \int_{h/2}^{h}f_1(s)dsH_1}\right)H_2 - H_2 \\
        &= \frac{\I h}{2}f_1(0)[H_1,H_2] + \int_0^h\left(\I f_1'(s)-\frac{1}{4}\I f_1'(s/2)\right)
        \left(\exp\left(\ad_{\I\int_{s/2}^{s}f_1(s')ds' H_1}\right)[H_1,H_2]\right)(h-s)ds \\
        & \quad\quad  + \int_0^h\left(\I f_1(s)-\frac{1}{2}\I f_1(s/2)\right)^2
        \left(\exp\left(\ad_{\I\int_{s/2}^{s}f_1(s')ds'H_1}\right)[H_1,[H_1,H_2]]\right)(h-s)ds,
    \end{align*}
    and
    \begin{align*}
        &\quad \exp\left(\ad_{-\I\int_{0}^{h}f_2(s)ds H_2}\right)H_1 - H_1 \\
        &= -\I h f_2(0)[H_2,H_1] - \int_0^h \I f_2'(s)\left(\exp\left(\ad_{-\I\int_{0}^{s}f_2(s')ds' H_2}\right)[H_2,H_1]\right)(h-s)ds \\
        &\quad\quad + \int_0^h (\I f_2(s))^2\left(\exp\left(\ad_{-\I\int_{0}^{s}f_2(s')ds' H_2}\right)[H_2,[H_2,H_1]]\right)(h-s)ds. 
    \end{align*}
    Thus we have
    \begin{align*}
        E_{g,2}(h) &= \I f_2(h)\frac{\I h}{2}f_1(0)[H_1,H_2] 
        + \frac{\I}{2}f_1(h/2)\I h f_2(0)[H_2,H_1] \\
        & \quad + \I f_2(h)\int_0^h\left(\I f_1'(s)-\frac{\I}{4}f_1'(s/2)\right)
        \left(\exp\left(\ad_{\I\int_{s/2}^{s}f_1(s')ds'H_1}\right)[H_1,H_2]\right)(h-s)ds \\
        & \quad + \I f_2(h)\int_0^h\left(\I f_1(s)-\frac{\I}{2}f_1(s/2)\right)^2
        \left(\exp\left(\ad_{\I\int_{s/2}^{s}f_1(s')ds'H_1}\right)[H_1,[H_1,H_2]]\right)(h-s)ds\\
        & \quad +\frac{\I}{2}f_1(h/2) \int_0^h \I f_2'(s)\left(\exp\left(\ad_{-\I\int_{0}^{s}f_2(s')ds' H_2}\right)[H_2,H_1]\right)(h-s)ds \\
        &\quad - \frac{\I}{2}f_1(h/2)\int_0^h (\I f_2(s))^2\left(\exp\left(\ad_{-\I\int_{0}^{s}f_2(s')ds' H_2}\right)[H_2,[H_2,H_1]]\right)(h-s)ds \\
        &=  -\frac{h}{2}f_1(0)\int_0^h f_2'(s)ds [H_1,H_2] + 
        \frac{h}{4}f_2(0)\int_0^h f_1'(s/2)ds[H_1,H_2] \\
        & \quad - f_2(h)\int_0^h( f_1'(s)-\frac{1}{4} f_1'(s/2))
        \left(\exp\left(\ad_{\I\int_{s/2}^{s}f_1(s')ds' H_1}\right)[H_1,H_2]\right)(h-s)ds \\
         & \quad + \frac{1}{2}f_1(h/2) \int_0^h  f_2'(s)\left(\exp\left(\ad_{-\I\int_{0}^{s}f_2(s')ds' H_2}\right)[H_1,H_2]\right)(h-s)ds \\
        & \quad - \I f_2(h)\int_0^h\left( f_1(s)-\frac{1}{2}f_1(s/2)\right)^2
        \left(\exp\left(\ad_{\I\int_{s/2}^{s}f_1(s')ds' H_1}\right)[H_1,[H_1,H_2]]\right)(h-s)ds\\
        &\quad + \frac{\I}{2}f_1(h/2)\int_0^h  f_2^2(s)\left(\exp\left(\ad_{-\I\int_{0}^{s}f_2(s')ds' H_2}\right)[H_2,[H_2,H_1]]\right)(h-s)ds. 
    \end{align*}

\end{proof}

\section{Proof of \cref{lem:harmonic_verification_assumption}}\label{append:proof_assumption}

\begin{proof}
    Let $V_k^{(0)} = V_k = V(x_k)$ for $0 \leq k \leq n-1$ and $V_{k+n} = V_{k} = V_{k-n}$ defined in a cyclic manner. Recursively we define $V_{k}^{(j+1)} = n(V_{k+1}^{(j)} - V_{k}^{(j)})$. 
    Notice that $V_{k}^{(j)}$ is an approximation of the $j$-th order derivative of $V(x)$ evaluated at $x = x_k$. 
    By Taylor's theorem and the assumption that $V(x)$ has bounded derivatives up to fourth order, we obtain that $V_{k}^{(j)}$ is bounded for any $k$ and $0 \leq j \leq 4$. 
    The equality $H_1 = D_1^{\dagger}D_1$ directly follows from the definition. 
    We focus on the proof of the commutator bounds. 
    
    We start with the calculation of an explicit expression of $[H_1,H_2]$, 
    \begin{equation}
        H_1H_2 = n^2 \left(\begin{array}{ccccc}
        2V_0 & -V_1 & & & -V_{n-1} \\
         -V_0& 2V_1 & -V_2 & & \\
          & \ddots& \ddots& \ddots& \\
           & & -V_{n-3}& 2V_{n-2} & -V_{n-1}\\
        -V_0& & & -V_{n-2} & 2V_{n-1} \\
    \end{array}\right), 
    \end{equation}
    \begin{equation}
        H_2H_1 = n^2 \left(\begin{array}{ccccc}
        2V_0 & -V_0 & & & -V_{0} \\
         -V_1& 2V_1 & -V_1 & & \\
          & \ddots& \ddots& \ddots& \\
           & & -V_{n-2}& 2V_{n-2} & -V_{n-2}\\
        -V_{n-1}& & & -V_{n-1} & 2V_{n-1} \\
    \end{array}\right). 
    \end{equation}
    Then 
    \begin{equation}
    \begin{split}
        [H_1,H_2] &= n^2 \left(\begin{array}{ccccc}
        0 & V_0-V_1 & & & V_0-V_{n-1} \\
         V_1-V_0& 0 & V_1-V_2 & & \\
          & \ddots& \ddots& \ddots& \\
           & & V_{n-2}-V_{n-3}& 0 & V_{n-2}-V_{n-1}\\
        V_{n-1}-V_0& & & V_{n-1}-V_{n-2} & 0 \\
    \end{array}\right) \\
    & = n \left(\begin{array}{ccccc}
        0 & -V_0^{(1)} & & & V_{n-1}^{(1)} \\
         V_0^{(1)}& 0 & -V_1^{(1)} & & \\
          & \ddots& \ddots& \ddots& \\
           & & V_{n-3}^{(1)}& 0 & -V_{n-2}^{(1)}\\
        -V_{n-1}^{(1)} & & & V_{n-2}^{(1)} & 0 \\
    \end{array}\right).
    \end{split}
    \end{equation}
    We further split $[H_1,H_2] = D_L + D_R + S$ where 
    \begin{equation}
        D_L = n \left(\begin{array}{ccccc}
        -V_{n-1}^{(1)} &  & & & V_{n-1}^{(1)} \\
         V_0^{(1)}& -V_0^{(1)} &  & & \\
          & \ddots& \ddots& & \\
           & & V_{n-3}^{(1)}& -V_{n-3}^{(1)} & \\
         & & & V_{n-2}^{(1)} & -V_{n-2}^{(1)} \\
    \end{array}\right), 
    \end{equation}
    \begin{equation}
        D_R = n \left(\begin{array}{ccccc}
        V_0^{(1)} & -V_0^{(1)} & & &  \\
         & V_1^{(1)} & -V_1^{(1)} & & \\
          & & \ddots& \ddots& \\
           & & & V_{n-2}^{(1)} & -V_{n-2}^{(1)}\\
        -V_{n-1}^{(1)} & & &  & V_{n-1}^{(1)} \\
    \end{array}\right).
    \end{equation}
    and $S = -\text{diag}(V_{n-1}^{(2)}, V_{0}^{(2)}, V_{1}^{(2)}, \cdots, V_{n-2}^{(2)})$. 
    Notice that for any vector $\vec{v} = (v_k)_{k=0}^{n-1}$, 
    \begin{equation}
    \begin{split}
        \|D_L \vec{v}\|_{\star}^2 &= n\sum_{k=0}^{n-1} |V_{k-1}^{(1)}|^2|v_{k-1}-v_k|^2 \\
        & \leq n\sup|V_{k}^{(1)}|^2 \sum_{k=0}^{n-1}|v_{k-1}-v_k|^2 
         = n\sup|V_{k}^{(1)}|^2 \left(\vec{v}^\dagger \frac{H_1}{n^2}  \vec{v}\right) \\
        & = \frac{1}{n}\sup|V_{k}^{(1)}|^2 \left(\vec{v}^\dagger H_1  \vec{v}\right) 
         = \sup|V_{k}^{(1)}|^2 \|D_1\vec{v}\|_{\star}^2, 
    \end{split}
    \end{equation}
    and similarly $\|D_R \vec{v}\|_{\star}^2 \leq \sup|V_{k}^{(1)}|^2 \|D_1 \vec{v}\|_{\star}^2$. 
    Furthermore we have $\|S\vec{v}\|_{\star} \leq \sup|V_k^{(2)}| \|\vec{v}\|_{\star}$, thus there exists $\widetilde{C}$ such that 
    \begin{equation}
        \|[H_1,H_2]\vec{v}\|_{\star} \leq \|D_L\vec{v}\|_{\star} + \|D_R\vec{v}\|_{\star} + \|S\vec{v}\|_{\star} \leq \widetilde{C} (\|D_1\vec{v}\|_{\star} + \|\vec{v}\|_{\star}) . 
    \end{equation}
    
    To bound $\|[H_1,[H_1,H_2]]\vec{v}\|_{\star}$, we first compute $[H_1,[H_1,H_2]]$ and it gives 
    \begin{equation}
        [H_1,[H_1,H_2]] = n^2 \left(\begin{array}{ccccccc}
        -2V_{n-1}^{(2)} &  & V_0^{(2)} &  &. & V_{n-2}^{(2)} & \\
          & -2V_0^{(2)} & & V_1^{(2)} &  & &  V_{n-1}^{(2)} \\
         V_0^{(2)} &  & -2V_2^{(2)} &  & & & \\
           & V_1^{(2)} & & -2V_3^{(2)} &  & & \\
              & & \ddots & \ddots & \ddots & \ddots & V_{n-3}^{(2)} \\
        V_{n-2}^{(2)} & & & & & -2V_{n-3}^{(2)} & \\
         & V_{n-1}^{(2)} & & & V_{n-3}^{(2)} & & -2V_{n-2}^{(2)}
    \end{array}\right), 
    \end{equation}
    \emph{i.e. } the only non-zero entries are
    $$[H_1,[H_1,H_2]]_{k+1,k+1} = -2V_{k}^{(2)}, \quad [H_1,[H_1,H_2]]_{k,k+2} = [H_1,[H_1,H_2]]_{k+2,k} = V_{k}^{(2)}. $$
    Then we split $[H_1,[H_1,H_2]] = H_L+H_R+2H_C+2D_{DL}+2D_{DR}+W$ where 
    \begin{equation}
        H_L = n^2 \left(\begin{array}{ccccc}
        V_{n-2}^{(2)} &  &  & V_{n-2}^{(2)} & -2V_{n-2}^{(2)} \\
        -2V_{n-1}^{(2)} & V_{n-1}^{(2)} &  &  & V_{n-1}^{(2)} \\
          & & \ddots& \ddots& \\
           & & & \ddots &\\
        V_{n-3}^{(2)} &  & & -2V_{n-3}^{(2)} & V_{n-3}^{(2)} \\
    \end{array}\right), 
    \end{equation}
    \begin{equation}
        H_R = n^2 \left(\begin{array}{ccccc}
        V_0^{(2)} & -2V_0^{(2)} & V_0^{(2)} & &  \\
         & V_1^{(2)} & -2V_1^{(2)} & V_1^{(2)} & \\
          & & \ddots& \ddots& \\
           & & & \ddots &\\
        -2V_{n-1}^{(2)} & V_{n-1}^{(2)} & &  & V_{n-1}^{(2)} \\
    \end{array}\right), 
    \end{equation}
    \begin{equation}
        H_C = n^2 \left(\begin{array}{ccccc}
        -2V_{n-1}^{(2)}  & V_{n-1}^{(2)} &  & & V_{n-1}^{(2)}  \\
        V_{0}^{(2)}  & -2 V_{0}^{(2)} & V_{0}^{(2)} &  &   \\
          & & \ddots& \ddots& \\
           & & & \ddots &\\
        V_{n-2}^{(2)}  &   & & V_{n-2}^{(2)} & -2V_{n-2}^{(2)}  \\
    \end{array}\right), 
    \end{equation}
    \begin{equation}
        D_{DL }= n \left(\begin{array}{ccccc}
        V_{n-2}^{(3)} &  & & & -V_{n-2}^{(3)} \\
         -V_{n-1}^{(3)}& V_{n-1}^{(3)} &  & & \\
          & \ddots& \ddots& & \\
           & & -V_{n-4}^{(3)}& V_{n-4}^{(3)} & \\
         & & & -V_{n-3}^{(3)} & V_{n-3}^{(3)} \\
    \end{array}\right), 
    \end{equation}
     \begin{equation}
        D_{DR} = n \left(\begin{array}{ccccc}
        -V_{n-1}^{(3)} & V_{n-1}^{(3)} & & &  \\
         & -V_0^{(3)} & V_0^{(3)} & & \\
          & & \ddots& \ddots& \\
           & & & -V_{n-3}^{(3)} & V_{n-3}^{(3)}\\
        V_{n-2}^{(3)} & & &  & -V_{n-2}^{(3)} \\
    \end{array}\right),
    \end{equation}
    and $W = n^2\text{diag}(V_k^{(2)}+V_{k-2}^{(2)}-2V_{k-1}^{(2)})_{k=0}^{n-1}$. 
    Notice that $H_L,H_R,H_C$ are modifications from $H_1$, with the center of the central formula to be on the diagonal or subdiagonal and multiplying each row by different bounded parameters. 
    $D_{DL}$ and $D_{DR}$ are very similar to $D_L$ and $D_R$, with higher order potential. 
    Furthermore, we have 
    \begin{equation}
    \begin{split}
        \|H_L \vec{v}\|_{\star}^2 &= n^3 \sum_{k=0}^{n-1}|V_{k-2}^{(2)}|^2|v_k-2v_{k-1}+v_{k-2}|^2 \\
        &\leq n^3 \sup|V_{k}^{(2)}|^2 \sum_{k=0}^{n-1}|v_k-2v_{k-1}+v_{k-2}|^2 \\
        & = \sup|V_{k}^{(2)}|^2 \|H_1 \vec{v}\|_{\star}^2, 
    \end{split}
    \end{equation}
    and similarly 
    $$\|H_R \vec{v}\|_{\star} \leq \widetilde{C} \|H_1 \vec{v}\|_{\star}, \quad \|H_C \vec{v}\|_{\star} \leq \widetilde{C} \|H_1 \vec{v}\|_{\star}. $$ 
    For $D_{DL}$ and $D_{DR}$, they can be bounded by the same way we bound $D_L$ and $D_R$ before, and thus 
    $$\|D_{DL} \vec{v}\|_{\star} \leq \widetilde{C}\|D_1 \vec{v}\|_{\star} \leq \widetilde{C}(\|H_1 \vec{v}\|_{\star} + \|\vec{v}\|_{\star})$$ 
    and 
    $$\|D_{DR}\vec{v}\|_{\star}\leq \widetilde{C}(\|H_1\vec{v}\|_{\star} + \|\vec{v}\|_{\star}).$$ 
    Finally since $V$ has bounded fourth order derivative, the term $n^2(V_k^{(2)}+V_{k-2}^{(2)}-2V_{k-1}^{(2)})$ is bounded. 
    Therefore 
    $$\|W \vec{v}\|_{\star} \leq \widetilde{C} \|\vec{v}\|_{\star}.$$
    Combining all the estimates together, we have 
    \begin{equation}
    \begin{split}
        \|[H_1,[H_1,H_2]]\vec{v}\|_{\star} &\leq \|H_L\vec{v}\|_{\star} + \|H_R\vec{v}\|_{\star} + 2\|H_C\vec{v}\|_{\star} + 2\|D_{DL}\vec{v}\|_{\star} + 2\|D_{DR}\vec{v}\|_{\star} + \|W \vec{v}\|_{\star} \\
        & \leq \widetilde{C}(\|H_1 \vec{v}\|_{\star}+\|\vec{v}\|_{\star}). 
    \end{split}
    \end{equation}
\end{proof}

\section{Proof of \cref{lem:harmonic_space_error}}\label{append:proof_harmonic_space_error}

\begin{proof}
    1. From Taylor's theorem, 
    \begin{equation}
        \begin{split}
            & \quad n^2(\phi(t,x+1/n)-2\phi(t,x)+\phi(t,x-1/n)) - \Delta \phi(t,x) \\
            & = \frac{n^2}{6} \left[\int_{x-1/n}^x (y-(x-1/n))^3\frac{\partial^4}{\partial x^4}\phi(t,y) dy + \int_{x}^{x+1/n} (x+1/n-y)^3\frac{\partial^4}{\partial x^4}\phi(t,y) dy\right]. 
        \end{split}
    \end{equation}
    Therefore
    \begin{equation}
        \begin{split}
            & \quad |n^2(\phi(t,x+1/n)-2\phi(t,x)+\phi(t,x-1/n)) - \Delta \phi(t,x)| \\
            & \leq \frac{1}{6n} \left[\int_{x-1/n}^x \left|\frac{\partial^4}{\partial x^4}\phi(t,y)\right| dy + \int_{x}^{x+1/n} \left|\frac{\partial^4}{\partial x^4}\phi(t,y)\right| dy\right]\\ 
            & \leq \frac{1}{3n^2} \sup_{y\in[0,1]} \left|\frac{\partial^4}{\partial x^4}\phi(t,y)\right|. 
        \end{split}
    \end{equation}
    
    2. Since $\phi(t,x)$ satisfies the equation
    \begin{equation}
    \begin{split}
        \I \partial_t \phi(t,x) &= H(t) \phi(t,x) \\
        &= -f_1(t) n^2(\phi(t,x+1/n)-2\phi(t,x)+\phi(t,x-1/n)) + f_2(t)V(x)\phi(t,x) + f_1(t) r(t,x)
    \end{split}
    \end{equation}
    where $r(t,x) = n^2(\phi(t,x+1/n)-2\phi(t,x)+\phi(t,x-1/n))-\Delta\phi(t,x)$, the vector $\vec{\phi}(t) = (\phi(t,k/n))_{k=0}^{n-1}$ satisfies the ordinary differential equation
    \begin{equation}
        \I \partial_t \vec{\phi}(t) = (f_1(t)H_1 + f_2(t)H_2)\vec{\phi}(t) + f_1(t)\vec{R}(t)
    \end{equation}
    where $\vec{R}(t) = (r(t,k/n))_{k=0}^{n-1}$. 
    Same as our previous notations, let $U(t,s)$ denote the evolution operator from time $s$ to $t$ of the dynamics \cref{eqn:td_hamsim} with Hamiltonian \cref{eqn:schrodinger_tdmass}. 
    By the variation of parameters formula (\cref{lem:VoP}), 
    \begin{equation}
        \vec{\phi}(t) = \vec{\psi}(t) + \int_0^t U(t,s) f_1(s) \vec{R}(s) ds, 
    \end{equation}
    and thus 
    \begin{equation}\label{eqn:schrodinger_tdmass_space_error_est}
        \|\vec{\phi}(t)-\vec{\psi}(t)\|_{\star} \leq  t \|f_1\|_{\infty}\sup_{s\in[0,t]}\|\vec{R}(s)\|_{\star}. 
    \end{equation}
    It remains to bound $\|\vec{R}(s)\|_{\star}$. 
    
    By the definition of $\vec{R}$ and the first part of this lemma, for any $s$, 
    \begin{equation}
        \begin{split}
            \|\vec{R}(s)\|_{\star}^2 &= \frac{1}{n} \sum_{k=0}^{n-1} |r(s,k/n)|^2 \\
            & \leq \frac{1}{n}\sum_{k=0}^{n-1} \left(\frac{1}{3n^2} \sup_{y\in[0,1]} \left|\frac{\partial^4}{\partial x^4}\phi(s,y)\right|\right)^2 \\
            & \leq \frac{1}{9n^4} \left( \sup_{y\in[0,1]} \left|\frac{\partial^4}{\partial x^4}\phi(s,y)\right|\right)^2. 
        \end{split}
    \end{equation}
    Plug this estimate back to \cref{eqn:schrodinger_tdmass_space_error_est}, we complete the proof. 
    
\end{proof}

\end{document}